\newcommand{\TA}{T_{\mathcal{A}}}
\newcommand{\BA}{B_{\mathcal{A}}}
\newcommand{\eps}{\varepsilon}
\newcommand{\keps}{\lceil 1/\eps \rceil}
\newcommand{\congest}{\ensuremath{\mathsf{CONGEST}}\xspace}
\newcommand{\bcongest}{\ensuremath{\mathsf{BCONGEST}}\xspace}
\newcommand{\dil}{\ensuremath{\mathsf{dilation}}\xspace}
\newcommand{\congestion}{\ensuremath{\mathsf{congestion}}\xspace}
\def\polylog{\operatorname{polylog}}
\newtheorem{theorem}{Theorem}[section]
\newtheorem{lemma}[theorem]{Lemma}
\newtheorem{definition}[theorem]{Definition}
\newtheorem{corollary}[theorem]{Corollary}
\algnewcommand{\IIf}[1]{\State\algorithmicif\ #1\ \algorithmicthen}
\algnewcommand{\EndIIf}{\unskip\ \algorithmicend\ \algorithmicif}
\newcommand{\onlyShort}[1]{\ifthenelse{\boolean{short}}{#1}{}}
\newcommand{\onlyLong}[1]{\ifthenelse{\boolean{short}}{}{#1}}
\newcommand{\shortLong}[2]{\ifthenelse{\boolean{short}}{#2}{#1}}
\newcommand{\longShort}[2]{\ifthenelse{\boolean{short}}{#2}{#1}}
\begin{document}

\title{Message Optimality and Message-Time Trade-offs for APSP and Beyond}

\author{Fabien Dufoulon}
\orcid{0000-0003-2977-4109}
\affiliation{%
\institution{Lancaster University}
\city{Lancaster}
\country{UK}}
\email{f.dufoulon@lancaster.ac.uk}

\author{Shreyas Pai}
\orcid{0000-0003-2409-7807}
\affiliation{%
\institution{IIT Madras}
\city{Chennai}
\country{India}}
\email{shreyas@cse.iitm.ac.in}

\author{Gopal Pandurangan}
\orcid{0000-0001-5833-6592}
\affiliation{%
\institution{University of Houston}
\city{Houston}
\country{USA}}
\email{gopal@cs.uh.edu}

\author{Sriram Pemmaraju}
\orcid{0000-0002-0834-3476}
\affiliation{%
\institution{University of Iowa}
\city{Iowa City}
\country{USA}}
\email{sriram-pemmaraju@uiowa.edu}

\author{Peter Robinson}
\orcid{0000-0002-7442-7002}
\affiliation{%
\institution{Augusta University}
\city{Augusta}
\country{USA}}
\email{perobinson@augusta.edu}

\date{}

\thispagestyle{empty}

\begin{abstract}
  Round complexity is an extensively studied metric of distributed algorithms. In contrast, our knowledge of the \emph{message complexity} of distributed computing problems and its relationship (if any) with round complexity is still quite limited.
  To illustrate, for many fundamental distributed graph optimization problems such as (exact) diameter computation, All-Pairs Shortest Paths (APSP), Maximum Matching etc., while (near) round-optimal algorithms are known,  message-optimal algorithms are  hitherto unknown. More importantly, the existing round-optimal algorithms are not message-optimal. This raises two important questions: (1) Can we design message-optimal algorithms for these problems?  (2) Can we give message-time tradeoffs for these problems in case the message-optimal algorithms are not round-optimal?
  
  In this work, we focus on a fundamental graph optimization problem, \emph{All Pairs Shortest Path (APSP)}, whose message complexity is still unresolved. We present two main results in the \congest{} model:
  \begin{itemize}
  \item We give a message-optimal (up to logarithmic factors) algorithm that solves weighted APSP, using $\tilde{O}(n^2)$ messages. This algorithm takes $\tilde{O}(n^2)$ rounds.
  \item For any $0 \le \eps \le 1$, we show how to solve unweighted APSP in $\tilde{O}(n^{2-\eps})$ rounds and $\tilde{O}(n^{2+\eps})$ messages. At one end of this smooth trade-off, we obtain a (nearly) message-optimal algorithm using $\tilde{O}(n^2)$ messages (for $\eps = 0$), whereas at the other end we get a (nearly) round-optimal algorithm using $\tilde{O}(n)$ rounds (for $\eps = 1$). This is the first such message-time trade-off result known.
  \end{itemize}

Our first result is based on a simple, uniform simulation algorithm, that utilizes the Miller-Peng-Xu (MPX) graph decomposition [SPAA 2013], for designing message-efficient distributed algorithms. Our approach simulates any distributed algorithm with \emph{broadcast complexity} $B$ in (essentially) $\tilde{O}(B)$ message complexity. Note that the broadcast complexity of an algorithm is the number of broadcasts by all nodes over the entire execution, which can be significantly smaller than the algorithm's message complexity. 
We then use our simulation technique to derive the \emph{first-known message-optimal} distributed algorithms not only for \emph{weighted APSP}, but also for \emph{Maximum Matching in bipartite graphs}, and \emph{neighborhood cover construction}. 

Our second result, which is our main technical contribution, is based on a new technique that allows us to simulate many Breadth First Search (BFS) algorithms in parallel in a way that allows us to control the maximum congestion of an edge and the dilation (i.e., maximum running time) of the algorithms.  Our technique uses a modification  the well-known hierarchical cluster decomposition of Baswana-Sen [RSA 2007] in a novel way. The main novelty is running the simulation on an ensemble of \emph{independently} constructed cluster hierarchies. We believe that our technique can be useful in obtaining  message-time tradeoffs for other problems.

\end{abstract}

\maketitle

\section{Introduction}
\label{sec:intro}

\textit{Message} and \textit{time} complexities are two fundamental performance measures of distributed algorithms. Both complexity measures crucially influence the performance of a distributed algorithm. Time complexity measures the number of distributed ``rounds'' taken by the algorithm, and it directly determines the running time of the algorithm. Traditionally, keeping the time (round) complexity as small as possible has been an important goal. Message complexity, on the other hand, measures the {\it total number of messages} sent and received by all the processors during the algorithm.
In many applications, message complexity is the dominant cost, playing a major role in determining the running time as well as additional resources (e.g., energy, bandwidth, memory, etc.) expended by the algorithm. Hence, designing distributed algorithms with  low message complexity, sometimes even at the cost of an increased round complexity could be desirable.

Since both measures are essential, ideally, one would like to design distributed algorithms with \textit{simultaneously} optimal message and time complexities --- so-called \textit{singularly optimal} algorithms. However, if this is too difficult or even impossible,   then one would like to design distributed algorithms that are at least \textit{separately} message or time optimal. One would like to go even further and design  \textit{a family of algorithms} that smoothly trades off between the two measures. However till now, while  singularly-optimal algorithms are known for fundamental problems such as leader election \cite{kuttenjacm15} and minimum spanning trees \cite{PanduranganRS2017,elkin17_simpl_deter_distr_mst_algor}, there
are other fundamental problems such as diameter computation or (even, unweighted) All-Pairs Shortest Paths (APSP) where not much is known with respect to message complexity vis-a-vis  the round complexity.

This paper focuses on the message complexity of APSP, though some of our results apply to other problems such as Maximum Matching (MaxM) and computing Neighborhood Covers. 
APSP has been studied extensively for many decades in both sequential and distributed computing. Significant progress has been made towards designing distributed algorithms with near-optimal round complexity for APSP in the classical \congest{} model of distributed computing (see Section \ref{sec:model} for model definitions).
For example, after a series of improvements \cite{HuangNanongkaiSarunurakFOCS2017,ElkinSTOC2017}, Bernstein and Nanongkai\cite{bernstein2021distributed} presented an $\tilde{O}(n)$-round\footnote{$\tilde{\Omega}$ and $\tilde{O}$ hide a $1/\polylog{n}$ and $\polylog{n}$ factor respectively.} randomized (Las Vegas) algorithm for weighted APSP.
This is optimal within logarithmic factors because $\Omega(n)$ is a round complexity lower bound even for unweighted APSP \cite{Censor-HillelKP17}.
However, the optimum attainable message complexity is  unresolved for APSP. In particular, the Bernstein-Nanongkai \cite{bernstein2021distributed} round-optimal distributed algorithm mentioned above, has $\tilde{\Theta}(mn)$ message complexity (throughout, $m$ denotes the number of edges in the graph) which can be as large as  $\tilde{\Theta}(n^3)$ 
for dense graphs. Our work is motivated by two key questions: 
\begin{enumerate}
    \item Is $\tilde{\Theta}(n^3)$ the optimal (or near-optimal) message complexity for APSP or can we design distributed algorithms with significantly better (say, $\tilde{O}(n^2)$) message complexity? We note that using techniques from \cite{DufoulonITCS2024} applied to the lower bound graph in \cite{AbboudCHK16}, one can show an $\tilde{\Omega}(n^2)$ message lower bound for $\mathrm{poly}(n)$ round algorithms even for sparse graphs. 
    \item Is it possible to trade off rounds for messages for APSP? Specifically, can we design APSP algorithms that are more message-frugal (relative to the $\tilde{\Theta}(n^3)$ message bound of \cite{bernstein2021distributed}), but possibly use more rounds?
    \end{enumerate}

The thrust of the paper is to prove the two following theorems that answer the above questions.
Our first result shows that indeed, it is possible to solve weighted APSP in the \congest{} model using $\tilde O(n^2)$ messages, showing that message optimality (within logarithmic factors) is achievable for weighted APSP.
This result is a specific instance of a more general simulation result we prove (see Theorem \ref{thm:messageEfficientSimulation}) that shows that broadcast-based \congest{} algorithms that perform a total of $B$ broadcast operations (across all nodes) can be simulated with a message complexity of $\tilde{O}(B)$. 
Our result is significant because there are many examples of distributed algorithms that are broadcast-based (e.g., Breadth First Search, Luby's algorithm for Maximal Independent Sets \cite{LubySICOMP1986}) and the message complexity of these algorithms is typically quite a bit larger than number of broadcasts it performs. For example, the natural Breadth First Search (BFS) algorithm performs at most $n$ broadcasts, but has $\Theta(n^2)$ message complexity in the worst case (for dense graphs).
We also apply our simulation result to other problems, such as \textsc{MaxM} and Neighborhood Covers.
\begin{restatable}{theorem}{MessageOptimalAPSP}
\label{thm:apsp}
There is a $\congest$ algorithm that, with high probability, computes exact weighted APSP in $\tilde O(n^2)$ rounds and with $\tilde O(n^2)$ message complexity, even on directed graphs and even if the edge weights are negative.
\end{restatable}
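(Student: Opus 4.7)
The plan is to derive Theorem~\ref{thm:apsp} as a direct corollary of the broadcast-simulation framework of this paper (Theorem~\ref{thm:messageEfficientSimulation}), which states that any broadcast-based \congest{} algorithm with total broadcast complexity $B$ can be simulated using $\tilde{O}(B)$ messages. It then suffices to exhibit a weighted APSP algorithm---working even on directed graphs with negative weights---whose broadcast complexity is $B = \tilde{O}(n^2)$.

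Such an algorithm arises naturally from any $\tilde{O}(n)$-round \congest{} APSP algorithm that is inherently broadcast-based: if in each of the $\tilde{O}(n)$ rounds every node transmits at most one $O(\log n)$-bit message identically to all its neighbors, then summing over rounds and nodes yields $B = \tilde{O}(n^2)$ broadcasts. For the undirected, non-negative case, the Bernstein-Nanongkai algorithm already has this structure essentially out of the box. To handle directed graphs with negative weights, I would first apply Johnson's reweighting: perform one broadcast-based Bellman-Ford from a virtual super-source to compute node potentials (itself broadcast-complexity $\tilde{O}(n^2)$, since each node broadcasts its current distance estimate at most $\tilde{O}(n)$ times over the $O(n)$ iterations), then locally reweight each edge to obtain an equivalent non-negative instance, and finally run the non-negative APSP subroutine above on the reweighted graph.

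Invoking Theorem~\ref{thm:messageEfficientSimulation} on this broadcast-based weighted-APSP algorithm immediately yields the claimed message bound of $\tilde{O}(n^2)$. The round complexity of $\tilde{O}(n^2)$ reflects the overhead of the simulation, which effectively sequentializes broadcasts across the edges of each cluster in the MPX decomposition in order to cap per-edge traffic; in particular, the simulation trades a factor of $\tilde{O}(n)$ extra rounds for an $\tilde{O}(n)$ factor saving in messages relative to the naive per-broadcast cost.

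The main obstacle is verifying that the chosen weighted-APSP algorithm is genuinely broadcast-based with broadcast complexity $\tilde{O}(n^2)$. Standard \congest{} algorithms sometimes send different messages on different edges in a single round, so light modifications may be required to recast them as broadcasts (e.g., by packaging all per-edge messages of a node into a single broadcast and filtering at the receivers, which is safe because the simulation of Theorem~\ref{thm:messageEfficientSimulation} charges only for broadcasts, not for how they are interpreted). The directed, negative-weight case is the most delicate point: one must either show that a modern $\tilde{O}(n)$-round negative-weight SSSP/APSP routine is already broadcast-based, or implement Johnson's reduction carefully within the broadcast model so that the reweighting step does not blow up the broadcast count.
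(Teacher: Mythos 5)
Your core approach is the same as the paper's: apply the broadcast-simulation framework (Theorem~\ref{thm:messageEfficientSimulation} / Corollary~\ref{cor:messageEfficientSimulation}) to an $\tilde O(n)$-round broadcast-based weighted APSP algorithm, and read off $\tilde O(n^2)$ rounds and messages from $In = O(m\log n) = \tilde O(n^2)$, $Out = O(n^2)$, and $T_{\mathcal A}\cdot n = \tilde O(n^2)$. However, two of your auxiliary steps are either unnecessary or wrong.

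First, the Johnson-reweighting detour is not needed. The Bernstein--Nanongkai algorithm cited in the paper already computes exact weighted APSP on directed graphs with negative edge weights in $\tilde O(n)$ rounds, and the paper takes it (as a black box) to already be a \bcongest{} algorithm. There is no ``undirected, non-negative only'' restriction to patch around, so the entire potential-computation step, and the worry about its broadcast cost, is moot.

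Second, and more substantively, your proposed fix for the case where the source algorithm is not broadcast-based --- ``packaging all per-edge messages of a node into a single broadcast and filtering at the receivers'' --- does not work. In the \bcongest{} model the broadcast is still a single $O(\log n)$-bit message; a node of degree $\Theta(n)$ sending distinct $O(\log n)$-bit messages to each neighbor has $\Theta(n\log n)$ bits of outgoing information in that round, which cannot be compressed into one broadcast. The simulation theorem charges per broadcast precisely because each broadcast is small; it cannot absorb arbitrary-width fan-out for free. The correct statement is simply that one must start from an algorithm that is genuinely in \bcongest{}, and the paper asserts this for Bernstein--Nanongkai rather than converting a general \congest{} algorithm. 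If you want to justify that assertion, the argument has to go through the structure of Bernstein--Nanongkai's communication pattern (which is built from pipelined BFS-style explorations, so each node does send the same small message to all neighbors in a round), not through a generic packaging reduction.
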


Our second result applies to unweighted APSP and shows that for this problem, there is a natural, smooth trade-off, parameterized by $\eps \in [0, 1]$, between messages and rounds. 
At one end of this smooth trade-off, we obtain a (nearly) message-optimal algorithm using $\tilde{O}(n^2)$ messages (for $\eps = 0$).
For this point in the trade-off, the round complexity is $\tilde{O}(n^2)$ and this point in the trade-off can be viewed as a special case of Theorem \ref{thm:apsp}.
At the other end of the trade-off, we get a (nearly) round-optimal algorithm using $\tilde{O}(n)$ rounds (for $\eps = 1$). As far as we know, this is the first such message-time trade-off result in the \congest{} model, for any graph problem.

\begin{restatable}{theorem}{APSPTradeoff}
\label{thm:APSPTradeoff}
    For any $\eps \in [0,1]$, unweighted APSP can be solved (w.h.p.) in $\tilde{O}(n^{2-\eps})$ rounds and  $\tilde{O}(n^{2+\eps})$ messages (w.h.p.) in \congest{}.
\end{restatable}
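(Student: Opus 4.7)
The plan is to simulate $n$ Breadth-First Searches in parallel (one rooted at every vertex of $G$) using a message-efficient routing scheme, so that the aggregate communication fits within $\tilde{O}(n^{2+\eps})$ messages while the overall schedule finishes in $\tilde{O}(n^{2-\eps})$ rounds. The two endpoints of the trade-off serve as sanity checks: at $\eps=0$ the bound follows from the broadcast-simulation machinery behind Theorem~\ref{thm:apsp}, and at $\eps=1$ it matches the standard $\tilde{O}(n)$-round, $\tilde{O}(mn)$-message pipelined BFS, so the interesting regime is $\eps\in(0,1)$.

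First, set $k=\lceil 1/\eps\rceil$ and construct $\Theta(\log n)$ \emph{independent} hierarchical cluster decompositions in the spirit of Baswana-Sen, each of depth $k$. At level $i$ the cluster centers are sampled independently at rate $n^{-i/k}=n^{-i\eps}$. Classical analysis of this construction guarantees that (i) each vertex has an $O(k)$-hop path to a representative at every level at which it still belongs to a cluster, and (ii) the total number of ``spine'' edges used by a single hierarchy is $\tilde{O}(n^{1+1/k})=\tilde{O}(n^{1+\eps})$. Distributed construction of all $\Theta(\log n)$ hierarchies is cheap relative to our budgets and can be arranged to use only $\tilde{O}(n^{2+\eps})$ messages.

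Next, assign each of the $n$ BFS sources to one of the $\Theta(\log n)$ hierarchies (or replicate each source across all of them). When source $s$ is associated with hierarchy $H$, its BFS tokens are relayed along the cluster spine of $H$: inside a low-level cluster tokens are flooded, whereas longer-range propagation goes through cluster centers and inter-cluster spine edges. By the sparsity of a single hierarchy, each BFS expends $\tilde{O}(n^{1+\eps})$ messages, so across all $n$ sources the total is $\tilde{O}(n^{2+\eps})$. To control the round complexity I would apply the Leighton-Maggs-Rao routing-schedule framework: if the per-edge congestion is $c$ and the per-source dilation is $d$, the whole ensemble can be executed in $\tilde{O}(c+d)$ rounds. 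The crucial point of using \emph{independent} hierarchies is that the load on any fixed edge becomes a sum of independent indicators, which concentrates (by Chernoff) to $c=\tilde{O}(n^{2-\eps})$ w.h.p.; combined with $d=\tilde{O}(n)$ this yields the claimed $\tilde{O}(n^{2-\eps})$-round bound.

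The main obstacle is \emph{exactness}: a single Baswana-Sen hierarchy is essentially a $(2k-1)$-spanner, so routing one BFS through it only delivers approximate distances. The core of the argument must therefore show that, by taking the minimum of the distance estimates produced across the $\Theta(\log n)$ independent hierarchies, every pair $(u,v)$ recovers its exact unweighted distance w.h.p. Concretely, I would argue that for any fixed shortest path $P$ between $u$ and $v$, the random cluster sampling preserves $P$ with at least constant probability within any given hierarchy, so that across $\Theta(\log n)$ independent hierarchies at least one preserves $P$ exactly with probability $1-1/\mathrm{poly}(n)$; a union bound over $\binom{n}{2}$ pairs then certifies global correctness. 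Reconciling this exactness guarantee with the simultaneous congestion and dilation budgets is the technical heart of the proof, and it is here that the ``ensemble of independent hierarchies'' idea highlighted in the introduction does the main work.
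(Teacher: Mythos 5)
Your proposal correctly identifies some of the key structural ingredients (the $\kappa=\lceil 1/\eps\rceil$-level Baswana--Sen hierarchy, an ensemble of independently constructed hierarchies, and the LMR/Ghaffari congestion-plus-dilation scheduler), but it diverges from the paper's argument in a way that leaves a genuine and fatal gap, precisely at the point you yourself flag as the ``technical heart'': exactness.

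The central misconception is that you propose to \emph{route the BFS through the cluster hierarchy}, i.e., to forward BFS tokens along cluster spine edges and inter-cluster spine edges as if the hierarchy were the communication network. Since a single hierarchy is essentially a $(2\kappa-1)$-spanner, this gives only stretch-$(2\kappa-1)$ estimates, and the repair you suggest --- that a fixed shortest path $P$ is ``preserved'' by one hierarchy with constant probability, hence preserved by some hierarchy out of $\Theta(\log n)$ independent ones w.h.p.\ --- does not hold. For a path $P$ of length $\Omega(n^{1-\eps})$, the event that every edge of $P$ survives into the spanner is not constant-probability; it degrades multiplicatively with $|P|$, since each edge independently risks being replaced by a detour through a cluster center. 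There is no concentration argument that rescues this, and indeed the paper never attempts it. What the paper actually does is keep the BFS exploring the \emph{original} graph $G$: the hierarchy is used only to choose a sparse set of inter-cluster communication edges $F$ along which a broadcasting node contacts one representative per neighboring cluster, after which cluster centers aggregate and relay. Exactness is preserved for free because no shortest-path edge is ever ``dropped'' --- its effect is simply delivered indirectly through an aggregation. This is also why the paper needs the notion of \emph{aggregation-based} algorithms and the fact (Theorem~\ref{thm:congestPlusDilationBFS}(ii)) that after random delays only $O(\log n)$ BFSs are active per node per round: otherwise the cluster center cannot compress what it relays into $\tilde O(1)$ bits per destination.

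Two further quantitative mismatches: the paper uses $\zeta=\lceil n^{\eps}\rceil$ independent hierarchies, not $\Theta(\log n)$, and this factor is load-bearing. The Congestion Smoothing Lemma (Lemma~\ref{lem:congestionSmoothing}) relies on the fact that a fixed edge is a cluster edge in any one hierarchy with probability only $O(\kappa n^{-\eps})$, so across $n^{\eps}$ hierarchies it lands in $O(\log n)$ of them w.h.p.\ and hence serves only an $O(\log n / n^{\eps})$ fraction of the BFS sources. With only $\Theta(\log n)$ hierarchies, a bad edge can be a cluster edge in a constant fraction of them and carry $\tilde\Theta(n)$-fold more load than your claimed congestion bound allows. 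Finally, for $\eps<1/2$ the paper must cap each BFS at depth $\tilde O(n^{1-\eps})$ and handle longer distances by a separate \emph{landmark} sampling step (sample $\tilde\Theta(n^{\eps})$ sources, build their full BFS trees, and broadcast those trees). Your proposal has no analogue of this truncation-plus-landmark device, and without it the dilation/round budget $\tilde O(n^{2-\eps})$ cannot be met in that regime.
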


\subsection{Model}
\label{sec:model}

\subsubsection{\congest{} model}
\label{sec:congest}
The \congest{} model \cite{peleg00} is a standard message-passing model in distributed computing where the input graph $G$ defines the distributed network with the nodes denoting machines and the edges denoting communication links between two machines. Each node has a unique \texttt{ID} drawn from a space whose size is polynomial in $n$. 
We assume  the \textit{synchronous} model, where both computation and communication proceed in lockstep, i.e., in discrete time steps called \textit{rounds}.
 In the $\congest$ model, we allow
only small message sizes (typically logarithmic in $n$, the number of nodes) to be sent per edge per round. 
Since each \texttt{ID} can be represented with $O(\log n)$ bits, each message in the $\congest$ model is large enough to contain a constant number of \texttt{ID}s. 
The $\congest$ model captures bandwidth limitations inherent in real-world networks.
In each round of the synchronous $\congest$ model, each node (i) receives all messages sent to it in the previous round, (ii) performs arbitrary local computation based on information it has\footnote{As is standard, the cost of local computation is ignored and, in each round, each node can perform an arbitrary, even exponential time, computation, using the information it possesses. 
This assumption is justified by the fact that communication cost dominates local computation cost substantially in many settings.}, and (iii) sends a \(O(\log n)\)-bit message to each of its neighbors in the graph. Note that a node may send \textit{different messages}  to \textit{different neighbors} in a round.  
For an algorithm $\mathcal{A}$ in the synchronous $\congest$ model, its \textit{round complexity} is the number of rounds it takes to finish and produce output and its \textit{message complexity} is the total number of messages sent by all nodes over the course of the algorithm. For brevity we drop ``synchronous'' and just call this the $\congest$ model.

\subsubsection{\bcongest{} model}
\label{sec:bcongest}
We define the \bcongest{} model, which is a variant of the \congest{} model. It only differs in that in each round, a node must send the {\it same} message to all of its neighbors. Just as in the \congest{} model, we define the \textit{message complexity} of a \bcongest{} algorithm as the number of messages sent by all nodes over the entire execution. However, we also introduce a related complexity measure, which we call \textit{broadcast complexity}: it is the number of broadcasts by all nodes over the entire execution. In this paper, we show that low broadcast complexity plays an important role in designing message efficient \congest{} algorithms. In particular, the broadcast complexity bounds the cost of simulating a \bcongest{} algorithm message efficiently in the \congest{} model. 

\subsection{Our Technical Contributions}
\label{sec:contributions}

Our contributions are two-fold. We present the first message-optimal algorithm for weighted APSP (Theorem \ref{thm:apsp}) and we present the first time-message trade-off for unweighted APSP (Theorem \ref{thm:APSPTradeoff}). We now describe our approach, at a high level, for obtaining these results.

\paragraph{Message-Optimal Algorithms.} 
In a nutshell, our approach for message-optimal algorithms uses the following general idea. First, suppose that we are able to partition the graph into clusters such that (i) each cluster has low diameter and (ii) each vertex has neighbors in only a small number of clusters. Note that requirement (ii) permits vertices to have high degree.
Now consider an algorithm $\mathcal{A}$ in the \bcongest{} model with low broadcast complexity. Our key observation is this: \textit{whenever a vertex $v$ performs a broadcast, instead of sending a message to all neighbors of $v$, we can send a message to one representative neighbor for each of the  neighboring clusters}. The fact that every vertex has only a small number of neighboring clusters implies that the number of messages sent is small. The fact that the diameter of each cluster is small implies that it is possible for 
the center of each cluster to efficiently (in terms of messages) serve as a proxy for all nodes in the cluster and thus it is enough for $v$'s message to reach a single node in each neighboring cluster.
A key ingredient of our approach is the notion of a \textit{Low Diameter and Communication (LDC) graph decomposition} (cf. Section \ref{subsec:LDC}).
A LDC decomposition structure allows us to pre-process the graph 
to obtain the clustering with properties described above.
This decomposition is a simple variant of the low-diameter graph decomposition algorithm of Miller, Peng, and Xu \cite{MPX13}.
Using this approach, we show how to obtain an equivalent \congest{} algorithm whose message complexity is within polylogarithmic factors of the simulated algorithm's broadcast complexity at the cost of a higher time complexity (roughly, by a linear in $n$ factor) --- cf. Theorem \ref{thm:messageEfficientSimulation}.

\paragraph{Message-Time Tradeoffs.} While the above approach yields a message-optimal algorithm for weighted APSP, this algorithm has a high round complexity --- $\tilde{O}(n^2)$ --- for unweighted APSP.
This is in fact, $O(n)$ factor higher than the round optimal algorithm known for unweighted APSP as discussed above. On the other hand, as mentioned earlier, known round-optimal algorithms are not message optimal; they have a $\Theta(n^3)$ message complexity. The LDC decomposition framework, discussed above, that led to message optimality does not help obtain a trade-off between these two extremes. We need a new framework and our main technical contribution in this paper is the use of several new and disparate ideas that lead to such a framework. Below, we overview these ideas and how they lead to our family of algorithms. 

We start with a natural idea: solve unweighted APSP by simulating a collection of $n$ Breadth First Search (BFS) algorithms $\mathcal{A}_1, \mathcal{A}_2, \ldots, \mathcal{A}_n$, initiated at each of the $n$ different nodes. 
\begin{itemize}
\item \textbf{Baswana-Sen cluster hierarchy.} In order to control the message complexity of the collection of BFS algorithms, we simulate them over a \textit{Baswana-Sen cluster hierarchy}.
This hierarchy of clusters was defined by Baswana and Sen \cite{BaswanaSenRSA2007} as part of their classical randomized spanner algorithm. 
Baswana and Sen present an algorithm that, for any input graph $G = (V, E)$ and any integer $\kappa \ge 1$, constructs $(2\kappa-1)$-spanner $H = (V, E_H)$ of $G$ with $O(n^{1+1/\kappa})$ edges.
While the final output of this algorithm is a spanning subgraph $H$ of $G$, a byproduct of this algorithm is a $(\kappa+1)$-level hierarchy of clusters (defined precisely in Section \ref{subsec:BaswanaSen}). It is this Baswana-Sen cluster hierarchy that we utilize for our simulation. 
Given an $\eps \in [0, 1]$, setting $\kappa = \lceil 1/\eps\ \rceil$ and computing a $(\kappa+1)$-level hierarchy of clusters, gives us, roughly speaking, the property that each node has neighbors in $\tilde{O}(n^\eps)$ clusters. 
We can then use the same idea as in our previous simulation, i.e., whenever a vertex $v$ performs a broadcast, instead of sending a message to all neighbors of $v$, we can send a message to one representative neighbor for each of the  neighboring clusters.

\item \textbf{``Congestion + Dilation'' framework.} 
A classical result of Leighton, Maggs, and Rao \cite{lmr} showed that if we are given
$\ell$ packets, where packet $p_j$ needs to be routed from a given source $s_j$ to a given target $t_j$ along a given path $\mathcal{P}_j$, then all of these packets can be scheduled, so as to take just
$\tilde{O}(\congestion + \dil)$ rounds, where \congestion is the maximum number of packets that are routed through an edge and \dil is the length of the longest path.
An extremely elegant idea of using \textit{random delays} to start the different packets, leads to their result.
Ghaffari \cite{ghaffarilmr} extended this result to a collection of arbitrary \congest{} algorithms (see Theorem \ref{thm:congestionPlusDilation}).
In order to bound the round complexity of our collection of BFSs using this ``Congestion + Dilation'' result, we need an efficient simulation over
a Baswana-Sen cluster hierarchy with good bounds on both the maximum congestion of an edge and the dilation (i.e., maximum running time) of the BFS algorithms.
In order to obtain an efficient simulation, we need to schedule our collection of BFS algorithms to have an additional property, namely, in any round, each node receives messages from at most $O(\log n)$ distinct BFS algorithms. We show (in Theorem \ref{thm:congestPlusDilationBFS}) that applying the random delay technique to a collection of BFS algorithms, not only gives the $\tilde{O}(\congestion + \dil)$ bound on the running time, but also this additional property.
This additional property allows us to view our collection of algorithms as an \textit{aggregation-based algorithm} and plays a crucial role in ensuring the simulation is both message and round efficient.
\item \textbf{Simulating aggregation-based algorithms.} A key reason for the efficiency of our simulation is the fact that we are simulating \textit{aggregation-based algorithms} (see Definition \ref{def:aggregationBasedAlgorithm}). To understand the definition of an aggregation-based algorithm, imagine that messages to a node $v$ at a particular round $r$, arrive in separate batches. In an aggregation-based algorithm, the next local state of $v$, can be computed by processing each batch separately, with all intermediate computation results being small in size. Algorithms that use functions such as $\min, \max, \text{sum}$, etc. fall in this category and for such functions it is clear that messages can be batched and intermediate results are small. Individual BFS algorithms also have the nice property of being aggregation-based.
However, a collection of BFS algorithms may not have this property because each batch of messages may contain information about many different BFSs and this information cannot be aggregated to have a small size. This is where the additional property of BFS scheduling mentioned in the previous item (``Congestion + Dilation'' framework) turns out to be useful. 

In the new simulation built upon the Baswana-Sen cluster hierarchy, a cluster can no longer maintain complete knowledge of all its cluster nodes' states efficiently, as a node may belong to multiple clusters across the levels of the cluster hierarchy. Therefore, instead, each cluster (center) routes messages on behalf of its (non-unique) cluster nodes. But this can lead to a significant increases in both runtime and message overheads, where the former comes about due to high congestion over certain edges of the cluster hierarchy. To remedy that, we restrict the simulation to aggregation-based algorithms, which now allows each cluster to compress all the messages it routes to a single destination into $\tilde{O}(1)$ bits, and logarithmically as many messages. This leads to better congestion bounds, and thus good runtime overhead, as well as better message complexity, through the following argument: informally, each node only needs to receive one message per cluster in its neighborhood, which is, roughly speaking, bounded to $O(n^{\eps})$ within a Baswana-Sen cluster hierarchy.  
\item \textbf{Smoothing congestion using an ensemble of Baswana-Sen cluster hierarchies.} Using just a single Baswana-Sen cluster hierarchy leads to excessive congestion on a few edges and very little congestion on many edges. 
A key idea to bypass this congestion bottleneck is to use not one, but a small number $\zeta = n^{\eps}$ of independently constructed Baswana-Sen cluster hierarchies. 
Then the set of $n$ BFS algorithms are partitioned into $\zeta$ equal-sized batches and each batch can be assigned a different Baswana-Sen cluster hierarchy. 
This simple and natural idea plays a crucial role in reducing the maximum congestion of an edge (see Lemma \ref{lem:congestionSmoothing}). A key reason why this smoothing property holds is that the Baswana-Sen cluster hierarchy has the helpful property that any edge in the input graph has a small probability of being chosen as a cluster edge (see Lemma \ref{lemma:clusterEdgesRare}). 
\item \textbf{Using ``landmark'' nodes.}
Even with the ideas described above, we do not completely achieve the trade-off we seek. Specifically, the dilation of our simulated BFS algorithms algorithms is too high.
One final idea we use is to terminate the $n$ BFSs at a depth of $O(n^{1-\eps})$. This allows the BFSs to discover short paths, i.e., shortest paths of length $O(n^{1-\eps})$. But this leaves us with all pairs of nodes that are further apart from each other. However, for these, we can take a more brute force approach based on the idea of sampling ``landmark'' nodes, that will generate BFS trees containing all the remaining shortest paths (i.e., between any two far away nodes). 
\end{itemize}

\subsection{Additional Related Work}
\label{sec:related}

The \bcongest{} model is used in this paper mainly as a vehicle for obtaining a message-efficient simulation for algorithms in the \congest{} model. However, there are a few papers that focus on designing algorithms and proving lower bounds in the \bcongest{} model.
For example, Chechik and Mukhtar \cite{ChechikMukhtarDISC2019} 
focus on the single-source reachability problem and the single-source shortest path problem
for graphs with diameter 1 and 2.
In \cite{ChechikMukhtarPODC2020}, the same authors present a new randomized algorithm for solving the weighted, directed variant of
the Single-Source Shortest Path problem in the \bcongest{} model using only $\tilde{O}(\sqrt{n} \cdot D^{1/4} + D)$ rounds, taking a big step towards closing the gap with respect to the lower bound.
There is additional work in the \bcongest{} model, but with the communication network restricted to being a clique and the input graph being an arbitrary subgraph of the communication network.
In this so called Broadcast Congested Clique model, there are both upper bounds \cite{KorhonenROPODIS.2017,HolzerPOPODIS2015} and lower bounds \cite{pai19_connec_lower_bound_broad_conges_clique,DruckerKOPODC2014}.

There has been a lot of research on APSP, but from a round complexity point of view.
For example, for the problem of computing the diameter, an $\Omega(n/\log n)$ lower bound is shown in \cite{FrischknechtHW12}, even for graphs with constant diameter. 
A matching $O(n/\log n)$ upper bound is proved for this problem in \cite{HuaSPAA2016}.
For the more general, APSP problem the authors of \cite{AbboudCKPTALG21} show
an $\Omega(n)$ lower bound for weighted APSP, thus separating its round complexity from unweighted APSP by at least a logarithmic factor.
In a breakthrough result, Bernstein and Nanongkai shows that this lower bound can be almost-exactly matched, by presenting a Las Vegas $\tilde{O}(n)$-round algorithm for weighted APSP~\cite{bernstein2021distributed}.
\subsection{Technical Preliminaries}

\subsubsection{The Congestion Plus Dilation Framework}
\label{section:congestionPlusDilation}
 
Consider a scenario in which we want to run $\ell$ independent distributed algorithms $\mathcal{A}_1, \mathcal{A}_2, \ldots, \mathcal{A}_\ell$ together in the \congest{} model. 
Following notation in \cite{ghaffarilmr}, we use \dil{} to refer to the maximum running time (number of rounds) of any of the algorithms $\mathcal{A}_j$, when executed in isolation.
For any edge $e$, let $c_j(e)$ be the number of rounds in which algorithm $\mathcal{A}_j$ sends a message over $e$ and let 
$\congestion(e) = \sum_{j=1}^{\ell} c_j(e)$.
Thus $\congestion(e)$ denotes the total number of messages sent over edge $e$ over all $\ell$ algorithms $\mathcal{A}_j$.
Finally, let $\congestion = \max_{e} \congestion(e)$. 

We will make repeated use of the following result due to Ghaffari \cite{ghaffarilmr} that shows that the $\ell$ algorithms can be scheduled in such a way that 
it takes essentially, only $\congestion + \dil$ rounds for all the algorithms to complete.
This result extends the classical result of Leighton, Maggs, and Rao \cite{lmr} that applied to the special case in which each algorithm $\mathcal{A}_j$ performed a routing task, routing a packet from a given source $s_j$ to a given target $t_j$ along a pre-specified path $\mathcal{P}_j$.
\begin{theorem}[Ghaffari \cite{ghaffarilmr}]
\label{thm:congestionPlusDilation}
It is possible to schedule algorithms $\mathcal{A}_1, \mathcal{A}_2, \ldots, \mathcal{A}_\ell$ together in the \congest{} model, using only private randomness such that, with high probability, all the algorithms complete in $O(\congestion + \dil \cdot \log n)$ rounds, after $O(\dil \cdot \log^2 n)$ rounds of pre-computation.
\end{theorem}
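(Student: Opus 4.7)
The plan is to adapt the random-delay technique of Leighton--Maggs--Rao from the packet-routing setting to arbitrary \congest{} algorithms. Fix a large enough constant $c$, let $T = c \log n$, and let $W = \lceil \congestion / T \rceil$. Each algorithm $\mathcal{A}_j$ samples an independent uniform delay $\delta_j \in \{0, 1, \ldots, W\}$ using private randomness at its initiator. We partition the physical timeline into disjoint \emph{slots} of $T$ consecutive rounds each, and declare that the $r$-th logical round of $\mathcal{A}_j$ is executed during physical slot $\delta_j + r$, with the exact physical round inside the slot chosen by a local sub-schedule. Under this assignment, the last slot used is at most $W + \dil$, so the total number of physical rounds is at most $(W + \dil) \cdot T = O(\congestion + \dil \cdot \log n)$, which matches the target bound provided we can show that within each slot all desired messages can actually be delivered.

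The core lemma is a per-slot congestion bound: with high probability, for every edge $e$ and every slot $s$, at most $T$ messages wish to cross $e$ during slot $s$. Fix $(e, s)$, and for each $j$ let $Z_{e,s,j}$ be the indicator of the event ``there exists a logical round $r$ in which $\mathcal{A}_j$ transmits across $e$ and $\delta_j + r = s$.'' The $Z_{e,s,j}$ are mutually independent across $j$ (algorithms use independent randomness), and since there are $c_j(e)$ logical rounds in which $\mathcal{A}_j$ uses $e$, $\Pr[Z_{e,s,j} = 1] \le c_j(e)/W$. Hence $\mathbb{E}\!\left[\sum_j Z_{e,s,j}\right] \le \congestion/W \le T$. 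A standard Chernoff bound gives $\Pr\!\left[\sum_j Z_{e,s,j} > 2T\right] \le n^{-\Omega(c)}$, and a union bound over the $O(m \cdot (W + \dil)) = \operatorname{poly}(n)$ many $(e,s)$ pairs makes the bound hold everywhere w.h.p., after choosing $c$ large enough. Given this, the (at most) $T$ messages wanting to cross $e$ in slot $s$ are locally serialized into the $T$ physical rounds of the slot by the endpoints of $e$, with no additional coordination needed, yielding the claimed $O(\congestion + \dil \log n)$ round complexity.

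The main obstacle, which I expect to require the bulk of the technical care, is the use of \emph{private} randomness: for a node to simulate its behavior in $\mathcal{A}_j$ during slot $\delta_j + r$, it must learn $\delta_j$ before that slot begins. The natural fix is that the initiator of $\mathcal{A}_j$ samples $\delta_j$ locally and piggybacks it on every message $\mathcal{A}_j$ ever sends, so that any node ever receiving an $\mathcal{A}_j$-message immediately learns $\delta_j$ and the current logical round and can thereafter schedule itself correctly. The subtlety is that a node might need to transmit on behalf of $\mathcal{A}_j$ before it has received any $\mathcal{A}_j$-message under the delayed schedule; this is handled by an up-front flooding phase in which, for each algorithm, the initiator broadcasts $\delta_j$ along the algorithm's potential communication footprint for $\dil$ hops. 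Each such flood is a much simpler scheduling task (one $O(\log n)$-bit message per edge per flood round), and running all of them together is itself an instance of a congestion-plus-dilation scheduling problem whose congestion is bounded by the number of algorithms a single edge sees; applying the same random-delay argument once recursively to this simpler instance yields the stated $O(\dil \cdot \log^2 n)$ pre-computation bound and ensures that, once the actual schedule begins, every participant of every $\mathcal{A}_j$ already knows $\delta_j$.
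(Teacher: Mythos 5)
This theorem is not proved in the paper at all --- it is imported verbatim from Ghaffari's scheduling paper \cite{ghaffarilmr} and used as a black box --- so there is no in-paper proof to compare against. Judged on its own merits, your reconstruction of the core argument is sound: the random delay $\delta_j$ drawn from $\{0,\dots,W\}$ with $W=\lceil \congestion/T\rceil$, the per-slot indicator variables $Z_{e,s,j}$ with $\Pr[Z_{e,s,j}=1]\le c_j(e)/W$, independence across $j$, the Chernoff-plus-union-bound over $(e,s)$ pairs, and the serialization of the $O(T)$ surviving messages within a $T$-round slot all go through (modulo the harmless constant-factor mismatch between the $2T$ messages the Chernoff bound permits and the $T$ rounds per slot). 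This is indeed the skeleton of Ghaffari's proof, and you correctly identify that the entire difficulty of the private-randomness version lies in getting $\delta_j$ to every node that must act on behalf of $\mathcal{A}_j$.

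The gap is in the last paragraph. You propose flooding $\delta_j$ over each algorithm's footprint and scheduling these $\ell$ floods ``by applying the same random-delay argument once recursively,'' claiming this yields the $O(\dil\cdot\log^2 n)$ pre-computation bound. Two problems. First, the congestion of the flooding instance is the number of algorithms whose footprint contains a given edge; this is bounded only by $\congestion$ (or by $\ell$, if ``potential communication footprint'' means the whole radius-$\dil$ ball around the initiator), not by anything like $O(\dil\log n)$. A recursive application of the scheduling theorem therefore gives pre-computation $O(\congestion+\dil\log n)$, not the claimed $\congestion$-independent $O(\dil\log^2 n)$; obtaining the latter is precisely where Ghaffari's proof does nontrivial work beyond the LMR template. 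Second, the recursion as stated is circular: scheduling the floods with random delays again requires disseminating \emph{those} delays, and you never identify a base case (e.g., an oblivious dissemination primitive that needs no coordination). It is worth noting that the bound you would actually obtain, $O(\congestion+\dil\log n)$ pre-computation, is harmless for every use of this theorem in the present paper, and that the paper's own BFS-specific variant (Theorem~\ref{thm:congestPlusDilationBFS}) sidesteps the whole issue by assuming shared randomness --- which is exactly an admission that the dissemination step you gloss over is the hard part.
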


We also make use of a version of Theorem \ref{thm:congestionPlusDilation} that applies to scheduling a collection of \textit{partial BFS} algorithms (see Theorem \ref{thm:congestPlusDilationBFS} below). By a partial BFS algorithm, we just mean a BFS algorithm that, by design, might terminate early, even before it explores the entire graph. Our theorem assumes that we are dealing with the standard BFS algorithms, i.e., the algorithm in which each node broadcasts just once, on first receiving a ``BFS exploration'' message. Our theorem applies in the \bcongest{} model, i.e., the final algorithm, which consists of an efficient scheduling of the $\ell$ partial BFS algorithms, is also in the \bcongest{}. This scheduling also has a useful additional property that we prove, which is that in any round, at most $O(\log n)$ algorithms (out of $\ell$) are ``active'' in a neighborhood.

\begin{theorem}
\label{thm:congestPlusDilationBFS}
Consider a collection of $\ell \le n$ BFS algorithms $\mathcal{A}_1, \mathcal{A}_2, \ldots, \mathcal{A}_\ell$,
each initiated by a different node. 
Then it is possible to schedule all the BFS algorithms
together in the \bcongest{} model, using shared randomness such that, with high probability, (i) all the algorithms complete in $\tilde{O}(\ell + \dil)$ rounds and (ii) every node receives messages from at most $O(\log n)$ distinct BFS algorithms in any round.
\end{theorem}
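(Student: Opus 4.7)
The plan is to apply a Leighton--Maggs--Rao style random-delay scheme specialized to BFS. Using the shared randomness, sample independently, for each $j \in \{1,\dots,\ell\}$, a delay $d_j$ uniformly at random from $\{0,1,\dots,\ell-1\}$, and consider the following \emph{logical} schedule: $\mathcal{A}_j$ is initiated in logical round $d_j+1$, when its source $s_j$ broadcasts; thereafter, every other node $w$ broadcasts for $\mathcal{A}_j$ exactly once, in the logical round immediately after it first receives an exploration token for $\mathcal{A}_j$. Under this logical schedule, a node at distance $k$ from $s_j$ broadcasts for $\mathcal{A}_j$ in round $d_j + k + 1$, so $\mathcal{A}_j$ finishes by logical round $d_j + \dil + 1$, and all BFSs complete within $\ell + \dil$ logical rounds.

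The heart of the argument is a sharp distance-triangulation observation used to prove property (ii). For any BFS $\mathcal{A}_j$ with source $s_j$, and any two neighbors $u, u'$ of a fixed vertex $v$, the triangle inequality through $v$ gives $|\text{dist}(s_j, u) - \text{dist}(s_j, u')| \le 2$; in particular $\text{dist}(s_j, u) \in \{\text{dist}(s_j, v)-1,\, \text{dist}(s_j, v),\, \text{dist}(s_j, v)+1\}$. Because a neighbor $u$ broadcasts for $\mathcal{A}_j$ in logical round $d_j + \text{dist}(s_j, u) + 1$, the event ``$v$ receives a message for $\mathcal{A}_j$ in logical round $r$'' reduces to ``$d_j \in \{r-\text{dist}(s_j,v),\, r-1-\text{dist}(s_j,v),\, r-2-\text{dist}(s_j,v)\}$,'' a set of at most three values. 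Since $d_j$ is uniform over a range of size $\ell$, this event has probability at most $3/\ell$; summing over the $\ell$ independent BFSs, the expected number of distinct BFSs from which $v$ receives a message in a fixed logical round is $O(1)$. A standard Chernoff bound followed by a union bound over all $n \cdot (\ell + \dil) = O(n^2)$ (vertex, round) pairs yields property (ii) with high probability. An essentially identical argument --- $v$ broadcasts for $\mathcal{A}_j$ in logical round $r$ only if $d_j = r - 1 - \text{dist}(s_j, v)$, a single value --- shows that the number of BFSs for which $v$ must broadcast in a fixed logical round is also $O(\log n)$ w.h.p.

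To implement this logical schedule in \bcongest{} and establish property (i), expand each logical round into $\Theta(\log n)$ sub-rounds. Within the sub-rounds of logical round $r$, each vertex serializes its (at most $O(\log n)$) pending broadcasts, sending one exploration token per sub-round; each token carries only the $O(\log n)$-bit BFS identifier and fits in one \bcongest{} message. After the $\Theta(\log n)$ sub-rounds complete, every broadcast scheduled for logical round $r$ has been delivered, so the logical clock advances uniformly across the network. Since no explicit pre-computation is required to distribute the delays (the randomness is shared), the total number of physical \bcongest{} rounds is $O((\ell+\dil)\log n) = \tilde O(\ell+\dil)$.

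The main obstacle is obtaining a constant-sized ``bad set'' of delays for each (vertex, round) pair; the triangle-inequality argument is what achieves this. Without that observation, the only obvious bound on the number of values of $\text{dist}(s_j, u)$ as $u$ ranges over $N(v)$ is $\deg(v)$, which would blow up the expected per-round count of active BFSs to $\Theta(\deg(v))$ and break the $O(\log n)$ bound. Once the constant-sized bad set is in hand, the remainder of the proof is routine concentration via independent Chernoff bounds and bookkeeping to verify that the serialized \bcongest{} schedule is a faithful implementation of the logical one.
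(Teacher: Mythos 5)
Your proof is correct and follows essentially the same approach as the paper's: choose independent uniform random delays of size $O(\ell)$, observe that a node's closed neighborhood broadcasts for any fixed BFS within an interval of three consecutive rounds (your explicit $\text{dist}(s_j,u)\in\{\text{dist}(s_j,v)-1,\text{dist}(s_j,v),\text{dist}(s_j,v)+1\}$ observation is exactly the paper's ``interval $I_v$ of length 3''), conclude a $3/\ell$ per-BFS probability for the bad event, apply a Chernoff plus union bound over all $n(\ell+\dil)$ (vertex, round) pairs, and expand each logical round into $\Theta(\log n)$ sub-rounds to serialize the at most $O(\log n)$ pending broadcasts per node.
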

\begin{proof}
(i) Execute each algorithm $\mathcal{A}_j$ as is, except that we start $\mathcal{A}_j$ after a \textit{random delay} $d_j$ chosen uniformly at random from the (integer) range $[1, \ell]$. Consider a node $v$, a round $r$, and a BFS algorithm $\mathcal{A}_j$.
The probability that $v$ broadcasts in round $r$ is $\le 1/\ell$,
implying that the expected number of algorithms for which $v$ broadcasts in round $r$ is at most 1. By a simple application of a Chernoff bound, with high probability, there are at most $O(\log n)$ algorithms (out of $\ell$) such that $v$ broadcasts in round $r$ in these algorithms.
By a union bound over all $n$ nodes and all $\ell + dilation$ rounds, we get that for every node $v$ and every round $r$, $v$ broadcasts in round $r$ in at most $O(\log n)$ algorithms. By expanding each round into a phase with $O(\log n)$ rounds, we can schedule all broadcasts in a conflict-free manner. This leads to a running time of $\tilde{O}(\ell + \dil)$ with high probability.

(ii) As in the proof of (i), consider a node $v$, a round $r$, and a BFS algorithm $\mathcal{A}_j$. Because of how the BFS algorithm works, any two nodes in the (closed) neighborhood of $v$ are within distance 2 of each of other, and broadcast within 2 rounds of each other. Thus, there is an interval $I_v$ of length 3 such that all nodes in the closed neighborhood of $v$ broadcast in a round from $I_v$ in algorithm $\mathcal{A}_j$. The event that some node in the closed neighborhood of $v$ broadcasts in round $r$ is equivalent to the event that $r \in I_v$ and this happens with probability at most $3/\ell$. By a similar argument as in (i), we see that with high probability, for every node $v$ and every round $r$, some node in the closed neighborhood of $v$ broadcasts in round $r$ in at most $O(\log n)$ algorithms. 
\end{proof}

\subsubsection{Distributed Primitives: Upcast and Downcast}

We present two classical distributed primitives, upcast and downcast over forests, that we will use in Section \ref{sec:simulation}. 
We start by describing the upcast primitive. We assume that each node $v$ has some input $in(v)$, and the upcast primitive must ensure that the root receives all nodes' input information. This is done as follows. All nodes send $in(v)$, cut up in $O(\log n)$ bit messages, to their parent. Additionally, each non-leaf node stores any message received from its children, and retransmit these messages to its own parent whenever possible. 

\begin{lemma}
\label{lem:upcast}
    Let $In \geq n$ be the bits of input information over all nodes. Then, upcast over a forest depth $d$ takes $O(In/\log n)$ rounds and $O(d \cdot In/\log n)$ messages.
\end{lemma}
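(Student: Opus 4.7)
The plan is to split the argument into a message-complexity bound and a round-complexity bound, invoking the classical pipelined convergecast schedule inside each tree of the forest.

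First I would chop each node's input $in(v)$ into packets of $O(\log n)$ bits, producing $k = O(In/\log n)$ packets in total across the forest. For the \emph{message bound}, observe that each packet travels from its origin up to the root of its tree, traversing at most $d$ tree edges. Summing over all packets gives $O(d \cdot k) = O(d \cdot In/\log n)$ transmissions, as claimed.

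For the \emph{round bound}, I would use the standard pipelined convergecast: in every round, each internal node forwards the oldest packet currently buffered in its local queue (its own packets included, with ties broken by source ID, say). Inside a single tree $T_i$ of depth at most $d$ holding $k_i$ packets, a classical inductive argument shows that the $j$-th packet (in the induced delivery order) reaches the root by round at most $d + j$, so $T_i$ finishes in $O(d + k_i)$ rounds. Since the trees of the forest are vertex-disjoint, their executions do not interfere, and the forest finishes in $O(d + \max_i k_i) = O(d + k) = O(d + In/\log n)$ rounds. Under the hypothesis $In \geq n$ (and $d \leq n$), the additive $O(d)$ term is absorbed into $O(In/\log n)$ in the parameter regimes used elsewhere in the paper, yielding the stated bound.

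The only nontrivial ingredient is the pipelining argument itself. The main potential obstacle is ruling out cumulative stalls at bottleneck edges: one maintains the invariant that for every tree edge $e$, once the pipeline has ``filled'' after $O(d)$ rounds, a new packet crosses $e$ in every subsequent round until the subtree below $e$ has been completely drained. This invariant is what converts ``$k$ packets and depth $d$'' into the additive $O(d + k)$ bound, and it is the step that genuinely needs an argument; the message bound and the forest-to-tree reduction are then both immediate.
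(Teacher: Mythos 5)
Your proof is essentially the paper's own argument: both chop the input into $O(In/\log n)$ packets of $O(\log n)$ bits, bound the message count by noting each packet traverses at most $d$ tree edges, and bound the round count via a pipelined convergecast giving $O(d + In/\log n)$ rounds, with the additive $O(d)$ absorbed. The only cosmetic difference is that you count total packets while the paper counts per-edge congestion, but these yield the same pipeline bound; both proofs also share the same mild informality in absorbing the $O(d)$ term, which is harmless in the regime ($d = O(\log n)$) where the lemma is actually applied.
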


\begin{proof}
    Each edge may transmit at most $O(In)$ bits of information (from a leaf node towards the root) throughout the execution. Since communication is done via $O(\log n)$ bit messages, at most $O(In/\log n)$ messages are sent through any given edge. A simple pipelining argument shows that upcast takes $O(In/\log n + d) = O(In / \log n)$ rounds.

    As for the message complexity, it suffices to show that at most $O(In / \log n)$ messages are generated initially, and any such message originating at some node $v$ induces at most $d$ messages throughout the upcast execution, one on each edge from $v$ to the root. The lemma statement follows.
\end{proof}

Next, we describe the downcast primitive. In it, the root $r$ starts with some set $M(r)$ of $O(\log n)$ bit messages, each with an associated destination node, and the downcast primitive must ensure that each message is received by its destination node. The root sends each message to the child that is the root of the subtree containing the message's destination node, whenever possible. Non-root nodes similarly transmit any received messages to their children if they are not the received message's destination node.

\begin{lemma}
\label{lem:downcast}
    Let $M$ be the set of messages over all roots. Then, downcast over a forest of depth $d$ takes $O(|M| + d)$ rounds and $O(d |M|)$ messages.
\end{lemma}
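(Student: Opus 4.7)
The plan is to prove the two bounds separately, since the message bound follows directly from the algorithm's structure while the round bound requires a standard pipelining argument.

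For the message complexity, I would note that each message $m \in M$ originates at some root $r$ and travels only along the unique root-to-destination path in the tree rooted at $r$. Since every tree has depth at most $d$, this path has length at most $d$, so $m$ is transmitted at most $d$ times over the course of the protocol. Summing over the $|M|$ messages gives the $O(d |M|)$ bound. No pipelining-specific reasoning is needed here.

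For the round complexity, the plan is a classical tree-pipelining argument. The first observation is that for any edge $e$ lying in the tree rooted at $r$, the number of messages that ever traverse $e$ is at most $|M(r)| \le |M|$, since only messages in $M(r)$ whose destination lies in the subtree on the child side of $e$ ever cross $e$. Next I would make the scheduling deterministic by fixing, at each root, an arbitrary strict priority order on its messages, and having every node always send the highest-priority pending message along each child edge. Then I would prove by induction on the priority index $i$ (with a nested induction on depth) that the $i$-th message leaving any root reaches its destination by round $i + d$: at each hop, the message is delayed at most by the messages of higher priority already queued ahead of it on that edge, and each such message contributes at most one round of delay because it is then sent on and cannot block again. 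Since $i \le |M|$, this yields a completion time of $O(|M| + d)$ rounds across the whole forest.

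The main obstacle I anticipate is making the inductive "priority+depth" argument fully rigorous, since one has to argue carefully that a lower-priority message cannot be blocked more than a constant number of times by any given higher-priority message, even as the latter may itself be delayed upstream. A clean way to handle this is to charge each round of delay of a message $m$ at an edge $e$ to a unique higher-priority message that occupies $e$ in that round, and observe that because messages only move downward and each traverses $e$ at most once, the total charge to $m$ is bounded by the number of higher-priority messages in its tree, which is at most $|M| - 1$. Combined with the trivial $d$ rounds needed to traverse the path once unobstructed, this gives the desired $O(|M| + d)$ round bound. The remainder of the proof is routine bookkeeping.
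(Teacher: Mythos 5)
Your proof is correct and follows the same two-pronged approach the paper uses: the message bound comes from each message traversing at most the $d$ edges on its root-to-destination path, and the round bound comes from the pipelining observation that at most $|M|$ messages ever cross any edge. The paper only states ``a simple pipelining argument implies the runtime'' without elaboration, so your write-up is essentially an expansion of what the paper leaves implicit.

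One small imprecision worth flagging in your charging argument: you conclude that the total charge to $m$ is at most the number of higher-priority messages, justified by ``each traverses $e$ at most once.'' That observation only rules out a fixed higher-priority message $m'$ being charged twice at the \emph{same} edge; it does not, as stated, rule out $m'$ delaying $m$ once at $e_1$ and again at a later edge $e_2$. In general congestion-plus-dilation settings such repeated blocking can occur. Here it cannot, but the clean reason is a stronger structural fact specific to tree downcast: a node has a unique parent, hence receives at most one message per round, and so (by induction on depth) can always forward a received message to the appropriate child on the very next round. Consequently a message is delayed only while it waits at its originating root, and there the delay is exactly the number of higher-priority messages destined for the same child subtree, which is at most $|M|-1$. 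With that observation in hand, your priority-plus-depth induction and the $O(|M|+d)$ bound go through without the need for the more delicate charging bookkeeping.
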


\begin{proof}
    Each root may downcast up to $|M|$ messages. This ensures that at most $|M|$ messages are sent via each edge, and a simple pipelining argument implies the runtime. As for the message complexity, each of the $|M|$ messages originating at some root induces at most $d$ messages, one per edge from the root to the originating message's destination node
\end{proof}
\section{Message-Efficient Simulation of \bcongest{} Algorithms}
\label{sec:simulation}

In this section, we show how we can message-efficiently simulate \bcongest{} algorithms whose message complexity is significantly higher than their broadcast complexity. More formally, we show how to obtain an equivalent \congest{} algorithm whose message complexity is within polylogarithmic factors of the simulated algorithm's broadcast complexity, at the cost of a higher time complexity (roughly, by a linear in $n$ factor) --- see Theorem \ref{thm:messageEfficientSimulation} below.

\begin{theorem}
\label{thm:messageEfficientSimulation}
    Let $\mathcal{A}$ be any \bcongest{} algorithm with $\TA$ round complexity, $\BA$ broadcast complexity, and let $In$ and $Out$ denote respectively the size of the inputs (including communication graph and IDs) and outputs over all nodes. Then, there exists a randomized (Monte Carlo) \congest{} algorithm $\mathcal{A'}$ simulating $\mathcal{A}$ with 
    round complexity $\tilde{O}(In + Out + \TA \, n)$ and message complexity $\tilde{O}(In + Out + \BA)$, with high probability.
\end{theorem}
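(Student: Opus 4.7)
The plan is to use the LDC decomposition of Section \ref{subsec:LDC} to partition the communication graph into clusters of $\tilde{O}(1)$ diameter such that, with high probability, every node is adjacent to at most $\tilde{O}(1)$ distinct clusters, and then designate each cluster center as a proxy that stores and evolves the simulated state of every node in its cluster. Because inter-cluster traffic can be funneled through one representative edge per adjacent cluster, each simulated broadcast produces only $\tilde{O}(1)$ cross-cluster messages, which will give the desired $\tilde{O}(\BA)$ bound for the simulation phase.

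First I construct the LDC decomposition and its spanning forest of cluster trees; using an MPX-style construction this costs $\tilde{O}(In)$ rounds and messages. Next, every node $v$ upcasts to its center $c_C$ both $in(v)$ and an annotated list of its neighbors labelled with their cluster identities; by Lemma \ref{lem:upcast} and the fact that $In$ subsumes the size of the communication graph, this step uses $\tilde{O}(In)$ rounds and messages, after which every $c_C$ picks, for each pair (member $v$, adjacent cluster $C'$), a single representative edge out of $v$ into $C'$. Simulation then proceeds one round at a time: in round $r$, each $c_C$ locally advances the state of every member and, for each broadcasting $v$, emits one $v$-tagged message per adjacent cluster of $v$. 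Each such message is delivered by downcasting from $c_C$ to the chosen representative in $C$ (Lemma \ref{lem:downcast}), a single edge hop across the representative, and upcasting to $c_{C'}$ on the other side (Lemma \ref{lem:upcast}); on arrival, $c_{C'}$ applies the update to the stored state of every neighbor of $v$ in $C'$ using its preprocessed adjacency. After the final simulated round, outputs are downcast from each cluster center back to its members at cost $\tilde{O}(Out)$.

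For the analysis, the message count contributed by the simulation phase totals $\tilde{O}(\BA)$ because each of the $\BA$ broadcasts spawns $\tilde{O}(1)$ cross-cluster messages and each such message rides $\tilde{O}(1)$ edges of two cluster trees, giving overall message complexity $\tilde{O}(In + Out + \BA)$. For round complexity, the upcast or downcast traffic through any one cluster during a single simulated round is bounded by the at most $n$ messages entering or leaving it, which Lemmas \ref{lem:upcast}--\ref{lem:downcast} deliver in $\tilde{O}(n)$ rounds; summed over $\TA$ rounds this is $\tilde{O}(\TA \, n)$, on top of $\tilde{O}(In + Out)$ for preprocessing and output.

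The main obstacle I expect is establishing, with high probability and simultaneously for every node, the LDC guarantee that each node has only $\tilde{O}(1)$ neighboring clusters --- it is this bound (and not cluster size) that controls the cross-cluster message count, and I would extract it from the probabilistic properties of the MPX-style decomposition together with a union bound over all $n$ nodes. A secondary technical issue is keeping cluster-center states round-consistent even though upcasts and downcasts for different broadcasters finish in different rounds; this can be resolved by tagging each message with its simulated round and having $c_{C'}$ process round $r+1$ only after collecting all round-$r$ messages from its inter-cluster representatives, which is safe because the total cross-cluster traffic is reliably pipelined through the upcast/downcast primitives.
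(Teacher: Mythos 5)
Your proposal is correct and follows essentially the same route as the paper: compute an MPX-based $(O(\log n), O(\log n))$-LDC decomposition, let cluster centers act as proxies that gather inputs via upcast and then simulate $\mathcal{A}$ phase by phase using downcast/edge-hop/upcast through the sparse inter-cluster edge set $F$, bounding messages per simulated round by $\tilde{O}(|U_p|)$ and rounds per phase by $\tilde{O}(n)$. Your noted obstacle (the w.h.p.\ bound on neighboring clusters) is exactly what the paper resolves by citing Corollary 3.9 of Haeupler--Wajc for MPX, and your synchronization fix (fixed-length phases with all nodes delaying the receive step until the worst-case downcast time has elapsed) matches the paper's choice of $t_1 = O(n\log n)$ rounds per phase.
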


Note that by definition $\BA \leq \TA \, n$ for any \bcongest{} algorithm $\mathcal{A}$, and thus we can get the following simpler statement as a corollary of Theorem \ref{thm:messageEfficientSimulation}.

\begin{corollary}
\label{cor:messageEfficientSimulation}
    Let $\mathcal{A}$ be any \bcongest{} algorithm with $\TA$ round complexity and let $In$ and $Out$ denote respectively the size of the inputs (including communication graph and IDs) and outputs over all nodes. Then, there exists a randomized (Monte Carlo) \congest{} algorithm $\mathcal{A'}$ simulating $\mathcal{A}$ with 
    round and message complexity $\tilde{O}(In + Out + \TA \, n)$, with high probability.
\end{corollary}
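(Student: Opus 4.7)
The plan is to leverage the LDC decomposition from Section \ref{subsec:LDC}, which guarantees (a) each cluster has polylogarithmic diameter and (b) every vertex is adjacent to only $\tilde{O}(1)$ distinct clusters, even when the vertex itself has very high degree. The overarching idea is to promote each cluster center to a proxy for all nodes in its cluster: once the center learns each member's input, \texttt{ID}, and adjacency information, it can locally simulate the transition function of every member, and the only traffic we actually need to pay for is the delivery of broadcasts between neighboring clusters.

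After computing the LDC decomposition and a BFS tree rooted at each cluster center, the first phase upcasts, along each tree, every member's input and adjacency list together with a cluster label for each cross-cluster neighbor. By Lemma \ref{lem:upcast} and property (a) this costs $\tilde{O}(In)$ rounds and messages; property (b) is what keeps the cluster-label annotation from blowing up the budget, since each member contributes only $\tilde{O}(1)$ cluster labels beyond its raw adjacency list. After this phase, every cluster center can faithfully replay the local state transitions of its members, provided it is told which messages they receive in each round.

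The main loop simulates $\mathcal{A}$ round by round. In simulated round $r$, for each broadcasting node $v$ sitting in cluster $C_v$, the center of $C_v$ must deliver $m_v$ (tagged with $v$'s \texttt{ID}) to one representative in every cluster that contains a neighbor of $v$; by property (b) this is $\tilde{O}(1)$ destinations per broadcast. Each delivery is routed along a path of length $\tilde{O}(1)$ consisting of tree edges down to a border node of $C_v$, one cross-cluster edge, and tree edges up to the receiving cluster's center. All deliveries for round $r$ are scheduled together via Theorem \ref{thm:congestionPlusDilation}: the dilation is $\tilde{O}(1)$ and the congestion on any edge is at most the number of broadcasts in the round, hence at most $n$, so each simulated round finishes in $\tilde{O}(n)$ rounds while sending $\tilde{O}(b_r)$ messages, where $b_r$ is the round's broadcast count. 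Summed across the $\TA$ rounds of $\mathcal{A}$ this yields $\tilde{O}(\TA \, n)$ rounds and $\tilde{O}(\BA)$ messages for the main phase, after which a single downcast from cluster centers back to their members (Lemma \ref{lem:downcast}) ships outputs at a cost of $\tilde{O}(Out)$.

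I expect the main obstacle to be the bookkeeping around the proxy simulation rather than the routing itself. One must argue (i) that each cluster center, armed with the aggregated view from preprocessing, can reproduce every member's transition function exactly --- in particular, upon receiving an externally-tagged broadcast $m_v$, it must identify precisely which of its own members are neighbors of $v$, so that only those members' simulated states are updated --- and (ii) that the aggregated data, together with the Monte Carlo guarantees inherited from the MPX-style LDC construction, stays within the $\tilde{O}(In)$ bit budget. Property (b) is what makes (ii) tight; combining this with the correctness of the LDC decomposition yields the theorem.
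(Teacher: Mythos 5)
Your proposal follows essentially the same route as the paper: LDC decomposition via MPX, cluster centers as proxies for their members' states, preprocessing upcast of inputs, round-by-round simulation with inter-cluster delivery, and final downcast of outputs. The one implementation-level difference is that you bundle each simulated round's deliveries as source-to-destination routing tasks over fixed paths of $\tilde{O}(1)$ length and invoke Theorem~\ref{thm:congestionPlusDilation}, whereas the paper simply allocates a conservative $O(n\log n)$-round budget per phase and runs two pipelined operations (downcast along cluster trees to border nodes, then upcast from the receiving border nodes) to the same asymptotic effect --- the LMR machinery is not actually needed here, since the dilation is polylogarithmic and plain pipelining already absorbs the $\tilde{O}(n)$ congestion. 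Both give $\tilde{O}(\TA n)$ rounds and $\tilde{O}(\BA) \le \tilde{O}(\TA n)$ messages for the main loop, plus $\tilde{O}(In + Out)$ overhead for the bookends, so the proposal is correct; note only that the per-edge congestion is in fact $O(|U_r|\log n)$ rather than exactly $n$ (a node may receive unboundedly many incoming $F$-edges, and a broadcasting member contributes $O(\log n)$ deliveries), but this is absorbed by the $\tilde{O}$.
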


Next, we describe the section's organization. We first describe a decomposition structure (called LDC decomposition) that allow us to pre-process the graph in Subsection \ref{subsec:LDC}. Then, we show how to leverage this preprocessing to simulate \bcongest{} algorithms in a message efficient manner, in Subsection \ref{subsec:MPXsimulationDescription}. Next, we analyze the simulation in Subsection \ref{subsec:MPXSimulationAnalysis}. Finally, we give a summary of some applications of our simulation in Subsection \ref{subsec:simulationApplications1}, which includes achieving message-optimality for weighted APSP.

\subsection{Low Diameter and Communication (LDC) Graph Decompositions}
\label{subsec:LDC}

Low diameter graph decompositions play a key role in distributed computing, and in particular for designing faster algorithms for local distributed problems (e.g., maximal independent set and coloring). 
For the sake of improved message complexity, we introduce low diameter and communication (LDC) graph decompositions. Informally, such decompositions also bound the number of edges going out of any one node of a cluster into all neighboring clusters. 

\begin{definition}[Low Diameter and Communication (LDC) Graph Decomposition]
    Let $G = (V,E)$ be an unweighted graph. A $(r,d)$-low diameter and communication decomposition of $G$ is a partition of the vertex set $V$ into subsets $V_1,\ldots,V_k$, called clusters, combined with a sparse inter-cluster communication directed edge set $F \subseteq E$ such that:
    \begin{itemize}
        \item Each cluster $V_i$ has strong diameter at most $r$, i.e., $dist_{G[V_i]}(u,v) \leq r$ for any two nodes $u,v \in V_i$,
        \item Each node $v \in V_i$ has at most $d$ (outgoing) incident edges in $F$, such that for each cluster $V_{j}$ containing a neighbor of $v$, there exists at least one incident edge $e = (v,u) \in F$ with $u \in V_{j}$. 
    \end{itemize}
\end{definition}

\begin{figure}
    \centering
    \includegraphics[width=0.7\linewidth]{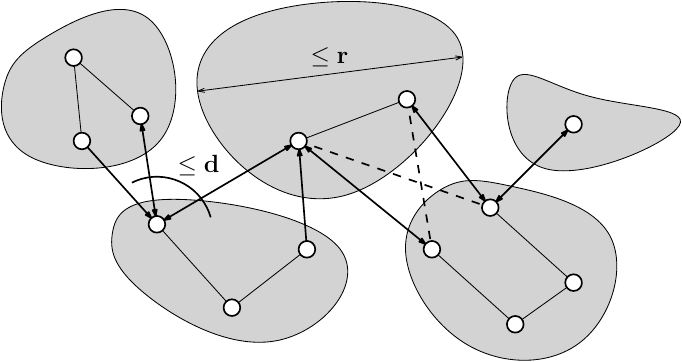}
    \caption{Example of an $(r,d)$-LDC decomposition with five clusters. The bold inter-cluster edges are the edges in $F$ whereas the dashed intercluster edges are in $E \setminus F$. At a high level, the simulation described in Section~\ref{sec:simulation} saves messages by not using the edges in $E \setminus F$.  Each cluster has strong diameter at most $r$, and each node in a cluster has at most $d$ outgoing edges in $F$.}
    \label{fig:LDC-example}
\end{figure}

Next, we show that a low diameter and low communication graph decomposition can be obtained with some simple additions to the low diameter graph decomposition algorithm of Miller, Peng, and Xu \cite{MPX13} which is popularly referred to as the MPX algorithm.

\begin{lemma}
\label{lem:LDCGraphDecomposition}
    There exists an $O(\log n)$ round algorithm that produces an $(O(\log n),O(\log n))$ low diameter and communication graph decomposition. Moreover, each cluster of that decomposition is spanned by a tree of depth $O(\log(n))$.
\end{lemma}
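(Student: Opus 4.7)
The plan is to adapt the Miller--Peng--Xu (MPX) low-diameter decomposition, turn it into a strong-diameter variant, and augment it with a sparse inter-cluster edge set $F$ that realizes the bounded-communication requirement.

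First I would have each node $v$ sample an independent exponential shift $\delta_v$ with rate $\beta = \Theta(1)$, so that with high probability all shifts lie in $[0, O(\log n)]$. A node $u$ is then assigned to the cluster of the center $w$ maximizing $\delta_w - d(u,w)$, breaking ties by ID. To get a \emph{strong}-diameter guarantee (not merely weak) I would implement the assignment in a BFS fashion: starting from the center with the largest shift, each center grows a BFS tree round by round, and a node joins a cluster only through a neighbor that has itself already joined that cluster. Since every shift is $O(\log n)$, each BFS tree has depth $O(\log n)$, the whole process terminates in $O(\log n)$ rounds, and each cluster is spanned within itself by a tree of that depth---simultaneously yielding the strong-diameter and spanning-tree claims.

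Next, to show that each node has $O(\log n)$ neighboring clusters, fix a node $v$ and let $\mu_v = \max_x\{\delta_x - d(v,x)\}$. If a cluster $C_w$ is a neighbor of $v$, some neighbor $u \in N(v)$ is assigned to $C_w$, and since $|d(u,w) - d(v,w)| \leq 1$ this forces $\delta_w - d(v,w) \geq \mu_v - 2$. By memorylessness and the exponential tail, the number of indices $w$ whose ``score'' $\delta_w - d(v,w)$ lies within an additive constant of $\mu_v$ is $O(\log n)$ w.h.p.---the standard ``few near-top values'' argument used in MPX-type analyses. A union bound over all nodes gives the uniform $O(\log n)$ bound. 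For the edge set $F$: once clustering is done, every node knows its own cluster ID, and in one more round neighbors exchange IDs; each node $v$ then picks, for every neighboring cluster $C_j$, an arbitrary neighbor in $C_j$ and orients the corresponding edge outward into $F$. By the previous step this adds at most $O(\log n)$ outgoing edges per node, and by construction every neighboring cluster is represented. The overall round complexity remains $O(\log n)$.

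The hardest part will be the bound on the number of neighboring clusters per node. Standard weak-diameter MPX only controls the expected number of cut edges \emph{per edge}, not the number of \emph{distinct} neighboring clusters, and a high-degree vertex could a priori touch many clusters. The rescue comes from the exponential-shift mechanism: only shifts within a constant of the maximum can possibly ``win'' any neighbor of $v$, and the exponential tail guarantees only $O(\log n)$ such near-maximal shifts exist w.h.p. Marrying this concentration argument with a strong-diameter BFS implementation---which must be executed carefully so that cluster growth never crosses an already-claimed boundary---yields all three guarantees (strong diameter $O(\log n)$, at most $O(\log n)$ outgoing $F$-edges per node, and an intra-cluster spanning tree of depth $O(\log n)$) in a single $O(\log n)$-round algorithm.
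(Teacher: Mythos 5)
Your proposal is correct and matches the paper's approach: the paper simply invokes the MPX decomposition of Miller--Peng--Xu for the strong-diameter/spanning-tree/round-complexity guarantees and cites Corollary~3.9 of Haeupler--Wajc for the ``$O(\log n)$ neighboring clusters per node'' bound, then adds $F$-edges exactly as you do. You re-derive both ingredients from scratch (exponential shifts, BFS growth for strong diameter, and the near-top-scores concentration argument), but the construction and the bound are the same.
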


\begin{proof}
    The MPX algorithm of \cite{MPX13} takes $O(\log n)$ rounds and produces a low diameter graph decomposition of $G$, that is, it partitions the vertex set $V$ into subsets $V_1,\ldots,V_k$ such that each cluster $V_i$ has strong diameter at most $O(\log n)$ and is spanned by a tree of depth $O(\log n)$. Moreover, Corollary 3.9 of \cite{HW16} proves that with high probability, each node $v \in V$ neighbors $O(\log n)$ clusters, or in other words, the neighbors of $v$ belong to at most $O(\log n)$ clusters. Thus, by adding for each such neighboring cluster, one edge from $v$ to a neighbor in that cluster to the edge set $F$, we indeed obtain an $(O(\log n),O(\log n))$ low diameter and communication decomposition of $G$.
\end{proof}

\subsection{Description of the Simulation}
\label{subsec:MPXsimulationDescription}

Let $\mathcal{A}$ denote an arbitrary \bcongest{} algorithm with $\TA$ round complexity and $\BA$ broadcast complexity, and let $In$ and $Out$ denote respectively the size of the inputs (including communication graph and IDs) and outputs over all nodes.
Then, we give a \congest{} algorithm $\mathcal{A'}$ that simulates $\mathcal{A}$ (i.e., produces the same output). The algorithm  $\mathcal{A'}$ can be decomposed into two parts: a preprocessing part, and a simulation part. In the preprocessing part, we compute a (low diameter and communication) clustering of the communication graph which will be crucial in limiting the message complexity of the simulation part to roughly the broadcast complexity of the simulated algorithm $\mathcal{A}$.

\paragraph{Preprocessing.}
The preprocessing part of $\mathcal{A'}$ consists of three steps. In a first step, we compute and ensure all nodes know the number of nodes $n$, for example by computing a BFS tree, aggregating the number of nodes in each subtree, and having the root broadcast $n$ to all nodes. After which, in a second step, we compute an $(O(\log n),O(\log n))$-LDC decomposition of the communication graph $G$. Finally, in a third step, each cluster center $c_i$ (i.e., the root of the tree  of depth $O(\log n)$ spanning cluster $V_i$ in the decomposition) gathers all of its cluster nodes' input information using an upcast operation. 
More precisely, each node $v$ sends its input $in(v)$, broken up in $O(\log n)$ bit messages, to its parent in the cluster tree. (This includes the $O(d(v) \log n)$ bits describing its 1-hop neighborhood, or in other words, all ID pairs describing its $d(v)$ incident edges.) Whenever a node receives a message from its children, it simply transmits that message to its parent whenever possible.

\paragraph{Simulation.} The simulation part of $\mathcal{A}'$ is broken up into $\TA$ phases. Each phase $p \in [1,\TA]$ of $\mathcal{A'}$ is used to simulate one round (round $p$) of $\mathcal{A}$. Throughout most of these phases, the cluster centers take care of the simulation whereas non-center cluster nodes simply transmit information between cluster centers. However in the final phase, cluster centers inform their cluster's nodes of their outputs. 
Somewhat more formally, each phase $p \in [1,\TA]$ satisfies the following invariant: at the start of phase $p$, each cluster center $c_i$ knows the state of each of its cluster's node $v \in V_i$ at the start of round $p$ in the simulated algorithm $\mathcal{A}$. Moreover, at the end of the last phase $\TA$, all nodes know their state at the end of round $\TA$ in $\mathcal{A}$.

Now let us describe some given phase $p \in [1,\TA]$ of $\mathcal{A'}$. Phase $p$ simulates round $p<\TA$ of $\mathcal{A}$ using $O(n \log n)$ rounds, and round $p=\TA$ of $\mathcal{A}$ using $O(n \log n + Out)$ rounds. This is done in two steps, (1) send messages to neighboring clusters and (2) receive the messages from neighboring clusters, just as each round of the \congest{} model is also separated into these two steps (but where communication happens with neighboring nodes), with some computations happening before (1) and/or after (2).  
In fact, the phase starts with each cluster center $c_i$ performing the following series of local computations. (Recall that we mentioned previously that by this point, $c_i$ knows the state of each node $v \in V_i$ at the start of round $p$ of $\mathcal{A}$.) First, $c_i$ computes which nodes of $V_i$ would broadcast in round $p$ of $\mathcal{A}$, the set of which is denoted by $V_{i,p}$. For any node $v \in V_{i,p}$, let $m_{v,p}$ denote the message broadcasted by $v$ in round $p$ (with $v$ in the state computed by $c_i$), and let $E_v$ be the $O(\log n)$ edges of $F$ oriented from $v$ to some other node in a different cluster (where $F$ is the sparse inter-cluster communication directed edge set of the graph decomposition). Then, $c_i$ computes a set of node and message pairs $\mathcal{B}_{i,p} = \{(v, m_{v,p}) \mid v \in V_{i,p}\}$, and from that in turn computes a set of inter-cluster edges and message pairs $\mathcal{M}_{i,p}$, as follows. Start with $\mathcal{M}_{i,p}$ empty, and for each $(v, m_{v,p}) \in \mathcal{B}_{i,p}$, add the set $\{(e, m_{v,p}) \mid e \in E_v\}$ to $\mathcal{M}_{i,p}$.  

Once these local computations are done, the first step takes place. In it, each cluster center $c_i$ downcasts each message in $\mathcal{M}_{i,p}$ to the appropriate node in the cluster $V_i$. More precisely, for any edge and message pair $(e,\eta) \in \mathcal{M}_{i,p}$, let $u$ denote without loss of generality the endpoint of $e$ within $V_i$. Then, $c_i$ downcasts the edge and message pair $(e,\eta)$ to $u$, and upon receival, node $u$ sends the message $\eta$ to the other endpoint of $e$. Note that the downcast may incur significant congestion, in fact, $O(n \log n)$ congestion. Hence, all nodes run the first step for some $t_1 = O(n \log n)$ rounds before starting the second step simultaneously. (Here, $t_1$ depends on the analysis.)

In the second step, all nodes upcast any message received over an inter-cluster edge to the root of their cluster tree. (Similarly to the first step, this takes some $O(n \log n)$ rounds.) At the end of the upcast operation, each cluster center $c_i$ can and does simulate for each node $v \in V_i$ how its state would change in round $p$ of algorithm $\mathcal{A}$. (Recall that this depends on the messages $v$ receives in round $p$ of algorithm $\mathcal{A}$, and these have just been upcast to $c_i$ within algorithm $\mathcal{A'}$.) 

Most phases end at this point, but an additional step is executed for the final phase $p = \TA$. Indeed, each cluster center $c_i$ knows for each of its cluster's nodes, its output in $\mathcal{A}$. Then, $c_i$ downcasts these outputs to the respective nodes of cluster $V_i$. (This takes $O(Out)$ rounds, where recall that $Out$ denotes the size of the outputs over all nodes of $\mathcal{A}$.) Once that is done, the final phase terminates, and thus $\mathcal{A}'$ also.

\subsection{Analysis}
\label{subsec:MPXSimulationAnalysis}

We start by showing that $\mathcal{A'}$ correctly simulates $\mathcal{A}$, and in particular, that the cluster centers correctly simulate one round of $\mathcal{A}$ per phase of $\mathcal{A}'$.

\begin{lemma}
    At the end of any phase $p \in [1,\TA]$ of algorithm $\mathcal{A'}$, each cluster center $c_i$ knows the state of each of its cluster's node $v \in V_i$ at the end of round $p$ in the simulated algorithm $\mathcal{A}$.
\end{lemma}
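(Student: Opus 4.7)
The plan is to proceed by induction on the phase number $p$. I will first verify a base-case invariant established by the preprocessing, then argue that each phase preserves the invariant by carefully tracking which information each cluster center must reconstruct, and finally argue why the inter-cluster edge set $F$ (rather than the full edge set $E$) suffices.

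For the base case, I would observe that after preprocessing each cluster center $c_i$ has, via the upcast of Lemma~\ref{lem:upcast}, received $in(v)$ for every $v\in V_i$, in particular the list of $v$'s incident edges (with endpoint IDs and cluster membership, once graph decomposition is complete). Thus $c_i$ knows the initial state of every $v\in V_i$, as well as which neighbors of $v$ lie in which clusters of the decomposition. This gives the invariant at the ``start of phase $1$,'' and also makes the local computation of $m_{v,p}$ and of the sets $V_{i,p}$, $\mathcal{B}_{i,p}$, $\mathcal{M}_{i,p}$ well-defined at $c_i$.

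For the inductive step, assume $c_i$ knows each $v\in V_i$'s state at the start of round $p$ of $\mathcal{A}$. Because $\mathcal{A}$ is a \bcongest{} algorithm, to simulate $v$'s transition it suffices to know (a) the unique message $v$ broadcasts in round $p$ (if any) and (b) the messages $v$ receives in round $p$. Part (a) is computed locally at $c_i$ from the known state. For part (b), I would split each neighbor $u$ of $v$ into two cases. If $u\in V_i$, then $c_i$ already knows $u$'s state and its broadcast $m_{u,p}$, so $c_i$ can record intra-cluster messages without any communication. If $u$ lies in some other cluster $V_j$, then by the LDC property, $F$ contains at least one edge from $u$ (or from $v$'s side symmetrically) to the other cluster, and the first step of phase $p$ downcasts every pair $(e,m_{w,p})\in \mathcal{M}_{j,p}$ along the cluster tree of $V_j$ to the endpoint of $e$ in $V_j$, which then transmits $m_{w,p}$ across $e$. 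The second step upcasts every message received over an inter-cluster edge to the receiving cluster center. Crucially, since $c_i$ learned during preprocessing which of its nodes $v\in V_i$ are adjacent (in $G$) to which external node $u$, upon receiving $m_{u,p}$ from some representative edge in $F$, $c_i$ can correctly attribute this message as being delivered to \emph{every} $v\in V_i$ that is a neighbor of $u$ in $\mathcal{A}$, even if the $F$-edge does not point to $v$ itself. This is the key step and the main conceptual obstacle: one must verify that routing via a single representative per neighboring cluster, together with the globally known adjacency information at the cluster center, faithfully reproduces the broadcast semantics of $\mathcal{A}$.

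Having established that $c_i$ learns exactly the multiset of round-$p$ messages incoming to each $v\in V_i$, together with its known pre-round state, $c_i$ can locally compute each $v$'s end-of-round-$p$ state, closing the induction. For the final phase $p=\TA$, the additional downcast of outputs along the cluster tree ensures each $v$ learns its own output, which is straightforward from Lemma~\ref{lem:downcast}. The only remaining subtlety is to confirm that $t_1=O(n\log n)$ rounds suffice for the downcast in step~1 of each phase so that no message is dropped, but this follows from the bound on $|\mathcal{M}_{i,p}|$ (at most $n\cdot O(\log n)$) and the $O(\log n)$ depth of each cluster tree, again via Lemma~\ref{lem:downcast}, and does not affect correctness of the invariant itself.
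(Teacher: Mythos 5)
Your proposal is correct and follows essentially the same route as the paper's proof: induction on the phase index, with the preprocessing upcast establishing the base case, intra-cluster messages handled locally at the center, inter-cluster messages routed via a single representative $F$-edge and then upcast, and the key observation that the center's knowledge of all incident edges lets it attribute each received (broadcast) message to every affected $v \in V_i$ even when the $F$-edge does not target $v$. The one point worth making explicit (which the paper does) is that this attribution argument critically relies on $\mathcal{A}$ being a \bcongest{} algorithm — a broadcaster $u$ sends the same message to all neighbors — so delivering $u$'s message along any one edge into $V_i$ suffices; for a general \congest{} algorithm this step would fail.
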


\begin{proof}
    Let us show, by induction on phase $p \in [1,\TA]$, a slightly different statement: at the start of any phase $p \in [1,\TA]$ of algorithm $\mathcal{A'}$, each cluster center $c_i$ knows the state of each of its cluster's node $v \in V_i$ at the start of round $p$ in the simulated algorithm $\mathcal{A}$.
    
    The base case of phase $p = 1$ holds true due to cluster centers gathering their cluster nodes' inputs in the preprocessing part.
    Now, we assume the induction hypothesis holds for phase $p \in [1,\TA-1]$, and to prove the induction step we show the following claim: for any node $v \in V_i$, $c_i$ knows, by the end of phase $p$, all of the messages $v$ would have received in round $p$ of algorithm $\mathcal{A}$. 
    
    First, note that at the start of phase $p$, each cluster center $c_i$ correctly computes which nodes would broadcast in round $p$ of algorithm $\mathcal{A}$. After which, $c_i$ knows each message a node $v \in V_i$ would receive from a neighbor in the same cluster $V_i$. Now, we consider messages $v$ would receive from a neighbor $u$ in some different cluster $V_{i'}$. Note that $v$ might not receive such messages, since it could be that $(u,v) \notin F$. However, $u$'s message is correctly downcast from its cluster center $c_{i'}$ to $u$, then sent from $u$ to some node $w \in V_i$, and finally upcast from $w$ to $c_i$, such that $c_i$ knows $u$'s message by the end of phase $p$. (Note that $u$ transmits a single $O(\log n)$ bit message to $V_i$ in $\mathcal{A}$ because of the \bcongest{} nature of $\mathcal{A}$. Moreover, note that all messages are correctly downcast and upcast despite the congestion over cluster trees because we allow for $O(n \log n)$ rounds in every phase $p < \TA$, and $O(n \log n + Out)$ rounds in phase $p = \TA$.) Since $c_i$ additionally knows the IDs of both endpoints of all edges incident to any node in $V_i$, it also knows that the message sent by $u$ would be received by $v$ in round $p$ of algorithm $\mathcal{A}$. Thus, $c_i$ knows, by the end of phase $p$, all of the messages any node $v \in V_i$ would have received in round $p$ of algorithm $\mathcal{A}$. 

    Finally, the lemma is obtained by taking the above statement for $p = \TA$, and applying the claim one last time.
\end{proof}

Next, we consider the round and message complexities of $\mathcal{A'}$ (see Theorem \ref{thm:messageEfficientSimulation}). We start with the preprocessing part, where we recall that $In$ denotes the size over all nodes of the input of $\mathcal{A}$.

\begin{lemma}
\label{lem:preprocessing}
    The preprocessing steps take $O(In/ \log n + D \log n )$ time and $O(In+m \log n)$ messages.
\end{lemma}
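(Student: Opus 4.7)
The plan is to bound the three preprocessing steps separately and then add the contributions. Throughout, we rely on the bound $In \ge m\log n$ (since each node's input includes the $O(d(v)\log n)$ bits describing its incident edges), so ``$O(In)$'' absorbs any $O(m \log n)$ term coming from messages sent along each edge a polylog number of times.

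\textbf{Step 1 (computing and disseminating $n$).} I would build a BFS tree from an arbitrary leader (elected by, e.g., minimum ID discovered through a single flood) and convergecast subtree sizes to the root, then broadcast $n$ back down the tree. Standard flooding costs $O(D)$ rounds and $O(m)$ messages to build the tree; the convergecast and broadcast of a single $O(\log n)$-bit value each take $O(D)$ rounds and $O(n)$ messages along the tree. So this step contributes $O(D)$ rounds and $O(m)$ messages.

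\textbf{Step 2 (LDC decomposition).} By Lemma~\ref{lem:LDCGraphDecomposition}, the decomposition is computed in $O(\log n)$ rounds. For the message bound, I would observe that the MPX-style procedure only requires each node to exchange $O(\log n)$ messages with each of its neighbors across the $O(\log n)$ rounds of execution (for exponential-shift comparisons, cluster-ID propagation, and selecting one representative edge per neighboring cluster in $F$). Summing over all edges gives $O(m \log n)$ messages, which is absorbed in the final bound.

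\textbf{Step 3 (upcast of inputs to cluster centers).} Each cluster is spanned by a tree of depth $O(\log n)$ (Lemma~\ref{lem:LDCGraphDecomposition}), and the total amount of input information routed within all cluster trees is at most $In$ bits. Applying Lemma~\ref{lem:upcast} to the forest formed by all cluster spanning trees (with depth $d = O(\log n)$ and total input $In$), the upcast takes $O(In/\log n)$ rounds and $O(d \cdot In/\log n) = O(In)$ messages.

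\textbf{Combining.} Adding the three contributions gives round complexity $O(D) + O(\log n) + O(In/\log n) = O(In/\log n + D \log n)$ and message complexity $O(m) + O(m \log n) + O(In) = O(In + m \log n)$, as claimed. The only slightly delicate point is justifying the $O(m \log n)$ message bound for the LDC construction, since Lemma~\ref{lem:LDCGraphDecomposition} is stated in terms of rounds rather than messages; everything else follows immediately from the primitives already proved.
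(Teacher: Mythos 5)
Your proof follows essentially the same decomposition as the paper's (bounding the three preprocessing steps separately and invoking Lemma~\ref{lem:upcast} for the upcast), and the final bounds come out the same.

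One small imprecision worth flagging in Step~1: you claim ``minimum ID discovered through a single flood'' gives an $O(m)$-message leader election, but a naive min-ID flood in \congest{} can cost $\Theta(Dm)$ messages in the worst case, because a node may update and re-forward its current minimum many times. If $D$ is large this would exceed the claimed $O(In + m\log n)$ budget. The paper sidesteps this by citing the deterministic Kutten et al.\ construction \cite{kuttenjacm15}, which gives $O(D\log n)$ rounds and $O(m\log n)$ messages for leader election plus BFS tree construction --- still within the target since $In \ge m\log n$. So the lemma statement holds, but the justification for Step~1 should appeal to that result (or to some other $\tilde O(m)$-message leader election) rather than to unqualified flooding. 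The rest of your argument (Step~2's $O(m\log n)$ message bound via $O(\log n)$ rounds over $m$ edges, and Step~3's direct application of Lemma~\ref{lem:upcast} with depth $d=O(\log n)$) matches the paper exactly.
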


\begin{proof}
    It is straightforward to see that the first step takes $O(D \log n)$ rounds and $O(m \log n)$ messages (deterministically, see \cite{kuttenjacm15}). As for the second step, by Lemma \ref{lem:LDCGraphDecomposition}, computing the LDC decomposition of $G$ takes $O(\log n)$ rounds and thus $O(m \log n)$ messages. Finally, the runtime and message complexity of the third step follows from Lemma \ref{lem:upcast} and the fact that the cluster trees have depth $O(\log n)$. 
\end{proof}

As for the simulation part, we recall that $Out$ denotes the (maximum possible) size over all nodes of the output of $\mathcal{A}$.

\begin{lemma}
\label{lem:simulation}
    The simulation part takes $O(Out + \TA \, n \log n)$ time and $O(Out \log n + \BA \log^2 n)$ messages.
\end{lemma}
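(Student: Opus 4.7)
The plan is to decompose the simulation part into its constituent operations and bound time and messages separately for each. Recall that the simulation consists of $\TA$ phases, with each phase $p$ performing (i) some purely local computation at every cluster center $c_i$ to determine the broadcast set $V_{i,p}$ and the associated edge-message set $\mathcal{M}_{i,p}$; (ii) a ``first step'' in which $c_i$ downcasts each $(e,\eta)\in\mathcal{M}_{i,p}$ through its cluster tree and the receiving endpoint forwards $\eta$ across the inter-cluster edge $e\in F$; (iii) a ``second step'' in which each node upcasts any message it received over an inter-cluster edge to its cluster center; and, only in phase $p=\TA$, (iv) a final downcast of the outputs through each cluster tree. All trees have depth $d=O(\log n)$ by Lemma~\ref{lem:LDCGraphDecomposition}.

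For the time bound, the simulation description fixes each phase's first and second step at $t_1=O(n\log n)$ rounds. Two worst-case sanity checks suffice: by Lemma~\ref{lem:downcast}, the first-step downcast for a single cluster completes in $O(|\mathcal{M}_{i,p}|+d)=O(n\log n)$ rounds (since $|\mathcal{M}_{i,p}|\le |V_i|\cdot O(\log n)$ and nodes have $O(\log n)$ edges in $F$), and by Lemma~\ref{lem:upcast} the second-step upcast likewise completes in $O(n\log n)$ rounds because at most $O(n\log n)$ inter-cluster messages can arrive into any one cluster per phase. Step (iv) is a downcast of $O(Out)$ messages, which by Lemma~\ref{lem:downcast} takes $O(Out+d)=O(Out)$ rounds. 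Summing gives $O(\TA\, n\log n+Out)$.

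The main work is the message bound. The central accounting identity is $\sum_{p,i}|V_{i,p}|=\BA$, by the definition of broadcast complexity. I would attribute messages to broadcasts as follows. Each $v\in V_{i,p}$ contributes $|E_v|=O(\log n)$ edge-message pairs to $\mathcal{M}_{i,p}$, so the total number of items downcast in step (ii) across the entire execution is $O(\BA\log n)$; by Lemma~\ref{lem:downcast} with depth $O(\log n)$, this costs $O(\BA\log^2 n)$ messages. Each downcast item induces exactly one transmission across an inter-cluster edge in $F$, adding $O(\BA\log n)$ messages. In step (iii), the number of items a node needs to upcast is at most the number of inter-cluster messages it received, so the total items upcast across the execution is also $O(\BA\log n)$; by Lemma~\ref{lem:upcast} with depth $O(\log n)$, this costs another $O(\BA\log^2 n)$ messages. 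Finally, step (iv) downcasts $O(Out)$ items, costing $O(Out\log n)$ messages by Lemma~\ref{lem:downcast}. Adding everything yields $O(\BA\log^2 n + Out\log n)$.

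The only subtlety I anticipate is making the inter-cluster message count watertight: one must verify that (a) a broadcasting node $v$ generates exactly $|E_v|=O(\log n)$ outgoing inter-cluster messages (not more, because $F$ keeps only one representative per neighboring cluster) and (b) the upcast input at each phase is bounded by the total incoming inter-cluster messages, which in turn is bounded by the total outgoing ones. Both follow from the definition of the $(O(\log n),O(\log n))$-LDC decomposition and the fact that every node sending a message in a round of $\mathcal{A}$ is in some $V_{i,p}$, so no double-counting occurs. Once this bookkeeping is fixed, the rest of the proof is a direct combination of Lemmas~\ref{lem:upcast}, \ref{lem:downcast}, and~\ref{lem:LDCGraphDecomposition}.
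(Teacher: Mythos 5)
Your proof is correct and follows essentially the same route as the paper's: you attribute $O(\log n)$ downcast items to each broadcaster (since $|E_v|=O(\log n)$), use the $O(\log n)$ tree depth with Lemmas~\ref{lem:upcast} and~\ref{lem:downcast} to get $O(\log^2 n)$ messages per broadcast plus $O(Out\log n)$ for the final downcast, and sum the broadcast sets to $\BA$. The only cosmetic difference is that the paper aggregates per phase via $|U_p|$ and then sums, whereas you sum $\sum_{p,i}|V_{i,p}|=\BA$ globally from the start.
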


\begin{proof}
    The round complexity follow directly from the algorithm description: all phases up to, but not including, $\TA$ take $O(n \log n)$ rounds, and phase $\TA$ takes $O(n \log n+ Out)$ rounds. Hence in total, the round complexity is  $O(Out + \TA \, n \log n)$. 
    
    It remains to bound the message complexity. Let us consider some arbitrary phase $p \in [1,\TA]$. We bound the message complexities separately for the different steps of phase $p$, starting with the first step. Let $U_{p}$ denote the set of nodes (over all clusters) broadcasting a message in round $p$ of the simulated algorithm $\mathcal{A}$. Each broadcasting node induces at most $O(\log n)$ downcast messages since there are at most $O(\log n)$ edges in $F$ oriented from $v$ to some node in a different cluster. Thus at most $O(|U_p| \log n)$ messages are downcast over all cluster trees. Since the cluster trees have depth $O(\log n)$, then by Lemma \ref{lem:downcast} the downcast operation uses at most $O(|U_p| \log^2 n)$ messages. 
    Additionally, the first step ends with the communication over (directed) inter-cluster edges, which uses $O(|U_p| \log n)$ messages. Altogether, the first step has $O(|U_p| \log^2 n)$ message complexity. 
 
    For the second step, it is clear that at most $O(|U_p| \log n)$ messages (each of $O(\log n)$ bits) are received via inter-cluster edges, and thus upcast, over all clusters. (Note that some nodes may receive more than $O(\log n)$ messages in the second step.) Then, the $O(\log n)$ depth of the cluster trees and Lemma \ref{lem:upcast} imply that the upcast (and thus second step) uses at most $O(|U_p| \log^2 n)$ messages.
    Finally, for the third step (executed by phase $\TA$ only), at most $Out$ messages are downcast over all clusters. This implies that the third step uses at most $O(Out \log n)$ messages (by Lemma \ref{lem:downcast}).

    To sum up, the simulation part takes $O(Out + \TA \, n \log n)$ rounds and uses at most $O(Out \log n + \sum_{p \in [1,\TA]} |U_p| \log^2 n)$ messages. As $\sum_{p \in [1,\TA]} |U_p|$ corresponds to the broadcast complexity $\BA$ of $\mathcal{A}$, the lemma statement follows.
\end{proof}

Finally, the round complexity and message complexity of $\mathcal{A}'$ (see Theorem \ref{thm:messageEfficientSimulation}) follows by applying Lemmas \ref{lem:preprocessing} and \ref{lem:simulation} (and the fact that $In \geq m \log n$). Furthermore, we point out that the current simulation provides a Monte Carlo randomized algorithm $\mathcal{A'}$ even if the original algorithm $\mathcal{A}$ is deterministic due to the randomized algorithm for computing an $(O(\log n),O(\log n))$-LDC decomposition presented in Section \ref{subsec:LDC}. This can be improved by replacing the randomized LDC decomposition algorithm with a deterministic one.

\subsection{Summary of the Applications of this Simulation}
\label{subsec:simulationApplications1}

As a first warm-up application, we consider the APSP problem on weighted graphs, where each edge is labeled with a weight chosen from a range that is polynomial in $n$.
The goal is for each node to output its distance to all other nodes. 
The work of \cite{bernstein2021distributed} presents a randomized Las Vegas algorithm that computes the (exact) APSP problem in $\tilde{O}(n)$ rounds of the \bcongest{} model.  
An immediate application of Corollary~\ref{cor:messageEfficientSimulation} yields the following:

\MessageOptimalAPSP*

In the paper, we mostly focus on the APSP problem, and in particular on its unweighted version. However, the simulation given in this section is especially general, and leads to message-efficient algorithms for other problems. In the Appendix, we describe how to obtain such message-efficient algorithms for (1) exact bipartite maximum matching, and (2) neighborhood covers (in particular, those with small $o(\log n)$ radius). More concretely, we show the following two statements---Corollary \ref{cor:bipartiteMaxM} and \ref{cor:neighborhoodCover}---in Section \ref{sec:apps} of the Appendix.

\begin{restatable}{corollary}{bipartiteMaxM} \label{cor:bipartiteMaxM}
There is an algorithm in the \congest{} model with $\tilde{O}(n^2)$ message complexity and $\tilde{O}(n^2)$ round complexity that computes an exact maximum matching in bipartite graphs.
\end{restatable}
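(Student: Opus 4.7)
The plan is to apply the simulation framework of Corollary~\ref{cor:messageEfficientSimulation} to a suitable \bcongest{} algorithm $\mathcal{A}$ for exact bipartite maximum matching. For this problem the output consists of each node's matched neighbor, yielding $Out = O(n \log n) = \tilde{O}(n)$, while the input is the graph itself, bounded by $In = O((m+n)\log n) = \tilde{O}(n^2)$. Substituting into the corollary, any \bcongest{} algorithm with round complexity $\TA$ is transformed into a \congest{} algorithm with round complexity $\tilde{O}(n^2 + \TA \cdot n)$ and message complexity $\tilde{O}(n^2 + \BA)$, where $\BA \le n \cdot \TA$ trivially. Therefore the corollary reduces to exhibiting a \bcongest{} algorithm for exact bipartite matching with $\TA = \tilde{O}(n)$ rounds.

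For such a \bcongest{} algorithm, I would adapt the Hopcroft--Karp augmenting-path scheme to the broadcast model. The algorithm iterates $O(\sqrt{n})$ phases; in each phase we (i) run an alternating layered BFS from all currently unmatched left-vertices to build the layered DAG of shortest augmenting paths, and (ii) extract and flip the matching along a maximal set of vertex-disjoint augmenting paths. Step (i) is intrinsically broadcast-based, since each node broadcasts its BFS label exactly once upon discovery. Step (ii) can be realized via pipelined reverse token-passing along the backward edges of the layered DAG, with random priorities resolving conflicts at shared vertices. Both steps of each phase respect the \bcongest{} constraint and together yield a round-efficient augmenting-path phase.

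The main obstacle is achieving the bound $\TA = \tilde{O}(n)$ across all $O(\sqrt{n})$ phases; a naive per-phase cost of $O(n)$ rounds would accumulate to $\tilde{O}(n^{1.5})$ and yield $\tilde{O}(n^{2.5})$ simulated \congest{} rounds, which is insufficient. To achieve the target, I would plug in an already-existing $\tilde{O}(n)$-round exact bipartite matching algorithm for \congest{} from the literature, whose communication pattern is essentially BFS-style label propagation and thus fits the broadcast model without overhead; alternatively, one can amortize the per-phase BFS and token-passing costs across phases using the Congestion $+$ Dilation framework (Theorem~\ref{thm:congestionPlusDilation}), exploiting that each node broadcasts only $O(1)$ times per phase in each BFS. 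Feeding this \bcongest{} algorithm into Corollary~\ref{cor:messageEfficientSimulation} yields the claimed $\tilde{O}(n^2)$-round, $\tilde{O}(n^2)$-message \congest{} algorithm for exact bipartite maximum matching.
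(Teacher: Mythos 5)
Your high-level plan is exactly the one the paper uses: plug a $\tilde{O}(n)$-round \bcongest{} bipartite matching algorithm, with $In=\tilde O(n^2)$ and $Out=\tilde O(n)$, into the simulation framework of Section~\ref{sec:simulation}. That reduction is correct. However, there is a genuine gap in the step where you must actually \emph{exhibit} the $\tilde{O}(n)$-round \bcongest{} algorithm, and both of the alternatives you offer have problems.

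Your primary suggestion is a Hopcroft--Karp-style scheme with $O(\sqrt n)$ phases, each phase being an alternating BFS followed by extraction of a maximal set of disjoint shortest augmenting paths. As you yourself observe, a single phase can require $\Theta(n)$ rounds (the layered BFS can go to depth $\Theta(n)$), yielding $\tilde O(n^{1.5})$ rounds overall, and the extraction of a \emph{maximal} set of vertex-disjoint augmenting paths in \bcongest{} is itself nontrivial and not obviously $\tilde O(1)$-overhead. Your proposed fallback of ``amortizing'' phases via the Congestion~$+$~Dilation framework does not apply here: Theorem~\ref{thm:congestionPlusDilation} schedules a collection of \emph{independent} algorithms concurrently, whereas Hopcroft--Karp phases are inherently sequential (the layered DAG of phase $i{+}1$ depends on the matching produced in phase $i$), so there is nothing to run in parallel. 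Your remaining fallback, ``plug in an existing $\tilde{O}(n)$-round algorithm from the literature whose communication is BFS-style,'' is essentially the right move, but as written it is an unfulfilled promissory note.

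The paper closes this gap by citing the Ahmadi--Kuhn--Oshman algorithm (DISC 2018), which is \emph{not} Hopcroft--Karp: it runs $n$ phases, and phase $i$ is allocated only $O(s/(s-i))$ rounds (where $s$ is an upper bound on the matching size), so that the total round count telescopes to $O(n\log n)$ by a harmonic sum, while each node performs $O(1)$ broadcasts per phase, giving $\BA=O(n^2)$. The paper also must (and does, informally) argue that this algorithm's per-round communication is a true broadcast --- each node sends the same $O(\log n)$-bit label to all neighbors during alternating-path propagation and backward confirmation --- so that it is a legitimate \bcongest{} algorithm. Your proposal is missing precisely this choice of algorithm and the per-phase round budgeting that makes the total $\tilde O(n)$ rather than $\tilde O(n^{1.5})$.
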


For the below statement, we quickly state the standard definition of a \emph{$(k,W)$-sparse neighborhood cover}, where $k$ and $W$ are any positive integers. The cover is a collection of trees $\mathcal{C}$ such that (1) every tree in $\mathcal{C}$ has depth at most $O(W \cdot k)$, (2) each vertex $v$ appears in at most $\tilde O(k \cdot n^{1/k})$ different trees, and (3) there exists some tree in $\mathcal{C}$ that contains the entire $W$-neighborhood of $v$.

\begin{restatable}{corollary}{neighborhoodCover} \label{cor:neighborhoodCover}
For positive integers $k$ and $W$, there is a $\congest$ algorithm that, with high probability, computes a $(k,W)$-sparse neighborhood cover in $\tilde O(n^2)$ rounds and with $\tilde O(n^2)$ message complexity.
\end{restatable}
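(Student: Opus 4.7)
The plan is to construct a \bcongest{} algorithm for the $(k,W)$-sparse neighborhood cover problem whose round complexity is $\tilde{O}(Wk^2)$ and whose broadcast complexity is $\tilde{O}(nk)$, and then invoke Corollary~\ref{cor:messageEfficientSimulation} to obtain the desired \congest{} algorithm. For the \bcongest{} algorithm I would use the classical recursive ball-growing construction of Awerbuch--Peleg (equivalently, a Baswana--Sen-style randomized clustering) executed in $k$ levels. At each level $i \in [k]$: (a)~each current cluster is independently sampled with probability $n^{-1/k}$, (b)~each sampled cluster grows a BFS ball of depth $W$, and (c)~each node joins, with tie-breaking, the sampled cluster that first reaches it, and the resulting BFS tree is added to the cover. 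The union of the spanning trees across the $k$ levels constitutes the output, and the standard analysis shows that, with high probability, each tree has depth $O(Wk)$, each node lies in at most $\tilde{O}(k \cdot n^{1/k})$ trees, and every $W$-neighborhood is contained in some single tree---the three defining properties of a $(k,W)$-sparse neighborhood cover.

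For the complexity of this \bcongest{} algorithm, the round complexity is $\TA = O(Wk^2)$, since each of the $k$ sequential levels runs a BFS up to depth $O(Wk)$. The key observation for the broadcast complexity is that at each level every node broadcasts only $O(1)$ times, namely (essentially) exactly once, when it first commits to its cluster ID at that level and forwards it so that the ball-growing can continue past it. Over the $k$ levels this gives $\BA = O(nk)$ broadcasts in total, which is $\tilde{O}(n)$ in the regime of interest (the surrounding discussion explicitly highlights covers of radius $o(\log n)$, i.e., $Wk = o(\log n)$).

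I would then apply Corollary~\ref{cor:messageEfficientSimulation} to this algorithm. The input size is $In = \tilde{O}(m \log n) = \tilde{O}(n^2)$, and the output per node is its list of $\tilde{O}(k \cdot n^{1/k})$ cluster memberships together with its tree-parent pointers, giving total output $Out = \tilde{O}(n^{1+1/k} \cdot k) = \tilde{O}(n^2)$. The simulation then yields round complexity $\tilde{O}(In + Out + \TA \cdot n) = \tilde{O}(n^2 + Wk^2 n)$ and message complexity $\tilde{O}(In + Out + \BA) = \tilde{O}(n^2)$. Under the implicit assumption $Wk^2 = \tilde{O}(n)$---which is in any case necessary for the advertised $\tilde{O}(n^2)$ round bound to be meaningful, and which can be enforced trivially by falling back to a single BFS-spanning-tree cover whenever $Wk$ would exceed the diameter---both quantities are $\tilde{O}(n^2)$, establishing the corollary.

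The main obstacle I foresee is rigorously arranging the $k$-level ball-growing so that each node truly broadcasts only $O(1)$ times per level; a naive implementation might re-broadcast whenever a node's tentative cluster ID is overwritten as different BFS waves arrive. One must therefore design the ball-growing so that each node commits once per level---for instance, by running each level in two passes, the first determining each node's final cluster ID via a distance-minimizing BFS (using min-aggregation, so no rebroadcasts are needed), and the second propagating the committed ID as a single broadcast---or by using the shared-randomness scheduling of Theorem~\ref{thm:congestPlusDilationBFS} to batch the level's BFSes so that each node's contribution at that level coalesces into a single broadcast.
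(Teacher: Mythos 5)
Your high-level plan---build a \bcongest{} neighborhood-cover algorithm with bounded broadcast complexity and then invoke Corollary~\ref{cor:messageEfficientSimulation}---is exactly the paper's approach, and your accounting of $In$ and $Out$ is reasonable. However, the specific construction you propose does not yield a neighborhood cover, and the central broadcast-complexity estimate $\BA = O(nk)$ is wrong for any construction that does.

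The problem is that in step~(c) you have each node ``join, with tie-breaking, the sampled cluster that \emph{first} reaches it.'' That rule produces a \emph{partition} of $V$ at each level (the Baswana--Sen / MPX pattern), not a cover. A $(k,W)$-sparse neighborhood cover must satisfy property~(3): for every $v$ there is a \emph{single} tree in $\mathcal{C}$ containing $v$'s entire $W$-neighborhood. Once each node commits to a unique cluster per level, nothing guarantees that any one ball contains the whole $W$-ball around $v$; indeed $v$'s $W$-neighborhood will generically straddle several of your disjoint clusters. This is the essential difference between a low-diameter decomposition and a neighborhood cover: the latter requires overlapping clusters. Consequently, the ``key observation'' that each node broadcasts $O(1)$ times per level is both unattainable and unnecessary. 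In a correct cover construction (Awerbuch--Peleg style, or the algorithm of Elkin the paper actually uses), a given vertex is visited by---and must forward---$\tilde{O}(n^{1/k})$ distinct BFS-explorations per phase; that is precisely the sparsity parameter in property~(2), and it is also what Lemma~A.5 of Elkin's paper establishes. So the honest broadcast complexity is $\BA = \tilde{O}(k\,n^{1+1/k})$, not $O(nk)$. This is still $\tilde{O}(n^2)$ (since $k\,n^{1/k} = \tilde O(n)$ for all $k\ge 1$), so your final bound survives, but the reasoning behind it---and the remedies you offer for preventing rebroadcasts---address a non-issue while leaving the actual $\tilde{O}(n^{1/k})$-fold re-broadcast unaccounted for. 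The fix is simply to drop the ``commit to one cluster'' rule, use Elkin's cover construction as is, and cite the per-phase visit bound to get $\BA = \tilde{O}(k\,n^{1+1/k}) = \tilde O(n^2)$ before plugging into Corollary~\ref{cor:messageEfficientSimulation}, which is what the paper does.
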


\section{Message-Time Trade-offs for APSP}
\label{sec:messageRoundTradeoffs}
In Section \ref{sec:simulation}, we show that message-optimality---more precisely, $\tilde{O}(n^2)$ message complexity---is achievable for weighted APSP, and several other problems. However, this comes at the cost of a high and clearly non-optimal round complexity of $\tilde{O}(n^2)$. On the other hand, round-optimal (i.e, with $\tilde{O}(n)$ runtime) algorithms for weighted APSP exist (e.g., \cite{bernstein2021distributed}) but they are not message-optimal---they have $\tilde{O}(n^3)$ message complexity.
In this section, we provide a distributed \congest{} algorithm achieving a message-time trade-off for unweighted APSP that provides a natural, linear trade-off between the above mentioned two points, i.e., ($\tilde{O}(n^2)$ rounds, $\tilde{O}(n^2)$ messages) at one end and ($\tilde{O}(n)$ rounds, $\tilde{O}(n^3)$ messages) at the other:

\APSPTradeoff*

Recall that we obtained the results in Section \ref{sec:simulation}  by presenting a general simulation theorem for \bcongest{} algorithms.
Here, Theorem \ref{thm:APSPTradeoff} is also obtained via a simulation, but this simulation applies to \bcongest{} algorithms that satisfy an additional constraint: that is, the algorithm should be \textit{aggregation-based}, in the sense that there is an aggregation function that, intuitively speaking, allows us to replace any collection of messages addressed to the same destination by equivalent messages that have a length of only $\tilde O(1)$ bits:

\begin{definition}[Aggregation-based Algorithm]
\label{def:aggregationBasedAlgorithm} 
Consider a \bcongest{} algorithm $\mathcal{A}$. For an arbitrary node $v$ and an arbitrary round $r$, let $\mathbb{M}_{v, r}$ denote the set of  possible messages that $v$ may receive in round $r$, let $\mathcal{P}(\mathbb{M}_{v, r})$ denote its power set, and let $f_{v, r}$ denote the local function computed by node $v$ at the end of round $r$.
Algorithm $\mathcal{A}$ is an \textit{aggregation-based} algorithm if, for all nodes $v$ and rounds $r$, there exists a function $agg_{v,r} : \mathcal{P}(\mathbb{M}_{v, r}) \to \mathcal{P}(\mathbb{M}_{v, r})$ such that, for any subset $M \subseteq\mathbb{M}_{v, r}$: 
\begin{itemize} 
 \item $agg_{v,r}(M)$ can be represented in $\tilde{O}(1)$ bits,\footnote{Note that $agg_{v,r}(M)$ returns a subset of the messages of $M$ rather than just a single message.} and
\item for any partition
$(M_1',\dots,M_k')$ of $M$, it holds that
$$\textstyle f_{v,r}(\textsf{state}_v, M) =f_{v,r}(\textsf{state}_v, \bigcup_{i=1}^k agg_{v, r}(M_i)).$$
\end{itemize}
\end{definition}

To take full advantage of the message improvements given by the new simulation, we consider \bcongest{} algorithms that we call {$\ell$-decomposable}:  
\begin{definition}
An \textit{$\ell$-decomposable} algorithm $\mathcal{A}$ consists of $\ell$ completely independent \bcongest{} algorithms $\mathcal{A}_1, \mathcal{A}_2, \ldots, \mathcal{A}_\ell$ that we call \textit{components}. 
Every node $v$ locally computes its final output from its $\ell$ outputs, one from each of the $\ell$ components.
\end{definition}

We now give an overview of the underlying techniques of our simulation. The basis for our simulation are the cluster hierarchies defined by Baswana and Sen \cite{BaswanaSenRSA2007} as part of their classical randomized spanner algorithm. 
Baswana and Sen present an algorithm that, for any input graph $G = (V, E)$ and any integer $\kappa \ge 1$, constructs $(2\kappa-1)$-spanner $H = (V, E_H)$ of $G$ with $O(n^{1+1/\kappa})$ edges.
While the final output of this algorithm is a spanning subgraph $H$ of $G$, a byproduct of this algorithm is a $(\kappa+1)$-level hierarchy of clusters (to be defined precisely in the next subsection). It is this Baswana-Sen cluster hierarchy that we utilize for our simulation. 
However, using a single Baswana-Sen cluster hierarchy leads to excessive congestion on a few edges and very little congestion on many edges. 
A key idea to bypass this congestion bottleneck is to use not one, but a small number $\zeta$ of independently constructed Baswana-Sen cluster hierarchies. 
Then the set of $\ell$ components of an $\ell$-decomposable algorithm $\mathcal{A}$ can be partitioned into $\zeta$ equal-sized batches and each batch can be assigned a different Baswana-Sen cluster hierarchy. 
This simple and natural idea plays a crucial role in reducing the maximum congestion of an edge and allows us to appeal to the congestion plus dilation framework (see Section \ref{section:congestionPlusDilation}) to bound the round complexity of our simulation.
One additional requirement of our simulation is that each proper subtree of a cluster in the Baswana-Sen cluster hierarchy needs to have a relatively small number of nodes. 
To ensure this we need an additional pruning phase.
The final product is an \textit{ensemble of pruned Baswana-Sen cluster hierarchies} that we then use in our simulation.

\subsection{Ensemble of Pruned Baswana-Sen Cluster Hierarchies}
\label{subsec:BaswanaSen}

Let $\eps \in [\frac{1}{\Theta(\log n)},1]$ be some input parameter. 
Let $\kappa := \keps{}$.
First, we compute a sequence of $\kappa+1$ subsets of vertices $S_0,S_1,\ldots,S_{\kappa}$ via random sampling as follows. We define $S_0 = V$, $S_{\kappa} = \emptyset$, and for any $i \in [1,\kappa-1]$, $S_i$ is obtained by sampling nodes independently from $S_{i-1}$ with probability $n^{-\eps}$. 
These subsets will form centers of clusters defined below.

Second, for each level $i = 0, 1, \ldots, \kappa$, we compute three objects:
\begin{itemize}
    \item \textit{clustering} $\mathcal{C}_i$; each clustering $\mathcal{C}_i$ being a collection of vertex-disjoint clusters, 
    \item \textit{low-degree vertex subset} $L_i$, and 
    \item \textit{inter-cluster communication edge subset} $F_i$. 
\end{itemize}
The sequence $(\mathcal{C}_i, L_i, F_i)_{i=0}^{\kappa}$ consisting of $\kappa+1$ levels defines a \textit{Baswana-Sen cluster hierarchy.} We now describe its construction in detail.

\begin{itemize}
    \item \textbf{Bottom level:} At the bottom level, i.e., level $i = 0$, the clustering $\mathcal{C}_0 := \{\{v\} \mid v \in V\}$. Furthermore, the low-degree vertex subset $L_0$ and the inter-cluster communication edge subset $F_0$ are both empty. 
    For any cluster $C = \{v\}$ in $\mathcal{C}_0$, we designate $v$ as the \textit{center} of cluster $C$.
    Note that this means that the vertices in $S_0$ form the centers of clusters in $\mathcal{C}_0$.
    Each cluster $C \in \mathcal{C}_0$ can be viewed as (trivial) rooted tree.
    We use $V_0$ to denote the set of vertices that belong to clusters in $\mathcal{C}_0$. Clearly, $V_0 = V$. More generally, we will use $V_i$ to denote the vertices that belong to clusters in $\mathcal{C}_i$.
    
    \item \textbf{Levels $i = 1, 2, \ldots, \kappa-1$:} We obtain the clustering $\mathcal{C}_{i+1}$, low-degree vertex subset $L_{i+1}$, and inter-cluster communication edge subset $F_{i+1}$ for level $i+1$, $0 \le i \le \kappa-2$, from level $i$ as follows. Let $\mathcal{R}_i \subseteq \mathcal{C}_i$ denote the subset of clusters whose centers are in $S_{i+1}$; we call these \textit{sampled $i$-clusters}. Consider each non-sampled $i$-cluster $C$, i.e., a cluster $C \in \mathcal{C}_i \setminus \mathcal{R}_i$ and each node $v \in C$. There are two cases depending on the neighborhood of $v$.
    \begin{itemize}
        \item[(i)] \textbf{$v$ has a neighbor that belongs to a sampled $i$-cluster.} Then $v$ joins an arbitrary sampled $i$-cluster in its neighborhood, via an edge to some arbitrary neighbor in that cluster. The edge along which $v$ joins the neighboring sampled $i$-cluster is called a \textit{cluster edge} and is denoted by $parent(v, i+1)$. This process results in the growth of clusters in $\mathcal{R}_i$; this set of sampled $i$-clusters, along with their newly-joined nodes, forms the new clustering $\mathcal{C}_{i+1}$. 
        Note that each cluster in $\mathcal{C}_{i+1}$ can be viewed as a rooted tree, rooted at a center in $S_{i+1}$.     
        
        \item[(ii)] \textbf{$v$ does not have a neighbor that belongs to a sampled $i$-cluster.} Then node $v$ is added to $L_{i+1}$.
        Furthermore, for each neighboring cluster in $\mathcal{C}_i$, one (undirected) edge from $v$ to some (arbitrary) vertex in that cluster is added to $F_{i+1}$.
        Note that the vertex subsets $L_{i+1}$ and $V_{i+1}$ form a partition of $V_i$.
    \end{itemize}
    \item \textbf{Top level:} At the top level, $C_{\kappa} = \emptyset$ and thus $V_{\kappa} = \emptyset$. We set $L_{\kappa} = V_{\kappa-1}$ and for each $v \in L_\kappa$ and each neighboring cluster $C \in \mathcal{C}_{\kappa-1}$, we add one (undirected) edge from $v$ to some vertex in $C$ to $F_{\kappa}$.
\end{itemize}
The set of all \textit{inter-cluster communication edges} $F$ is simply $\cup_{i=0}^{\kappa} F_i$. This completes the construction of the Baswana-Sen cluster hierarchy. The following theorem states well-known properties of the Baswana-Sen cluster hierarchy \cite{BaswanaSenRSA2007}. 

\begin{theorem}
\label{theorem:BaswanaSenProperties}
The Baswana-Sen cluster hierarchy $(\mathcal{C}_i, L_i, F_i)_{i=0}^{\kappa}$ has the following properties:
\begin{itemize}
\item[(a)] For any level $i \in [0,\kappa-1]$, the clustering $\mathcal{C}_i$ is a collection of disjoint clusters of $V_i$ with (strong) radius $i$.
\item[(b)] The following statement holds with high probability: for any level $i \in [1,\kappa]$, for any node $v \in L_i$, there are at most 
$O(n^\eps \log n)$ edges in $F_i$ incident on $v$, each connecting $v$ to a distinct cluster in $\mathcal{C}_{i-1}$.
\item[(c)] Consider an edge $(u, v) \in E$ and suppose that $u \in L_i$ and $v \in L_j$ for $i \le j$. Then either (1) some cluster in $\mathcal{C}_{i-1}$ contains both $u$ and $v$, or (2) there exists some inter-cluster communication edge $e = (u, w) \in F_i$ and some cluster in $\mathcal{C}_{i-1}$ containing both $w$ and $v$. 
\end{itemize}
\end{theorem}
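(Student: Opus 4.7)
The plan is to prove the three properties largely independently: (a) by induction on the level $i$, (b) via a random-sampling analysis combined with a union bound, and (c) by a simple case analysis on the clusters of $\mathcal{C}_{i-1}$ containing $u$ and $v$.

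For (a), I would induct on $i$. The base case $i = 0$ is immediate, since $\mathcal{C}_0$ consists of disjoint singletons of strong radius $0$. For the inductive step, every cluster in $\mathcal{C}_{i+1}$ arises by taking a sampled $i$-cluster $C \in \mathcal{R}_i$ with center $c$ (which has strong radius at most $i$ by the IH) and augmenting it with the non-sampled nodes $v$ that choose to join $C$, each attaching via its unique edge $parent(v,i+1)$ to some neighbor in $C$. Such a newly joined $v$ reaches $c$ in one hop into $C$ followed by at most $i$ intra-cluster hops, giving strong radius at most $i+1$. Disjointness of $\mathcal{C}_{i+1}$ is preserved since each non-sampled node picks exactly one $parent(v,i+1)$, and the sampled $i$-clusters are themselves pairwise disjoint by the IH.

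For (b), I would fix a level $i \in [1, \kappa-1]$ and a node $v \in L_i$, and let $N_v$ denote the number of distinct clusters of $\mathcal{C}_{i-1}$ containing a neighbor of $v$; by the construction of $F_i$, this is exactly the number of edges of $F_i$ incident on $v$. Since $v \in L_i$, none of these $N_v$ clusters lies in $\mathcal{R}_{i-1}$. Conditioning on the clustering $\mathcal{C}_{i-1}$ (fully determined by the first $i-1$ sampling rounds), the centers of these $N_v$ neighboring clusters are sampled into $S_i$ independently, each with probability $n^{-\eps}$. Hence, for a sufficiently large constant $C$,
$$\Pr[v \in L_i \text{ and } N_v \geq C n^{\eps} \log n] \;\leq\; (1 - n^{-\eps})^{C n^{\eps} \log n} \;\leq\; n^{-C},$$
and a union bound over all nodes and all levels $i \in [1, \kappa-1]$ yields the claim for these levels. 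For the top level $i = \kappa$, where $L_\kappa = V_{\kappa-1}$ and $S_\kappa = \emptyset$ so no cluster is ``sampled,'' I would instead bound $N_v$ by the total number of clusters in $\mathcal{C}_{\kappa-1}$, which equals $|S_{\kappa-1}|$; a standard Chernoff bound, together with $(\kappa-1)\eps \leq 1$, shows that $|S_{\kappa-1}| = O(n^{\eps} \log n)$ w.h.p.

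For (c), I first note that at each level the sets $V_{k+1}$ and $L_{k+1}$ partition $V_k$, so $V_{j-1} \subseteq V_{i-1}$ for $j \geq i$; thus both $u \in L_i$ and $v \in L_j$ lie in $V_{i-1}$ and belong to some clusters $C_u, C_v \in \mathcal{C}_{i-1}$. If $C_u = C_v$ then case (1) holds. Otherwise $v$ is a neighbor of $u$ lying in a distinct cluster $C_v$, so by the construction of $F_i$ (which adds, for each neighboring cluster of the node $u \in L_i$, one edge from $u$ to some vertex in that cluster) there exists $w \in C_v$ with $(u,w) \in F_i$, yielding case (2). The main obstacle I anticipate is the boundary case $i = \kappa$ in (b), where the sampling argument used for lower levels does not apply directly and must be replaced by a separate Chernoff bound on $|S_{\kappa-1}|$.
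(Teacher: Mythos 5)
The paper presents this theorem without proof, stating it as a well-known fact about the Baswana--Sen construction and citing \cite{BaswanaSenRSA2007}; there is no internal argument to compare yours against. Your proof is correct and takes the natural route. The delicate step is part (b): the reason the probability bound works is that $N_v$ is a function of the first $i-1$ sampling rounds, while the event $v \in L_i$ requires every one of those $N_v$ neighboring cluster centers (all lying in $S_{i-1}$) to be missed by the fresh, independent sampling of $S_i$; you correctly capture this coupling via the joint probability $\Pr[v \in L_i \text{ and } N_v \ge C n^{\eps}\log n] \le (1-n^{-\eps})^{Cn^{\eps}\log n}$, then union bound over nodes and levels. You also correctly flag that this sampling argument degenerates at $i = \kappa$ (where $S_\kappa = \emptyset$, so the ``non-sampled'' condition gives no information), and replace it with a Chernoff bound on $|S_{\kappa-1}|$; the inequality $(\kappa-1)\eps \le 1$ with $\kappa = \lceil 1/\eps\rceil$ indeed gives expectation at most $n^{\eps}$, which is exactly the needed patch. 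Part (a) correctly uses that augmenting a sampled cluster with newly joined nodes only adds edges to the induced subgraph, so the strong-radius bound on old nodes persists and new nodes add one hop; part (c) is the right case analysis once you observe $v \in V_{j-1} \subseteq V_{i-1}$.
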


The construction of the Baswana-Sen cluster hierarchy described above can be implemented in a straightforward manner in the \congest{} model, leading to the following well-known theorem.

\begin{theorem}
\label{theorem:BaswanaSenRunningTime}
Let $G = (V,E)$ be an unweighted graph. Then, there exists a \congest{} algorithm running in $O(\kappa)$ rounds and using $O(\kappa \cdot m)$ messages for computing a
Baswana-Sen cluster hierarchy with $\kappa + 1$ levels. 
\end{theorem}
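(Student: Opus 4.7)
The plan is to build the hierarchy bottom up, executing the iteration for each level in $O(1)$ \congest{} rounds and $O(m)$ messages, so that summing over the $\kappa+1$ levels yields the claimed $O(\kappa)$ rounds and $O(\kappa \cdot m)$ messages. The key invariant I would maintain is the following: at the end of the iteration that builds level $i$, every vertex $v \in V_i$ stores the ID of its cluster center $c_v^{(i)}$ together with the precomputed sampling bit of $c_v^{(i)}$ for each of the remaining levels $i+1, i+2, \ldots, \kappa$. Since $\kappa = \lceil 1/\eps \rceil = O(\log n)$ in the regime considered, the center ID and the packet of at most $\kappa$ future-level sampling bits together fit inside a single $O(\log n)$-bit message. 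Each vertex $c \in S_0 = V$ locally precomputes its own $\kappa$-bit string of sampling decisions using private randomness at the very start, so the information that needs to travel is never generated online.

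For the inductive step from level $i$ to level $i+1$, I would proceed in two constant-round stages. First, every vertex $v \in V_i$ broadcasts to all of its graph neighbors the pair consisting of $c_v^{(i)}$ and the status of $c_v^{(i)}$ at level $i+1$, at a cost of one round and $O(m)$ messages. After this round every $v$ locally knows (i) whether its own level-$i$ cluster is sampled, from the status of $c_v^{(i)}$ which it holds by the invariant, and (ii) which of its neighbors belong to sampled clusters. Second, each $v$ in a non-sampled cluster either picks an arbitrary neighbor $w$ whose cluster is sampled, designates $parent(v, i+1) = (v,w)$, and sends one join message along that edge, or else places itself in $L_{i+1}$ and adds one representative edge per distinct adjacent level-$i$ cluster to $F_{i+1}$, notifying the chosen neighbor in each case. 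This second stage also uses $O(1)$ rounds and $O(m)$ messages. The new cluster center $c_v^{(i+1)}$ of a joining vertex equals $c_w^{(i)}$, and its remaining sampling profile was already delivered to $v$ in the first stage, so the invariant is reestablished; vertices that stayed inside sampled clusters trivially retain their previous knowledge. The top level $i = \kappa$ is handled by the same template with $\mathcal{C}_\kappa = \emptyset$, so that every vertex in $V_{\kappa-1}$ lands in $L_\kappa$ and the only per-vertex work is selecting one edge per neighboring level-$(\kappa-1)$ cluster to place in $F_\kappa$.

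The main subtlety is avoiding a naive $\Theta(\kappa^2)$-round implementation, in which each cluster center at level $i$ separately broadcasts its level-$(i+1)$ sampling bit along its cluster tree of radius up to $i$. The invariant bypasses this by piggybacking the entire vector of future-level sampling bits of the center onto the single message that the new cluster member receives the moment it joins the cluster; since $\kappa = O(\log n)$, this extra information fits inside one \congest{} message and costs no extra rounds or messages. With the invariant in place, each of the $\kappa+1$ level transitions contributes $O(1)$ rounds and $O(m)$ messages, which telescopes to the stated $O(\kappa)$ round complexity and $O(\kappa \cdot m)$ message complexity.
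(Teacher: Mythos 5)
The paper itself gives no proof of this theorem: it says only that the construction ``can be implemented in a straightforward manner in the \congest{} model, leading to the following well-known theorem.'' Your write-up supplies a genuine argument for this folklore statement, and the approach is sound.

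The one place you should tighten the prose is the description of Stage~1. You first say that each $v \in V_i$ broadcasts ``the pair consisting of $c_v^{(i)}$ and the status of $c_v^{(i)}$ at level $i+1$,'' but later (and in the invariant, and in the closing paragraph about piggybacking) you rely on the fact that a newly joining node receives the \emph{entire} vector of future sampling bits of its new center. These two descriptions do not match: if only the level-$(i+1)$ bit is sent, a node $v$ that joins at level $i+1$ cannot, at the level-$(i+1)\to(i+2)$ transition, broadcast the level-$(i+2)$ status of $c_v^{(i+1)}$, and the invariant collapses. The fix you already have in mind is correct --- broadcast $(c_v^{(i)}, b_{i+1}, b_{i+2}, \ldots, b_\kappa)$ where $b_j$ is the $S_j$-membership bit of $c_v^{(i)}$ --- and, as you observe, this fits in one $O(\log n)$-bit message because the paper only ever invokes the theorem with $\kappa = \lceil 1/\eps\rceil = O(\log n)$ (since $\eps \ge 1/\Theta(\log n)$). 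It is worth stating explicitly that this restriction on $\kappa$ is what makes the one-message piggyback legal; a fully general $\kappa$ would need a different mechanism, but the paper never needs one. With the Stage~1 description aligned with what the invariant requires, the rest is correct: Stage~1 is $O(1)$ rounds and $O(m)$ messages, Stage~2 (join messages plus one $F_{i+1}$-notification per distinct adjacent cluster) is again $O(1)$ rounds and at most $\sum_v \deg(v) = O(m)$ messages, and summing over the $\kappa+1$ level transitions gives $O(\kappa)$ rounds and $O(\kappa m)$ messages.
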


\textbf{Pruning the clusters.}
In order to use the Baswana-Sen cluster hierarchy for efficient simulation, we need a pruning step to avoid clusters with large subtrees. In fact, clusters with large subtrees suffer from high maximum edge congestion in the simulation. More concretely, we want to avoid having any clusters in which any proper subtree has $\omega(n^{1-\eps})$ nodes. 

We now describe how the pruned Baswana-Sen hierarchy $(\mathcal{C}^*_i, L_i, F^*_i)_{i=0}^{\kappa}$ is obtained from the Baswana-Sen hierarchy $(\mathcal{C}_i, L_i, F_i)_{i=0}^{\kappa}$. 
This pruning step involves (i) pruning each cluster at each level, if necessary and (ii) adding inter-cluster communication edges. 
\begin{itemize}
    \item \textbf{Pruning clusters:} Consider each level $i \in [1,\kappa-1]$ (for levels 0 and $\kappa$, pruning is not required) and each cluster $C \in \mathcal{C}_i$.
    If every proper subtree in $C$ contains fewer than $n^{1-\eps}$ nodes, then there is no need to prune $C$. Otherwise, we repeatedly look for the deepest node, say $u^*$, such that the subtree rooted at $u^*$ contains at least $n^{1-\eps}$ nodes.
    We then split off the subtree rooted in $u^*$ into its own cluster. Note that this can happen at most $O( n^{\eps})$ times at a particular level $i$, by a simple counting argument over the $n$ nodes, thereby adding at most $O( n^{\eps})$ clusters to the existing clustering $\mathcal{C}_i$. Once there exists no such node for all clusters in a clustering $\mathcal{C}_i$, we have the new level-$i$ clustering $\mathcal{C}^*_i$. 
    Note that while this pruning procedure is described as a seemingly sequential algorithm, all clusters within a clustering $\mathcal{C}_i$ at level $i$ can be processed in parallel using upcast, leading to an algorithm that runs in $O(\kappa^2)$ rounds, using $O(\kappa \cdot n)$ messages.  
    \item \textbf{Adding inter-cluster communication edges:} Consider some level $i \in [1,\kappa]$. (For level 0, $F^*_0$ is empty, similarly to $F_0$.) Then, for each node $v \in L_{i}$ and for each neighboring cluster $C \in \mathcal{C}^*_{i-1}$, we add to $F^*_i$ one edge from $v$ to some node $C$. Because $v$ has only $O(n^{\eps})$ neighboring clusters in $\mathcal{C}_{i-1}$, and only $O(n^{\eps})$ new clusters are created per clustering, there are only $O(n^{\eps})$ edges incident to $v$ in $F^*_i$.
\end{itemize}
We call the result \textit{pruned Baswana-Sen cluster hierarchy}. 

The following two corollaries are versions of Theorems \ref{theorem:BaswanaSenProperties} and \ref{theorem:BaswanaSenRunningTime} that apply to pruned Baswana-Sen cluster hierarchies.
\begin{corollary}
\label{thm:PrunedBaswanaSenRunningProperties}
Properties (a), (b), and (c) in Theorem \ref{theorem:BaswanaSenProperties} also hold for a pruned Baswana-Sen cluster hierarchy. In addition, each proper subtree of a cluster in a pruned Baswana-Sen cluster hierarchy contains at most $O(n^{1-\epsilon})$ nodes. 
\end{corollary}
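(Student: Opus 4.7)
The plan is to verify each of the four claims by exploiting how closely pruning mirrors the original construction. I would first handle property (a): each pruning step takes a cluster $C \in \mathcal{C}_i$ of strong radius $i$ and cuts off the subtree rooted at some internal node $u^*$. The split-off piece is its own new cluster rooted at $u^*$ whose tree has depth at most $i$ in its induced subgraph, and the residual cluster keeps its original root and tree shape, so both retain strong radius at most $i$. Levels $0$ and $\kappa$ are not pruned at all, so (a) transfers from the original hierarchy directly.

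Next I would establish the subtree-size bound, since it is the combinatorial backbone of the remaining claims. At each internal level $i$, the pruning procedure repeatedly peels off the \emph{deepest} node $u^*$ whose subtree has $\ge n^{1-\eps}$ nodes; when it halts, no proper subtree of any resulting cluster still contains $\ge n^{1-\eps}$ nodes, since the deepest offender would otherwise have been split. A counting argument then gives that the number of splits at level $i$ is at most $n/n^{1-\eps} = O(n^\eps)$.

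For property (b), fix $v \in L_i$. By the unpruned (b), $v$ neighbors at most $O(n^\eps \log n)$ clusters of $\mathcal{C}_{i-1}$. Each original cluster $C \in \mathcal{C}_{i-1}$ is broken into $t_C + 1$ pieces in $\mathcal{C}^*_{i-1}$, where $\sum_C t_C = O(n^\eps)$ by the previous step. Summing over $v$'s neighboring original clusters, the number of sub-clusters of $\mathcal{C}^*_{i-1}$ that $v$ neighbors is at most $O(n^\eps \log n) + O(n^\eps) = O(n^\eps \log n)$, and by the construction of $F^*_i$ there is exactly one edge from $v$ to each such sub-cluster. Finally, property (c) follows from the same $F^*_i$ construction: since $j \ge i$ implies $V_{j-1} \subseteq V_{i-1}$, both $u \in L_i$ and $v \in L_j$ lie in some cluster of $\mathcal{C}^*_{i-1}$; letting $C^*_v \in \mathcal{C}^*_{i-1}$ be the one containing $v$, either $u \in C^*_v$ (case 1) or $C^*_v$ is a neighboring cluster of $u$ and hence $F^*_i$ contains an edge $(u,w)$ with $w \in C^*_v$ (case 2).

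The main obstacle will be property (b): naively, pruning could fragment each of $v$'s neighboring clusters into many pieces, which would inflate the neighboring-cluster count and break the downstream $n^\eps$-style message bound. The argument relies crucially on the global $O(n^\eps)$ budget on splits per level coming from the subtree-size step, which is only a logarithmic factor smaller than the sampling-based $O(n^\eps \log n)$ bound and thus gets absorbed into it.
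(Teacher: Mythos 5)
Your proof is correct and captures the same reasoning the paper leaves implicit in its description of the pruning step. The paper does not provide a formal proof of this corollary; it embeds the key facts directly into the construction narrative (at most $O(n^\eps)$ new clusters are created per level, each split-off subtree has at least $n^{1-\eps}$ nodes, and $F^*_i$ is rebuilt against $\mathcal{C}^*_{i-1}$), and you have faithfully reconstructed and verified each of these. Your argument for (a) via the two resulting tree pieces, the disjointness-based counting bound of $O(n^\eps)$ splits per level, the $O(n^\eps \log n) + O(n^\eps)$ neighboring-cluster count for (b), and the case analysis for (c) using the rebuilt $F^*_i$ all match what the paper intends. One small point worth noting: the paper's prose in the pruning description drops the $\log n$ factor when it says ``$v$ has only $O(n^\eps)$ neighboring clusters in $\mathcal{C}_{i-1}$,'' whereas Theorem~\ref{theorem:BaswanaSenProperties}(b) actually gives $O(n^\eps \log n)$; you correctly retain the $\log n$, which is what is needed and what the corollary's conclusion asserts.
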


\begin{corollary}
\label{theorem:PrunedBaswanaSenRunningTime}
Let $G = (V,E)$ be an unweighted graph. Then, there exists a \congest{} algorithm running in $O(\kappa^2)$ rounds and using $O(\kappa \cdot m)$ messages for computing a
pruned Baswana-Sen cluster hierarchy with $\kappa + 1$ levels. 
\end{corollary}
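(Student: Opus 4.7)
The plan is to start by invoking Theorem \ref{theorem:BaswanaSenRunningTime} to construct the (unpruned) Baswana-Sen hierarchy $(\mathcal{C}_i, L_i, F_i)_{i=0}^{\kappa}$ in $O(\kappa)$ rounds and $O(\kappa \cdot m)$ messages, and then to perform two follow-up phases: (i) prune each clustering $\mathcal{C}_i$ into $\mathcal{C}^*_i$, and (ii) recompute the inter-cluster edge sets $F^*_i$ relative to the pruned clusterings. We aim to show that phase (i) takes $O(\kappa^2)$ rounds and $O(\kappa \cdot n)$ messages and phase (ii) takes $O(\kappa)$ rounds and $O(\kappa \cdot m)$ messages; summing these with the cost of the initial construction yields the stated $O(\kappa^2)$-round, $O(\kappa \cdot m)$-message bounds.

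For the pruning phase, process levels $i = 1, 2, \ldots, \kappa - 1$ sequentially, with all clusters at a given level handled in parallel over their $O(\kappa)$-depth cluster trees. The key idea is to avoid literally simulating the sequential ``repeatedly split off the deepest node whose current subtree has at least $n^{1-\eps}$ nodes'' procedure. Instead, for each node $v$ in a cluster tree, define the residual subtree size
\[
T(v) \;=\; 1 + \sum_{\substack{c \in \text{children}(v) \\ T(c) < n^{1-\eps}}} T(c),
\]
and declare a strict descendant $v$ of the cluster center to be the root of a new cluster exactly when $T(v) \ge n^{1-\eps}$. The value $T(v)$ can be computed in a single modified convergecast: each leaf initializes $T = 1$, and each internal node, after receiving $T$-values from all children, either forwards $T(v)$ to its parent or, if $T(v) \ge n^{1-\eps}$, forwards $0$ and flags itself as a new cluster root. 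A complementary top-down broadcast then labels every node with the identity of the nearest ancestor that is either the original cluster center or a new cluster root, giving its cluster in $\mathcal{C}^*_i$. Each sweep takes $O(\kappa)$ rounds and $O(|C|)$ messages per cluster $C \in \mathcal{C}_i$, so each level costs $O(\kappa)$ rounds and $O(n)$ messages, and the pruning phase as a whole costs $O(\kappa^2)$ rounds and $O(\kappa \cdot n)$ messages.

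For the inter-cluster edge phase, process each level $i \in [1, \kappa]$ in turn: every node broadcasts an $O(\log n)$-bit identifier of the cluster in $\mathcal{C}^*_{i-1}$ to which it belongs (or a null symbol otherwise) to all of its neighbors, and then every $v \in L_i$ locally selects, for each distinct identifier it received, one representative incident edge to add to $F^*_i$. This uses $O(1)$ rounds and $O(m)$ messages per level, for a total of $O(\kappa)$ rounds and $O(\kappa \cdot m)$ messages.

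The main subtle point that needs to be verified is that the bottom-up definition of $T$ faithfully implements the sequential, deepest-first split rule. The argument will hinge on an order-insensitivity observation: any two ``eligible'' split candidates in the sequential process have disjoint current subtrees, so processing them in any order (or simultaneously) yields the same set of split points. A straightforward induction on the depth of $v$ then shows that $T(v)$ equals the size of $v$'s residual subtree after all induced splits at $v$'s strict descendants, from which the claimed equivalence with the sequential description of $\mathcal{C}^*_i$ follows, and the corollary is obtained by combining the three phases.
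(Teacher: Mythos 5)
Your proof is correct and takes essentially the same approach the paper gestures at: the paper justifies this corollary only with the terse remark that the pruning can be ``processed in parallel using upcast,'' and your bottom-up residual-size convergecast $T(\cdot)$, together with the order-insensitivity/induction argument that it reproduces the sequential deepest-first split rule, is exactly the missing detail that makes that remark precise. The per-level accounting you give ($O(\kappa)$ rounds and $O(n)$ messages for the pruning sweeps, $O(1)$ rounds and $O(m)$ messages for recomputing $F^*_i$) matches what the paper's $O(\kappa^2)$-round, $O(\kappa m)$-message claim implicitly relies on.
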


\textbf{Ensemble of pruned Baswana-Sen cluster hierarchies.} Instead of using one pruned Baswana-Sen cluster hierarchy to simulate our algorithm, we use multiple pruned Baswana-Sen cluster hierarchies with each cluster hierarchy responsible for simulating some number of components of an $\ell$-decomposable algorithm. The motivation for doing this is to ``smooth'' out congestion on cluster edges, i.e., instead of a few cluster edges (of a particular cluster hierarchy) having extremely high congestion, we end up with many more cluster edges, each having lower congestion. 
Recall that by ``cluster edge'' we mean an edge of a rooted tree that forms a cluster.
This idea crucially relies on the following lemma that shows that the likelihood of an edge ending up as a cluster edge in a pruned Baswana-Sen cluster hierarchy is quite small. 
\begin{lemma}
\label{lemma:clusterEdgesRare}
For any edge $e$ of the input graph $G$, $e$ is a cluster edge of a pruned Baswana-Sen cluster hierarchy with probability $O(\kappa \cdot n^{-\eps})$.
\end{lemma}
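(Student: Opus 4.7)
The plan is to reduce the claim to the unpruned Baswana-Sen hierarchy and then apply a per-level union bound based on the sampling step.

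First, I would argue that pruning can only remove cluster edges, never create new ones. The pruning step splits off a subtree rooted at some node $u^*$ into its own cluster, which has the effect of turning the edge from $u^*$ up to its previous parent into a non-cluster edge while leaving all other (cluster and non-cluster) edges with the same status. Consequently every cluster edge of the pruned hierarchy $(\mathcal{C}^*_i, L_i, F^*_i)_{i=0}^{\kappa}$ is also a cluster edge of the underlying unpruned hierarchy $(\mathcal{C}_i, L_i, F_i)_{i=0}^{\kappa}$, so it suffices to bound the probability in the latter.

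Next, I would fix an edge $e = (u,v) \in E$ and analyze its chance of being a cluster edge at a single level $i+1$. Cluster edges exist only at levels $1, 2, \dots, \kappa-1$: at level $0$ each cluster is a singleton, and at the top level $\kappa$ no new clusters are formed since $L_\kappa = V_{\kappa-1}$. By the Baswana-Sen construction, $e$ can be a cluster edge at level $i+1$ only if $e = parent(w, i+1)$ for some $w \in \{u,v\}$, which forces $w$'s $i$-cluster to be non-sampled while the $i$-cluster of its other endpoint is sampled for level $i+1$. In particular, this implies the (weaker) event that at least one of the two level-$i$ clusters $C_u, C_v$ containing $u$ and $v$ lies in $\mathcal{R}_i$, i.e., is sampled for $\mathcal{C}_{i+1}$. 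Note that no assumption on how the ``arbitrary'' choice of parent edge is made is needed, since this implication is structural.

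Now I would invoke the sampling rule: each cluster in $\mathcal{C}_i$ is sampled for $\mathcal{C}_{i+1}$ exactly when its center (an element of $S_i$) lands in $S_{i+1}$, which happens with probability $n^{-\eps}$. A union bound then gives
\[
\Pr[e \text{ is a cluster edge at level } i+1] \;\le\; \Pr[C_u \text{ sampled}] + \Pr[C_v \text{ sampled}] \;\le\; 2n^{-\eps}.
\]
A second union bound over the $\kappa - 1$ relevant levels yields $\Pr[e \text{ is a cluster edge}] = O(\kappa \cdot n^{-\eps})$, as required. The only step requiring any care is the pruning reduction at the very beginning; once that is in place, everything else is a direct consequence of the independent sampling defining $S_{i+1}$.
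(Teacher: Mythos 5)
Your proof is correct and takes essentially the same route as the paper: observe that pruning only removes cluster edges so it suffices to work in the unpruned hierarchy, then at each level $i+1$ bound the event $e=parent(w,i+1)$ for $w\in\{u,v\}$ by the probability that the other endpoint's level-$i$ cluster is sampled, giving $2n^{-\eps}$ per level, and finally union-bound over the $O(\kappa)$ levels. The paper additionally makes explicit the deferred-decisions conditioning (fixing all randomness up through $S_i$ so that $C_u,C_v$ are determined before $S_{i+1}$ is drawn), which is the step that justifies the clean $n^{-\eps}$ bound; you use the same fact implicitly and correctly when you invoke the independent sampling of cluster centers into $S_{i+1}$.
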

\begin{proof}
Consider any edge $e = (u,v)$ of $G$, as well as any level $i \in [0,\kappa-2]$. We show that $e$ is a cluster edge in $\mathcal{C}_{i+1}$ with probability $O(n^{-\eps})$. (Note that $\mathcal{C}_{0} = \{\{v\} \mid v \in V\}$ and $\mathcal{C}_{\kappa} = \emptyset$, thus edges cannot be cluster edges in levels 0 and $\kappa$.) To do so, we fix all random choices in all levels up to $i$ except those for $S_i$, by the principle of deferred decisions. Clearly, $e$ is a cluster edge in $\mathcal{C}_{i+1}$ only if either $parent(u,i+1) = e$ or $parent(v,i+1) = e$. The first case only happens if the cluster in which $v$ belongs (if any, and if $u$ and $v$ are in different clusters in $C_i$) is a sampled $i$-cluster (and the second case is symmetrical). Since the probability that the cluster containing $v$ (if any) is a sampled $i$-cluster is at most $n^{-\eps}$, the probability that $parent(u,i+1) = e$ is also at most $n^{-\eps}$. 

Finally, a union bound over the two cases and the $O(\kappa)$ levels proves the statement holds for the Baswana-Sen cluster hierarchy. But since no cluster edge is added (only removed) during the pruning step, the statement also holds for the pruned Baswana-Sen cluster hierarchy.
\end{proof}

Let $\mathcal{A}$ be an $\ell$-decomposable \bcongest{} algorithm and let $\mathcal{A}_1, \mathcal{A}_2, \ldots, \mathcal{A}_\ell$ be its $\ell$ components. Suppose that each algorithm $\mathcal{A}_j$ takes as input a pruned Baswana-Sen cluster hierarchy.
In other words, each algorithm $\mathcal{A}_j$ uses a given pruned Baswana-Sen cluster hierarchy $(\mathcal{C}_i, L_i, F_i)_{i=0}^\kappa$ in some way in its execution. Let $\congestion_j$ denote the maximum number of messages that pass over a cluster edge, in the worst case, during the execution of $\mathcal{A}_j$. Here, the ``worst case'' is over all possible inputs, including all possible pruned Baswana-Sen cluster hierarchies. Let $\congestion = \max_j \congestion_j$. 
Now suppose that we execute all $\ell$ algorithms together, using a single pruned Baswana-Sen cluster hierarchy $\mathcal{H}=(\mathcal{C}_i, L_i, F_i)_{i=0}^\kappa$ as input to all $\ell$ algorithms. 
Then it is possible that in the worst case there is a cluster edge of $\mathcal{H}$ over which $\ell \cdot \congestion$ messages pass.
This happens if the same cluster edge $e$ of $\mathcal{H}$ sees the maximum number of messages passing over it, in all of the $\ell$ algorithms.
We now show that we can spread out this congestion across many different edges by simply having the algorithms use an ensemble of pruned Baswana-Sen cluster hierarchies, instead of a single hierarchy.

\begin{lemma}[Congestion Smoothing Lemma]
\label{lem:congestionSmoothing}
Let $\zeta = \lceil n^{\eps} \rceil$ and consider an ensemble of $\zeta$ independently constructed, pruned Baswana-Sen cluster hierarchies $\{\mathcal{H}_1, \mathcal{H}_2, \ldots, \mathcal{H}_\zeta\}$.
Arbitrarily partition the set of $\ell$ algorithms that are the components of $\mathcal{A}$ into $\zeta$ equal-sized batches $B_1, B_2, \ldots, B_{\zeta}$ and provide each pruned Baswana-Sen cluster hierarchy $\mathcal{H}_j$ as input to all the algorithms in batch $B_j$.
If we execute the components $\mathcal{A}_1, \mathcal{A}_2, \ldots, \mathcal{A}_\ell$ with this input distribution, then with high probability, the maximum number of messages that pass
over any cluster edge over the course of algorithm $\mathcal{A}$ is $\tilde{O}\left(\frac{\congestion \cdot \ell}{\zeta}\right)$.
\end{lemma}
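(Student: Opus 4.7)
The plan is to fix an arbitrary edge $e \in E$ of the input graph $G$, bound with high probability the total number of messages that traverse $e$ during the joint execution of $\mathcal{A}_1,\dots,\mathcal{A}_\ell$, and then finish with a union bound over the at most $n^2$ edges of $G$.

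The first step is to observe that a component $\mathcal{A}_i$ assigned to batch $B_j$ contributes messages to $e$ only when $e$ happens to be a cluster edge of the hierarchy $\mathcal{H}_j$, in which case the definition of $\congestion$ bounds its contribution by $\congestion$ messages. Summing over the $|B_j| = \ell/\zeta$ components in $B_j$, each batch contributes at most $(\ell/\zeta)\cdot\congestion$ messages to $e$ when $e$ is a cluster edge of $\mathcal{H}_j$, and nothing otherwise. Introducing indicators
\[
X_j := \mathbf{1}\!\left[e \text{ is a cluster edge in } \mathcal{H}_j\right], \qquad j = 1, \dots, \zeta,
\]
the total traffic over $e$ is therefore at most $(\ell/\zeta)\cdot\congestion\cdot Y$, where $Y := \sum_{j=1}^{\zeta} X_j$.

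The crux is to control $Y$. Since the hierarchies $\mathcal{H}_1, \dots, \mathcal{H}_\zeta$ are built from \emph{independent} random choices, the indicators $X_1, \dots, X_\zeta$ are mutually independent $\{0,1\}$ random variables. Lemma~\ref{lemma:clusterEdgesRare} gives $\Pr[X_j = 1] = O(\kappa \cdot n^{-\eps})$, hence
\[
\mathbb{E}[Y] \;=\; O\!\left(\kappa\,\zeta\,n^{-\eps}\right) \;=\; O(\kappa) \;=\; O(1/\eps) \;=\; O(\log n),
\]
where the last step uses the standing hypothesis $\eps \ge 1/\Theta(\log n)$ together with $\zeta = \lceil n^\eps \rceil$ and $\kappa = \lceil 1/\eps \rceil$. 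A standard multiplicative Chernoff bound for independent $\{0,1\}$ variables then yields $Y = O(\log n)$ with probability at least $1 - n^{-c}$ for any prescribed constant $c$, which translates into a congestion bound of $\tilde{O}(\congestion\cdot\ell/\zeta)$ on the fixed edge $e$.

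A union bound over the $O(n^2)$ edges of $G$ preserves the high-probability guarantee and finishes the argument. I do not anticipate a genuine obstacle beyond this bookkeeping: the proof rests entirely on (i) the independence of the ensemble, which is built into the construction, and (ii) the calibration $\zeta = \lceil n^\eps \rceil$, chosen precisely so that the expected number of hierarchies in which a fixed edge becomes a cluster edge is polylogarithmic, keeping $Y$ strongly concentrated. The only mildly delicate point is confirming that $\kappa \cdot \zeta \cdot n^{-\eps}$ stays $O(\log n)$ across the full range $\eps \in [1/\Theta(\log n), 1]$, which the parameter choices above arrange.
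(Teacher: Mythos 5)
Your proposal matches the paper's argument: both fix an edge $e$, apply Lemma~\ref{lemma:clusterEdgesRare} together with linearity of expectation to show that $e$ is a cluster edge in $O(\log n)$ hierarchies in expectation, invoke independence of the ensemble plus a Chernoff bound to get concentration, and multiply by the $O(\ell/\zeta)$ components per hierarchy to conclude $\tilde{O}(\congestion \cdot \ell / \zeta)$. The only cosmetic difference is that you make the union bound over $O(n^2)$ edges explicit, which the paper leaves implicit in its ``with high probability'' phrasing.
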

\begin{proof}
First, we consider any edge $e$ and show that, with high probability, it is a cluster edge in at most $O(\log n)$ different cluster hierarchies. We start by recalling that for any $\mathcal{H}_j$, edge $e$ is a cluster edge of $\mathcal{H}_j$ with probability at most $O(\kappa \cdot n^{-\eps})$, by Lemma \ref{lemma:clusterEdgesRare}. By linearity of expectation (over the $\zeta = \lceil n^{\eps} \rceil $ cluster hierarchies), edge $e$ is the tree edge of at most $O(\kappa \cdot n^{-\eps} \cdot \zeta) = O(\log n)$ different clusters in expectation. Now, note that the random choices are independent between the cluster hierarchy computations. Hence, a simple Chernoff bound shows that $e$ is a cluster edge in at most $O(\log n)$ different cluster hierarchies with high probability. 

Let us condition on this high probability event and let $\mathbb{H}_e$ denote the $O(\log n)$ different cluster hierarchies that $e$ is a cluster edge in.
The cluster hierarchies in $\mathbb{H}_e$ are assigned to a total of $O(\ell/\zeta \cdot \log n)$ components. Therefore, with high probability, the total congestion on $e$ is 
$\tilde O(\frac{\congestion \cdot \ell}{\zeta}).$
\end{proof}

Although the above lemma statement is given for any $\ell, \zeta$, note that when $\ell = \zeta$ the resulting statement remains interesting. Indeed, in that case, the maximum congestion that passes over any cluster edge over the course of algorithm $\mathcal{A}$ is $\tilde{O}(\log n)$.

\subsection{Message-Time Trade-off Simulations}
\label{subsec:simulationTradeoff}

In this section, we provide a simulation that works for any \bcongest{} aggregation-based algorithm $\mathcal{A}$ with a known upper bound $\TA$ on the round complexity---see Definition \ref{def:aggregationBasedAlgorithm} for a formal definition of aggregation-based algorithms, but these are, roughly speaking, algorithms in which all nodes locally update their state every round using some \textit{aggregate} function such as $\min, \max, \mathrm{average}$, etc. Algorithm $\mathcal{A}$ may use up to $O(\TA \cdot n^2)$ messages (say, in dense graphs). We present a simulation that, informally, allows to obtain a \congest{} algorithm $\mathcal{A}'$ simulating $\mathcal{A}$ (i.e., producing the same output) sending less messages at the cost of using more rounds---see Theorem \ref{thm:tradeoffSimulation1} below.

\begin{restatable}{theorem}{simulationGeneral}
\label{thm:tradeoffSimulation1}
    For any $\eps \in [\frac{1}{\Theta(\log n)},1]$ and a pruned Baswana-Sen cluster hierarchy $\mathcal{H}$ of parameter $\eps$, any \bcongest{} aggregation-based algorithm $\mathcal{A}$ with (known upper bound on the) runtime $\TA$ can be converted into a randomized (Monte Carlo) \congest{} algorithm that:
    \begin{itemize}
        \item simulates $\mathcal{A}$ correctly (w.h.p.),
        \item runs in $\tilde{O}(\TA  \cdot n + n^{2-\eps})$ rounds,
        \item sends $\tilde{O}(\TA \cdot n^{1+\eps} + m)$ messages (w.h.p.),
        \item incurs $\tilde{O}(\TA)$ maximum edge congestion over non cluster edges of $\mathcal{H}$.
    \end{itemize}
\end{restatable}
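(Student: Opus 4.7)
My plan is to deliver each broadcast of round $r$ of $\mathcal{A}$ along a canonical path through the pruned Baswana-Sen hierarchy $\mathcal{H}$. For every graph edge $(u,v)$ with $u \in L_i$, $v \in L_j$, and $i\le j$, property (c) of Theorem~\ref{theorem:BaswanaSenProperties} supplies a unique cluster $C \in \mathcal{C}^*_{i-1}$ containing $v$ that either also contains $u$ or is joined to $u$ by a single edge $(u,w)\in F^*_i$ with $w \in C$. The simulation routes $v$'s broadcast to $u$ through this $C$: $v$ upcasts the message to the center $c$ of $C$, $c$ downcasts it to $u$ (if $u\in C$) or to the gateway $w$, and $w$ then forwards across $F^*_i$ to $u$. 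Because every graph edge is routed at the \emph{unique} level fixed by the lower-level endpoint's $L_i$-membership, the messages arriving at any destination $u$ from distinct cluster centers form a disjoint partition of the set $u$ would receive in $\mathcal{A}$; with Definition~\ref{def:aggregationBasedAlgorithm}, each cluster center can compress all messages destined for the same $u$ into a single $\tilde O(1)$-bit digest via $agg_{u,r}$, and $u$'s subsequent application of $f_{u,r}$ to the union of its received digests reproduces $\mathcal{A}$'s state update exactly.

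The per-round protocol is a pipeline on the cluster forest, run in parallel across all levels: (1) every broadcasting $v$ upcasts its round-$r$ message to the centers of all clusters $v$ belongs to; (2) each center, aware from one-time preprocessing of the gateway nodes in its subtree and the external destinations they serve, computes an aggregated digest per destination (internal or external) and downcasts it either to the destination directly, or to the appropriate gateway for forwarding across $F^*$; (3) incoming $F^*$-messages at a gateway are upcast to the local center and incorporated into a final invocation of step (2) for destinations served on the receiving side. Preprocessing constructs $\mathcal{H}$ via Corollary~\ref{theorem:PrunedBaswanaSenRunningTime} and has each center gather, via upcast, the neighborhood-plus-gateway information of every node in its cluster so that step (2)'s routing table is available.

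For complexity, preprocessing uses $\tilde O(m)$ messages; its round cost is dominated by the neighborhood upcast, whose per-subtree payload is at most $|T_w|\cdot(n-1)=\tilde O(n^{2-\eps})$ bits by the pruning cap $|T_w|\le O(n^{1-\eps})$ of Corollary~\ref{thm:PrunedBaswanaSenRunningProperties}, so Lemma~\ref{lem:upcast} yields $\tilde O(n^{2-\eps})$ rounds. For each simulated round, every $F^*$-edge carries $O(1)$ messages since each endpoint broadcasts at most once and aggregation collapses all messages destined for the edge's lower-level endpoint into one digest, giving cumulative congestion $\tilde O(\TA)$ on non-cluster edges. Each cluster-tree edge $(c,w)$ instead carries $O(|T_w|)=O(n^{1-\eps})$ messages per round for internal-destination delivery plus $O(|T_w|\cdot n^\eps)=O(n)$ messages for the $O(n^\eps)$ $F^*$-edges incident to each $T_w$-node, giving $O(n)$ congestion per direction per round; Lemmas~\ref{lem:upcast} and~\ref{lem:downcast} therefore simulate one round of $\mathcal{A}$ in $\tilde O(n)$ rounds, and summing over $\TA$ rounds together with preprocessing gives $\tilde O(\TA\cdot n + n^{2-\eps})$. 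For messages, the per-round totals are $\tilde O(n)$ on cluster edges for internal routing, $\tilde O(n^{1+\eps})$ on cluster edges for F-forwarding (since each of the $\tilde O(n^{1+\eps})$ $F^*$-edges triggers one $O(\kappa)$-length downcast), and $\tilde O(n^{1+\eps})$ on $F^*$-edges themselves; summed over $\TA$ rounds plus $\tilde O(m)$ preprocessing, this yields $\tilde O(\TA\cdot n^{1+\eps}+m)$.

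The main obstacle is honoring the partition hypothesis of Definition~\ref{def:aggregationBasedAlgorithm} despite a destination $u$ being a member of clusters at several levels, so that its incoming messages are aggregated at several distinct centers instead of one. Anchoring each edge $(u,v)$ to the unique level determined by the lower-level endpoint's unique $L_i$-membership makes these aggregated batches pairwise disjoint and collectively equal to $u$'s true incoming set, so the partition property fires and $u$'s final application of $f_{u,r}$ is correct. The cluster-edge congestion bound, in turn, relies on the pruning cap of Corollary~\ref{thm:PrunedBaswanaSenRunningProperties}: without the $O(n^{1-\eps})$ subtree size, edges near a heavy center could incur super-linear per-round congestion and destroy the claimed trade-off.
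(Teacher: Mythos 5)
Your proposal follows essentially the same approach as the paper's: route each broadcast through the pruned Baswana--Sen hierarchy by upcasting to cluster centers, aggregating per-destination digests there (using Definition~\ref{def:aggregationBasedAlgorithm}), downcasting to gateways, and crossing $F^*$-edges; the complexity bounds fall out of the $O(n^{1-\eps})$ subtree cap (cluster-edge congestion $\tilde O(n)$ per phase) and the $\tilde O(n^{1+\eps})$ bound on $|F^*|$. Your explicit ``anchor each edge at the unique level of its lower-$L_i$ endpoint'' framing is actually a bit cleaner than the paper's argument for why the aggregate batches arriving at a node partition its incoming message set, which the paper handles more loosely (``not necessarily unique aggregate packet'').

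One presentational gap, though it does not reflect a misunderstanding: your per-round pipeline (1)--(3) has broadcasters only upcasting into their own cluster trees, but never directly pushing their message across their own incident $F^*_i$-edges. Consider $u\in L_i$, $v\in L_j$ with $i<j$, and suppose $u\notin C$, where $C\in\mathcal C^*_{i-1}$ is the anchor cluster for $(u,v)$. When $u$ broadcasts, none of $u$'s own clusters (levels $0,\dots,i-1$) contains $v$ or the gateway $w$, and $v$ has no $F^*$-edge to any of them, so steps (1)--(2) launched from $u$'s side never reach $C$'s center. The message has to enter $C$ via the edge $(u,w)\in F^*_i$, which requires $u$ to send directly over that $F^*$-edge on broadcasting --- this is exactly the paper's ``indirect send'' sub-step. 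Your first paragraph describes the correct routing (``$w$ then forwards across $F^*_i$ to $u$'' implies a symmetric $u\to w$ send), and your message-complexity accounting does charge these $F^*$-crossings, so the idea is clearly present; but it must appear as an explicit step of the protocol, otherwise step (3)'s ``incoming $F^*$-messages at a gateway'' have no sender and the simulation would drop every message travelling from the lower-level endpoint of an edge into its anchor cluster.
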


Moreover, we adapt the above simulation to be faster when $\eps \in [1/2,1]$ (see Theorem \ref{thm:tradeoffSimulation2} below).

\begin{restatable}{theorem}{simulationFaster}
\label{thm:tradeoffSimulation2}
    For any $\eps \in [1/2,1]$ and a pruned Baswana-Sen cluster hierarchy $\mathcal{H}$ of parameter $\eps$, any \bcongest{} aggregation-based algorithm $\mathcal{A}$ with (known upper bound on the) runtime $\TA$ can be converted into a randomized (Monte Carlo) \congest{} algorithm that:
    \begin{itemize}
        \item simulates $\mathcal{A}$ correctly with high probability,
        \item takes $\tilde{O}(\TA  \cdot n^{1-\eps} + n)$ runtime,
        \item sends $\tilde{O}(\TA \cdot n^{1+\eps} + m)$ messages with high probability,
        \item incurs $\tilde{O}(\TA)$ maximum edge congestion over non cluster edges of $\mathcal{H}$.
    \end{itemize}
\end{restatable}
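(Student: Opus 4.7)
The plan is to mirror the simulation in Theorem~\ref{thm:tradeoffSimulation1}, but specialize it to the regime $\eps \in [1/2, 1]$, in which $\kappa = \keps \in \{1, 2\}$. In this regime, the pruned Baswana-Sen cluster hierarchy $\mathcal{H}$ has only a constant number of levels, each cluster has radius $O(1)$ by Theorem~\ref{theorem:BaswanaSenProperties}(a), and each proper subtree still has $O(n^{1-\eps})$ nodes by Corollary~\ref{thm:PrunedBaswanaSenRunningProperties}. These two facts together are what allow the per-phase simulation cost to drop from $\tilde{O}(n)$ to $\tilde{O}(n^{1-\eps})$ relative to Theorem~\ref{thm:tradeoffSimulation1}.

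The simulation would proceed in $\TA$ phases, one per round $r$ of $\mathcal{A}$. Within a phase I would: (i) have every node that would broadcast in round $r$ of $\mathcal{A}$ send its $\tilde{O}(1)$-bit message up its constant-depth cluster tree to the cluster center, combining at each internal node via $agg_{v,r}$; (ii) have each cluster center forward the resulting aggregate to every neighboring cluster by crossing a single inter-cluster edge in $F^*$, of which every node has only $\tilde{O}(n^\eps)$ incident ones by Theorem~\ref{theorem:BaswanaSenProperties}(b); (iii) have each destination cluster center downcast the $\tilde{O}(1)$-bit digest destined for each of its cluster nodes through the cluster tree. Correctness then follows by induction on $r$, with the invariant that at the end of phase $r$ every node can locally recover the state it would hold at the end of round $r$ of $\mathcal{A}$; the inductive step uses precisely the partition-invariance of $f_{v,r}$ from Definition~\ref{def:aggregationBasedAlgorithm}.

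For the round complexity, preprocessing is dominated by computing $\mathcal{H}$ via Corollary~\ref{theorem:PrunedBaswanaSenRunningTime} (which is $O(\kappa^2) = O(1)$ rounds) together with an initial BFS to disseminate $n$ and gather inputs into the hierarchy, which costs $\tilde{O}(n)$ rounds. Within each phase, each cluster-tree edge carries at most $\tilde{O}(n^{1-\eps})$ pipelined $\tilde{O}(1)$-bit aggregates (one per node in the proper subtree below it), and the tree has depth $O(1)$, so the upcast and downcast together cost $\tilde{O}(n^{1-\eps})$ rounds per phase, giving the claimed $\tilde{O}(\TA \cdot n^{1-\eps} + n)$ bound. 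For the message complexity, each of the at most $\tilde{O}(\TA \cdot n)$ broadcasts of $\mathcal{A}$ generates $\tilde{O}(n^\eps)$ inter-cluster sends on edges of $F^*$, and aggregation keeps intra-cluster traffic to $\tilde{O}(1)$ bits per cluster-tree edge per phase; summing yields $\tilde{O}(\TA \cdot n^{1+\eps} + m)$ messages, where the $\tilde{O}(m)$ absorbs the preprocessing. Finally, every non-cluster edge of $\mathcal{H}$ lies in some $F_i^*$ and is used at most once per broadcasting endpoint per round of $\mathcal{A}$, giving maximum congestion $\tilde{O}(\TA)$ on such edges.

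The main obstacle will be controlling intra-cluster pipelining when a cluster is a large, wide star, which does happen when $\kappa = 2$ and a level-$1$ center has many attached nodes: the center may need to downcast distinct digests to $\Theta(|C|)$ children within a single phase, creating a potential fan-out bottleneck. The aggregation-based property is what rescues us, since it lets us bound the total bits that must cross any single cluster-tree edge per phase by $\tilde{O}(1)$ (one aggregate per recipient) rather than by the number of broadcasting neighbors contributing to that recipient's state; combined with the $O(n^{1-\eps})$ subtree-size bound from Corollary~\ref{thm:PrunedBaswanaSenRunningProperties}, this yields the $\tilde{O}(n^{1-\eps})$ per-phase cost. A secondary subtlety is to verify that the constant-depth (and in particular, depth-at-most-$2$) structure really does let the intra- and inter-cluster steps be interleaved without the $\Theta(n)$-factor overhead present in Theorem~\ref{thm:tradeoffSimulation1}; once this is checked, the four claimed bounds follow.
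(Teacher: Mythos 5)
Your proposal correctly identifies the structural facts that make the $\eps\geq 1/2$ regime special (the hierarchy degenerates to singleton clusters $\mathcal{C}_0$ plus depth-one ``star'' clusters $\mathcal{C}_1$, there are only $\tilde O(n^{1-\eps})$ star clusters, and $L_1$ nodes have degree $\tilde O(n^\eps)$), and the high-level phase structure (upcast to center, cross a few inter-cluster edges, downcast) matches the paper's template. However, the reasoning by which you reach the $\tilde O(n^{1-\eps})$ per-phase bound has a real gap, and it is exactly the gap that the paper's proof closes with a \emph{maximal-matching} routing step that your outline does not contain.

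The problematic claim is that ``each cluster-tree edge carries at most $\tilde O(n^{1-\eps})$ pipelined $\tilde O(1)$-bit aggregates (one per node in the proper subtree below it).'' In the $\kappa=2$ case the proper subtree below a star-cluster edge $(w,c)$ is the singleton $\{w\}$, so that accounting gives only $O(1)$ packets for $w$'s own received aggregate. What it does \emph{not} account for is the inter-cluster forwarding traffic that must also transit $(w,c)$. If you route the aggregate $a_u$ (for a destination $u$ in a neighboring star cluster $C'$) over an $F^*$ edge incident to $C$, the endpoint of that edge in $C$ is some fixed node, and there is nothing preventing the $F^*$ edges to $C$ from $\Theta(n)$ different external nodes from all landing on the same $w$. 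In that case $c$ must push $\Theta(n)$ distinct per-destination packets down the single edge $(w,c)$, and since these are aggregates for \emph{different} recipients they cannot be merged further. That is precisely the $\tilde O(n)$ per-phase cluster-edge congestion of Lemma~\ref{lem:simulationPhaseCongestion} for the general simulation; your scheme specializes the general scheme but does not improve that bound. The subtree-size bound from Corollary~\ref{thm:PrunedBaswanaSenRunningProperties} is simply irrelevant here because the bottleneck is the fan-out of inter-cluster traffic at one child, not the size of the child's subtree.

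What the paper actually does for $\eps\geq 1/2$ is change the inter-cluster routing: for each pair of star clusters $(C,C')$, the center of $C$ computes a \emph{maximal matching} $M(C,C')\subseteq E$ between the broadcasting nodes $B_p\cap C$ and the set $R_p(C,C')$ of their neighbors in $C'$, and then sends exactly one indirect-send message $m_1(e)$ and one aggregate packet $m_2(e)$ over each matched edge. Distinctness of matched endpoints in $C$ means each $w\in C$ forwards at most one packet pair per neighboring star cluster, i.e.\ $\tilde O(n^{1-\eps})$ packets, which is what actually bounds the per-phase congestion on $(w,c)$. \emph{Maximality} is then what rescues correctness: if a broadcasting neighbor $v\in B_p(u)\cap C$ is unmatched, then $u$ itself must be matched in $M(C,C')$, and the center can send $u$ an aggregate packet $agg_{u,p}$ over all of $B_p(u)\cap C$ directly over the matched edge. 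Your induction invariant and your use of Definition~\ref{def:aggregationBasedAlgorithm} are fine, but without the matching step the send phase as you describe it either drops messages for unmatched destinations or reverts to $\tilde O(n)$ per-phase congestion. To repair the argument you would need to add (or replace with an equivalent of) this load-balancing mechanism and redo the congestion and message counts with it.
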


\subsubsection{General Simulation for $\eps \in [\frac{1}{\Theta(\log n)},1]$:}
\label{sec:general}
Our first, general simulation---in that it applies to a wider range of $\eps$ values---takes as input a pruned Baswana-Sen cluster hierarchy (with parameter $\eps$). Then, Algorithm $\mathcal{A}'$ can be decomposed into two parts: a preprocessing part and a simulation part. In the preprocessing part, nodes aggregate information over each of the clusters of the pruned Baswana-Sen cluster hierarchy. In the simulation part, we simulate the execution of $\mathcal{A}$ using the cluster hierarchy.
Referring back to Theorem \ref{thm:tradeoffSimulation1}, the $\tilde{O}(n^{2-\eps})$ term in the running time and the $\tilde{O}(m)$ term in the message complexity are due to preprocessing, whereas the $\tilde{O}(\TA\cdot n)$ term in the running time and the $\tilde{O}(\TA \cdot n^{1+\eps})$ term in the message complexity are due to the round-by-round simulation of $\mathcal{A}$.

\paragraph{Preprocessing.} The preprocessing part of $\mathcal{A}'$ consists of two steps. 
\begin{enumerate}
    \item Elect a leader, compute a BFS tree rooted in that leader, aggregate the number of nodes $n$ and finally broadcast $n$ to all nodes.
    \item Iteratively over the $\kappa+1$ levels of the pruned Baswana-Sen cluster hierarchy, do the following: For each cluster $C$ in a given level, its cluster center gathers the information of all edges incident to nodes in $C$. More precisely, each node in $C$ upcasts $O(d(v) \log n)$ bits describing its 1-hop neighborhood in the communication graph $G$ and in the pruned Baswana-Sen cluster hierarchy, broken up in $O(\log n)$ bit messages, to its parent in the cluster tree of $C$. Indeed, $O(d(v) (\log n+\kappa)) = O(d(v) \log n)$ bits can describe all ID pairs corresponding to the $d(v)$ incident edges to $v$, as well as the $\kappa = O(\log n)$ bits per such pair to indicate whether the corresponding edge is an inter-communication edge of a given level of the hierarchy.
\end{enumerate}

\paragraph{Simulation.} All nodes start the simulation (i.e., $\mathcal{A}'$) with the same initial state as in $\mathcal{A}$ (and some additional information obtained in the preprocessing). After which, any phase $p \in [1,\TA]$ of the simulation uses $\tilde{O}(n)$ rounds to simulate round $p$ of $\mathcal{A}$ (where the precise runtime of each phase depends on the analysis).
More precisely, if we let $B_p$ be the set of nodes broadcasting in round $p$ of $\mathcal{A}$ (given the above mentioned initial state), and among these we let $B_p(u)$ denote the neighbors of any node $u \in V$ contained in $B_p$, then each phase of $\mathcal{A}'$ simulates the communication of the corresponding round of $\mathcal{A}$ in the aggregate sense: that is, for any phase $p \in [1,\TA]$ and node $u \in V$, during phase $p$ node $u$ receives the aggregate of all messages sent by (nodes in) $B_p(u)$ (or in other words, function $agg_{u,r}$ applied to all messages sent by $B_p(u)$).

Each phase of $\mathcal{A}'$ simulates a round of $\mathcal{A}$ through the following three \textit{send, receive and compute} steps, described for some arbitrary node $v$:
\begin{enumerate}
    \item \textbf{Send:} If $v$ broadcasts a message $m_v$ in round $p$ of $\mathcal{A}$, then in this step (of phase $p$) node $v$ executes the \textit{indirect send and direct send} sub-steps that ensure that for each neighbor $u$ of $v$ and cluster $C$ containing $v$, either:
    \begin{enumerate}[label=(\roman*)]
        \item \textbf{Indirect Send:} The message $m_v$ is sent to some cluster containing $u$,
        \item \textbf{Direct (Aggregate) Send:} Or an aggregate (of at most $\tilde{O}(1)$ bits) of all messages from (nodes in) $B_p(u) \cap C$ is sent directly to $u$, 
    \end{enumerate}
    \item \textbf{(Aggregate) Receive:} For any cluster $C$ containing $v$ in the hierarchy, $v$ receives an aggregate of all messages sent to $C$ in the indirect send sub-step by some node in $B_p(v)$.
    \item \textbf{Compute:} Finally, $v$ locally computes the aggregate of all the messages received in the send---more precisely, in the direct send sub-step---and receive step of this phase (i.e., via its its incident inter-communication edges, and the up to $\kappa$ clusters $v$ can belong to) and updates its state for phase $p+1$ accordingly---just as $v$ would update its state for round $p+1$ after receiving this aggregate in round $p$ of $\mathcal{A}$. 
\end{enumerate}
When the send and receive steps are executed correctly---we give their implementation below---and $\mathcal{A}$ is an aggregation-based distributed algorithm, it is straightforward to see the simulation is correct; in particular, the output of each node in $\mathcal{A}'$ is exactly the output of that same node in $\mathcal{A}$. Note that the send step is separated into two sub-steps due the asymmetric nature of property (c) of the cluster hierarchy (see Theorem \ref{theorem:BaswanaSenProperties}). More precisely, for any two neighbors $u,v$, there is either (i) an inter-communication edge incident to $v$ and going to a cluster containing $u$, or (ii) an inter-communication edge incident to $u$ and going to some node in $C$ (but not necessarily $v$). Then, messages are sent between the cluster of $v$ and $u$ in the first case within the indirect send sub-step, and in the second case within the direct send sub-step.

Next, we describe the implementation of the send step of some phase $p$ (in $\mathcal{A}'$). Consider some node $v$ that broadcasts some message $m_v$ in round $p$ of $\mathcal{A}$. Then, in the send step of phase $p$, $v$ executes the two sub-steps sequentially:
\begin{enumerate}[label=(\roman*)]
    \item \textbf{Indirect Send:} Node $v$ sends $m_v$, appended with its ID, over any incident inter-communication edges of the pruned Baswana-Sen cluster hierarchy.
    \item \textbf{Direct (Aggregate) Send:} First, $v$ upcasts $m_v$ over all cluster trees it belongs to in the cluster hierarchy. Once all upcasts are done, let $C$ be any cluster, $\mathcal{M}(C)$ be all messages received by the center of $C$ (during the upcast) and $R(C)$ be the set of nodes outside $C$, but with a neighbor in $C$ and an inter-communication edge to $C$ (where the latter can be locally computed by the center of $C$ from the knowledge obtained during the preprocessing). Then, for any node $u \in R(C)$, the center of $C$ computes the aggregate of all messages in $\mathcal{M}(C)$ originating from nodes in $B_p(u)$. (Note that each such aggregate may consist of at most $\tilde{O}(1)$ bits, and sending this information is done via at most $\tilde{O}(1)$ bits, the collection of which we call an \textit{aggregate packet}.) After which, the center of $C$ downcasts, simultaneously for all nodes $u \in R(C)$, the corresponding computed aggregate packet to the endpoint in $C$ of (one of) the inter-communication edge incident to $u$, after appending each aggregate packet with the ID of $u$. (Note that the same node in $C$ may be the endpoint of many such inter-communication edges.) Finally, any aggregate packet received by some node $w$ (after the downcast terminates) is sent by $w$ over the corresponding inter-communication edge (i.e., where the other endpoint is the node with the ID appended to the received message).
\end{enumerate}

Now, we describe the implementation of the receive step of some phase $p$ (in $\mathcal{A}'$). Consider any node $v$ that receives some message in the indirect send sub-step of that same phase---importantly, these cannot be messages from an aggregate packet. Then, $v$ upcasts any such message over all cluster trees it belongs to. Moreover, all broadcasting nodes in $B_p$ also upcast their message over all cluster trees they belong to. Once all upcasts are done, let $C$ be any cluster and $\mathcal{M}(C)'$ be all messages received by the center of $C$ (during the upcast). Then, for any node $u \in C$, the center of $C$ computes the aggregate of all messages in $\mathcal{M}(C)'$ originating from nodes in $B_p(v)$. (Note that the cluster center can compute which messages in $\mathcal{M}(C)'$ originate from nodes in $B_p(v)$ because each message has its sender ID appended.) After which, the center of $C$ downcasts, simultaneously for all nodes $u \in C$, the corresponding computed aggregate packet to $u$.  

\paragraph{Analysis.} The above completes the description of the simulation. Next, we upper bound the runtime, maximum edge congestion and message complexity of the simulation (i.e., of $\mathcal{A}'$) in terms of the runtime (upper bound) $\TA$ of $\mathcal{A}$. We start by deriving the complexity bounds of preprocessing.

\begin{lemma}
\label{lem:tradeoffSimulation1Preprocessing}
    The preprocessing takes $\tilde{O}(n^{2-\eps})$ rounds and $\tilde{O}(m)$ messages. Moreover, it induces a maximum edge congestion of $\tilde{O}(1)$ for any non cluster edge.
\end{lemma}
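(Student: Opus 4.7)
The plan is to bound the two preprocessing steps separately. Step 1 (leader election, BFS, aggregation/broadcast of $n$) is standard: using classical techniques (e.g., Kutten et al.), it can be done in $O(D) = O(n)$ rounds with $O(m \log n)$ messages and incurs only $\tilde{O}(1)$ congestion on every edge. So the work is to handle Step 2 (per-cluster upcasts of neighborhood information across the $\kappa + 1$ levels of the pruned Baswana--Sen cluster hierarchy). Throughout, I would use the pruning invariant from Corollary \ref{thm:PrunedBaswanaSenRunningProperties} stating that every proper subtree of every cluster has at most $O(n^{1-\eps})$ nodes, together with the fact that the cluster trees have depth $O(\kappa) = O(\log n)$ (since $\eps \ge 1/\Theta(\log n)$).

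For the runtime of Step 2, I would analyze one level at a time, since the clusters in one level are vertex-disjoint and can upcast in parallel. Within a single cluster $C$, the pipelined-upcast runtime is governed by the maximum number of bits crossing any single tree edge. Every non-root edge $e$ in $C$ is the parent edge of a proper subtree whose size is at most $O(n^{1-\eps})$, and each node $v$ contributes only $O(d(v) \log n)$ input bits; since $d(v) \le n-1$, the total traffic through $e$ is $O(n^{1-\eps} \cdot n \cdot \log n) = \tilde{O}(n^{2-\eps})$ bits. Pipelining then gives $\tilde{O}(n^{2-\eps})$ rounds per level, and summing over the $O(\log n)$ levels yields the claimed $\tilde{O}(n^{2-\eps})$ rounds for Step 2. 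For the messages, I would directly invoke Lemma \ref{lem:upcast}: cluster $C$ upcasts $O(\sum_{v \in C} d(v) \log n)$ input bits over a tree of depth $O(\kappa)$, so it sends $O(\kappa \sum_{v \in C} d(v))$ messages. Because the clusters at a fixed level partition (a subset of) $V$, summing gives $O(\kappa m)$ messages per level and $O(\kappa^2 m) = \tilde{O}(m)$ messages overall.

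For the congestion bound on non-cluster edges, my plan is the simple structural observation that Step 2 communicates \emph{only} along edges of cluster trees (each of which is, by definition, a cluster edge), so non-cluster edges see traffic exclusively from Step 1, which is $\tilde{O}(1)$. The main technical point to watch is in the runtime analysis: one needs the tighter pipelined-upcast bound based on maximum per-edge traffic rather than the coarser total-bits bound stated in Lemma \ref{lem:upcast}, and it is crucial to bound the per-edge bits via $n^{1-\eps} \cdot n$ (using the worst-case degree) rather than via $m$, since for dense graphs $m$ can exceed $n^{2-\eps}$. Once this refined pipelining bound is in hand, all three claims follow directly.
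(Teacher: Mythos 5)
Your proof is correct and takes essentially the same approach as the paper's: both handle Step 1 with standard bounds, and for Step 2 both exploit the pruning invariant (proper subtrees of size $O(n^{1-\eps})$ plus per-node input size $O(n\log n)$ bits) to get the $\tilde{O}(n^{2-\eps})$ per-cluster-edge traffic, pipeline over depth $O(\kappa)$ for the round bound, account messages per level by multiplying upcast traffic by depth to get $O(\kappa m)$ and then sum over the $O(\kappa)$ levels, and observe that Step 2 touches only cluster-tree edges for the non-cluster-edge congestion claim. Your explicit note that one must use the per-edge $n^{1-\eps}\cdot n$ bound rather than the total $m$ bound for the pipelining runtime is exactly the (implicit) refinement the paper relies on as well.
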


\begin{proof}
    It is straightforward to see that the first step takes $\tilde{O}(D)$ rounds, sends $\tilde{O}(m)$ messages and that its maximum edge congestion is $\tilde{O}(1)$. 
    For the second step, 
    for any level of the cluster hierarchy, the aggregation induces no edge congestion on any non cluster edges and a maximum congestion of $\tilde{O}(n^{2-\eps})$ for cluster edges. Indeed, each cluster tree in the cluster hierarchy has no proper subtree with $\Omega(n^{1-\eps})$ nodes (see Corollary \ref{thm:PrunedBaswanaSenRunningProperties}), and each such node upcasts at most $O(n)$ messages. Moreover, since each cluster tree has depth $O(\kappa)$, the aggregation can be pipelined and the runtime (when considering that level of the cluster hierarchy only) is $O(n^{2-\eps} + \kappa) = O(n^{2-\eps})$. As for messages, each $\log n$ bits upcasted by some node generates at most $\kappa$ messages, since each cluster tree has depth $O(\kappa)$. Hence, the message complexity (when considering a single level of the cluster hierarchy) is $O(m \kappa)$. Over all levels of the cluster hierarchy, this amounts to another $O(\kappa)$ blow-up. Finally, the statement follows from the fact that $\kappa = O(\log n)$.
\end{proof}

Next, we upper bound the maximum edge congestion over all edges, and in particular over cluster edges, over any one phase. This is crucial to argue the correctness of the simulation.

\begin{lemma}
\label{lem:simulationPhaseCongestion}
    Consider any one phase of $\mathcal{A}'$. Then, the maximum congestion is $\tilde{O}(n)$ over cluster edges and $\tilde{O}(1)$ over other edges.
\end{lemma}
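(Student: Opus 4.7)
The plan is to decompose one phase of $\mathcal{A}'$ into its constituent sub-steps---the indirect send, the direct-send upcast of broadcast messages, the direct-send downcast of aggregate packets to the external receivers $R(C)$, the receive-step upcast of messages arriving through $F^*$, and the receive-step downcast of per-cluster aggregates---and bound each sub-step's contribution to the congestion of a fixed edge $e$, separately for cluster edges and for all other edges.

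First I would dispatch the non-cluster edges. Any edge outside $F^* \cup \{\text{cluster edges}\}$ carries zero traffic, since the protocol only ever communicates along cluster trees or inter-communication edges. For an inter-communication edge $e$, the indirect send deposits at most one copy of $m_v$ per broadcasting endpoint; the cross-cluster tail of the direct send deposits at most one aggregate packet per cluster that tagged $e$, and $e$ can be tagged by at most $O(\kappa)$ clusters, one per level of $\mathcal{H}$; and the receive step never uses $F^*$. Thus any non-cluster edge sees $\tilde O(1)$ traffic.

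Next I would fix a cluster edge $e$ in some cluster $C$ at level $i-1$ of $\mathcal{H}$, and let $T_e$ be the proper subtree of $C$ rooted at $e$'s lower endpoint, so that by Corollary \ref{thm:PrunedBaswanaSenRunningProperties} we have $|T_e| = O(n^{1-\eps})$. I would then account for traffic through $e$ inside $C$ in four pieces: (i) the direct-send upcast of $m_v$ contributes at most one message per broadcasting $v \in T_e$, hence $O(n^{1-\eps})$; (ii) the direct-send downcast of aggregate packets contributes one packet per $u \in R(C)$ whose designated endpoint sits in $T_e$, hence at most $|R(C)| \le n$; (iii) the receive-step upcast, after exploiting that $\mathcal{A}$ is aggregation-based so that each internal tree node can de-duplicate by sender identifier and compress, contributes at most one unit per distinct broadcaster with an inter-communication edge landing in $T_e$, hence at most $|B_p| \le n$; and (iv) the receive-step downcast contributes one aggregate per node of $T_e$, hence $O(n^{1-\eps})$. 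Summing inside $C$ gives $\tilde O(n)$, and because $e$ can be a cluster edge in at most one cluster per level, hence in at most $\kappa+1 = O(\log n)$ clusters across $\mathcal{H}$, the per-phase total remains $\tilde O(n)$.

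The main obstacle will be piece (iii). A literal reading of the algorithm has every receiver $v \in T_e$ re-uploading each message it received through an inter-communication edge, which could cause the same $m_w$ to travel up via many siblings of the tree and push the count past $\tilde O(n)$. Cleaning this up requires invoking Definition \ref{def:aggregationBasedAlgorithm} at intermediate tree nodes, so that each cluster edge forwards at most one $\tilde O(1)$-bit witness per distinct broadcaster that has already passed through it, yielding the $|B_p|$-style bound used above. The other pieces are routine, relying only on the proper-subtree bound of the pruned hierarchy and property (b) of Theorem \ref{theorem:BaswanaSenProperties}.
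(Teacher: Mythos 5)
Your proposal reaches the same conclusion as the paper and follows the same overall plan (bound each upcast/downcast operation per cluster edge, then multiply by the $\tilde O(\kappa)=\tilde O(1)$ clusters any edge can belong to, one per level), so in broad strokes the approaches match. But you make two choices that are worth calling out. First, your decomposition into sub-steps (i)--(iv) is finer than the paper's: the paper states, for every upcast or downcast, that ``at most $\tilde O(1)$ messages per node of $C$'' cross cluster edges and multiplies by $|C|\le n$, and it does not invoke the pruning bound $|T_e|=O(n^{1-\eps})$ in this lemma at all; you invoke the subtree bound in pieces (i) and (iv), where it is harmless but not needed since $|T_e|\le n$ already suffices and pieces (ii) and (iii) dominate at $\Theta(n)$ anyway. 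Second, and more substantively, your flagged obstacle in piece (iii) is a real one. Because $w$ may have its own outgoing $F^*$ edge into $C$ and also lie on the far side of some $v_2\in C$'s outgoing $F^*$ edge, the same $m_w$ can arrive at two distinct nodes of $T_e$; and a node $v$'s in-degree in $F^*$ is not $\tilde O(1)$, so the literal ``upcast every received message over all cluster trees you belong to'' reading does not immediately give the paper's per-node $\tilde O(1)$ bound. The paper's proof glosses over this; it implicitly assumes that each received message is upcast only over the one cluster for which the receiving node is the designated entry point, which makes the count at most $|L_i|+|B_p\cap C|\le O(n)$ without any mid-tree aggregation. Your alternative remedy---de-duplicating/aggregating by sender identifier at internal tree nodes---also caps the per-cluster-edge count at the number of distinct broadcasters reaching $T_e$, hence $\tilde O(n)$, so the conclusion stands, but note it is a different (and arguably stronger) mechanism than what the paper's prose relies on, and it is not needed if you instead sharpen the algorithm description to the restricted upcast.
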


\begin{proof}
    During the simulation, communication over the cluster edges is done via the upcast and downcast operations. First, note that for every cluster $C$ and every upcast operation (resp., the downcast operation in the receive step), at most $\tilde{O}(1)$ messages are upcast to (resp., downcast from) the center per node (of $C$) over the cluster tree edges (and only them), where the $\tilde{O}(1)$ overhead comes from the size of the aggregate packets. Similarly, for every cluster $C$ and the downcast operation in the direct send sub-step, at most $\tilde{O}(1)$ messages are downcast from the center per node neighboring $C$, and over the cluster edges (and only them). Hence, at most $\tilde{O}(n)$ different messages transit through any given cluster edge per upcast operation (resp., downcast operation). Now, it suffices to bound the number of downcast and upcast operations a given cluster edge can participate. As any level of the pruned Baswana-Sen cluster hierarchy consists of disjoint clusters, any cluster edge can be in at most $O(\kappa) = \tilde{O}(1)$ different trees in the hierarchy, and thus can participate in at most $\tilde{O}(1)$ upcast and downcast operations. Thus, the maximum congestion over cluster edges is $\tilde{O}(n)$ per phase.

    As for the other edges, messages only transit over inter-communication edges during the simulation. For any level of the cluster hierarchy, from the description of the direct send sub-step, it is clear a single message transits over any given inter-communication edge. Hence, over the $O(\kappa) = \tilde{O}(1)$ levels of the cluster hierarchy, the maximum congestion over inter-communication edges is at most $\tilde{O}(1)$ per phase.
\end{proof}

Given that at most $\tilde{O}(n)$ messages transit through any one edge during any one phase of the simulation, and cluster trees have depth $\tilde{O}(1)$, we can pipeline these $\tilde{O}(n)$ messages messages (with some priority that depends on the level of a cluster the message is being upcast or downcast over) and obtain that $\tilde{O}(n)$ rounds suffice for all operations (in particular, the downcast and upcast operations) described above to complete successfully. By setting the number of rounds per phase accordingly, we get the following corollary. 

\begin{corollary}
\label{cor:simulationSuccessfulOperations}
    Consider any one phase of $\mathcal{A}'$. Then, the downcast and upcast operations within the phase terminate successfully within the phase's $\tilde{O}(n)$ rounds.
\end{corollary}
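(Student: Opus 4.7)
The plan is to derive this corollary as a direct consequence of Lemma \ref{lem:simulationPhaseCongestion} together with the shallow depth of the cluster trees in a pruned Baswana-Sen hierarchy. Since every cluster tree has depth $O(\kappa) = \tilde{O}(1)$, and Lemma \ref{lem:simulationPhaseCongestion} caps the per-phase load at $\tilde{O}(n)$ messages on any cluster edge and $\tilde{O}(1)$ messages on any non-cluster (inter-communication) edge, the standard pipelining arguments behind Lemmas \ref{lem:upcast} and \ref{lem:downcast} should give the desired bound. Concretely, if a single upcast or downcast has $M$ distinct packets, it completes in $O(M + \kappa) = \tilde{O}(M)$ rounds; plugging in $M = \tilde{O}(n)$ yields $\tilde{O}(n)$ rounds per operation.

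The main step is to make this pipelining argument airtight in the presence of multiple simultaneous upcast/downcast operations that share edges: in a single phase, each cluster at each level may be running its own upcast (for aggregating broadcast messages) and its own downcast (for disseminating aggregate packets back to cluster members or neighbors), and any given edge can simultaneously participate as a cluster edge in up to $O(\kappa)$ cluster trees. I would handle this by tagging each $O(\log n)$-bit packet with its (level, cluster ID, operation-type) label and implementing a FIFO queue per edge that serves packets in a round-robin manner across these categories. Since there are only $\tilde{O}(1)$ distinct categories competing for any edge and the total packet count per edge is $\tilde{O}(n)$, each individual operation is slowed down by at most a polylogarithmic factor, preserving the $\tilde{O}(n)$ bound.

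Once the schedule is fixed, the correctness check reduces to verifying two things: (i) that every packet generated by the send sub-step (both the indirect messages along inter-communication edges and the aggregate packets produced at cluster centers) is accounted for in the $\tilde{O}(n)$ congestion count of Lemma \ref{lem:simulationPhaseCongestion}, and (ii) that the receive step's upcast, which aggregates indirect-send messages at cluster centers, likewise fits within the bound. Both follow immediately, because the congestion lemma was stated precisely for the aggregate of all send/receive sub-steps of a single phase.

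The mild obstacle I anticipate is purely bookkeeping: Lemmas \ref{lem:upcast} and \ref{lem:downcast} are stated for a single forest and a single source/destination pattern, whereas here we have multiple forests (one per level) and two interleaved flow directions. The priority scheduling above is the cleanest fix, but one must be careful that a packet from a deep cluster cannot be starved behind packets from shallow clusters; round-robin across categories handles this since at most $\tilde{O}(1)$ categories touch any edge. With this, the allocated $\tilde{O}(n)$ rounds per phase are sufficient for every upcast and downcast in the send, receive, and compute steps to terminate successfully, giving the corollary.
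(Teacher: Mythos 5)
Your proposal is correct and takes essentially the same approach as the paper: both arguments combine the $\tilde{O}(n)$ per-edge congestion bound from Lemma~\ref{lem:simulationPhaseCongestion} with the $\tilde{O}(1)$ depth of the cluster trees, and both resolve the interference between concurrent upcast/downcast operations sharing an edge by a level-aware scheduling rule (the paper phrases it as ``priority that depends on the level,'' you phrase it as round-robin over the $\tilde{O}(1)$ categories touching any edge — the same observation).
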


After which, we can argue that $\mathcal{A}'$ correctly simulates $\mathcal{A}$, even when it simulates the communication of $\mathcal{A}$ in an aggregate sense only.

\begin{lemma}
\label{lem:tradeoffSimulation1Correctness}
For any phase $p \in [1,\TA]$ and node $v \in V$, $v$ ends phase $p$ of $\mathcal{A}'$ in the same state it would end round $p$ of $\mathcal{A}$. 
\end{lemma}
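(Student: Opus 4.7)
The plan is to proceed by induction on the phase index $p$. The base case $p=1$ is immediate from the fact that every node starts $\mathcal{A}'$ in the same initial state as in $\mathcal{A}$. For the inductive step, assume that at the start of phase $p$ every node $v$ is in the same state it would be in at the start of round $p$ of $\mathcal{A}$. In particular, the set $B_p$ of broadcasters and every broadcast message $m_u$ (for $u \in B_p$) is determined identically in both executions, and each cluster center knows, from the preprocessing, the whole cluster's topology and which edges are inter-communication edges, so it can locally determine $B_p(u) \cap C$ for any $u$ that neighbors the cluster $C$.

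The heart of the proof is the following claim: for every node $v$ and every neighbor $u \in B_p(v)$, the message $m_u$ is ``accounted for'' in exactly one aggregate that $v$ locally combines at the end of phase $p$. To prove this, fix such a pair $(u,v)$ and apply Theorem~\ref{theorem:BaswanaSenProperties}(c) to the edge $(u,v)$ with, say, $u \in L_i$, $v \in L_j$ and $i \le j$. In case~(1), $u$ and $v$ belong to a common cluster $C \in \mathcal{C}_{i-1}$: then in the indirect-send sub-step $u$ upcasts $m_u$ (with its ID) to the center of $C$ and the center includes $m_u$ in the aggregate of messages from $B_p(v)$ that it downcasts to $v$ during the receive step. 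In case~(2) there is an inter-communication edge $e = (u,w) \in F_i$ with $w$ lying in a cluster $C \in \mathcal{C}_{i-1}$ that also contains $v$: then $u$ directly sends $m_u$ over $e$ in the indirect send, $w$ upcasts it to the center of $C$, and again the center folds $m_u$ into the aggregate for $v$ and downcasts it in the receive step. The only remaining possibility is the symmetric version of case~(2), where an inter-communication edge is incident to $v$ rather than $u$: then $v \in R(C)$ for some cluster $C$ containing $u$, and the direct-send sub-step computes, at the center of $C$, the aggregate of all messages from $B_p(v) \cap C$ (which contains $m_u$) and routes it to $v$ via the designated endpoint and inter-communication edge. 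Corollary~\ref{cor:simulationSuccessfulOperations} guarantees that all upcasts and downcasts invoked above finish within the phase's $\tilde O(n)$ rounds.

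Conversely, no ``spurious'' message is ever incorporated into $v$'s state: every message upcast or placed in an aggregate packet is tagged with the ID of its originator, and cluster centers only form aggregates restricted to $B_p(v)$ (resp.\ $B_p(u) \cap C$) using the neighborhood information learned during preprocessing. Combining these two directions, the multiset of messages used in the final \emph{Compute} step of phase $p$ at $v$ partitions $B_p(v)$ into groups, where each group is delivered to $v$ as a single aggregate (one per cluster containing $v$, one per incident inter-communication edge carrying an aggregate packet, and at most one per incident inter-communication edge carrying a raw broadcast that was not absorbed by any cluster). By Definition~\ref{def:aggregationBasedAlgorithm} applied to the partition of $B_p(v)$ induced by these groups, $v$'s local computation on the union of these aggregates yields exactly $f_{v,p}(\textsf{state}_v, \{m_u : u \in B_p(v)\})$, which is precisely the state $v$ would hold at the end of round $p$ of $\mathcal{A}$. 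This completes the inductive step.

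The main subtlety I expect to have to work through carefully is the case analysis above: concretely, making sure that for every $(u,v)$ with $u \in B_p(v)$ the cluster center that ends up aggregating $m_u$ can actually identify $u$ as a neighbor of $v$ (so that $m_u$ is indeed placed in the aggregate targeted at $v$), and that no message is simultaneously double-counted in both the direct-send and indirect-send paths in a way that would break the partition hypothesis of Definition~\ref{def:aggregationBasedAlgorithm}. Both points are handled by the preprocessing---cluster centers know the full 1-hop neighborhood of every cluster node as well as which incident edges lie in which $F_i$---together with the unambiguous choice, for each pair $(u,v)$, of a single cluster/edge through which $m_u$ reaches $v$'s aggregate, as dictated by the canonical case in Theorem~\ref{theorem:BaswanaSenProperties}(c).
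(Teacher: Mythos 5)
Your proof follows essentially the same structure as the paper's: induction on phases with the same key claim (for every node $u$, the aggregate of messages from $B_p(u)$ is correctly recovered by the end of the Compute step), the same appeal to property~(c) of Theorem~\ref{theorem:BaswanaSenProperties} to split the broadcasting neighbors into ``indirectly-reachable'' and ``directly-reachable'' groups, the same invocation of Corollary~\ref{cor:simulationSuccessfulOperations} for timely completion, and the same final appeal to Definition~\ref{def:aggregationBasedAlgorithm}. One small inaccuracy: in your case~(1) you attribute the broadcaster's upcast of $m_u$ to the ``indirect-send sub-step,'' but per the algorithm's description the indirect send only pushes $m_u$ across inter-communication edges---the upcast of a broadcaster's own message to its cluster centers happens in the \emph{receive} step (``all broadcasting nodes in $B_p$ also upcast their message over all cluster trees they belong to''). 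This is a labeling slip rather than a logical gap, since the mechanism you rely on does exist in the phase, just in a different sub-step. Your flagged concern about double-counting across the direct-send and indirect-send paths is legitimate and is not dwelt on in the paper either; both proofs implicitly lean on the aggregation function tolerating overlaps in the groups (which holds for the intended $\min$-style aggregates used in the BFS applications).
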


\begin{proof}
We prove, by induction on $p$, a slightly different statement: for any phase $p \in [1,\TA]$ and node $v \in V$, $v$ starts phase $p$ of $\mathcal{A}'$ in the same state it would start round $p$ of $\mathcal{A}$. 
The base case is trivially guaranteed by the simulation's description. (Note that the base case is trivial, unlike the simulation in Section \ref{sec:simulation}; in the latter, nodes are simulated by cluster centers, and these centers must gather the input of all their simulated nodes for the base case to hold.) Hence, we now assume that the induction hypothesis holds for some phase $p \in [1,\TA-1]$, and we will prove the induction step by showing the following claim: for any node $u \in V$, if the aggregate of messages sent by nodes in $B_p(u)$ is $\mu$, then $u$ computes $\mu$ by the end of the compute step of phase $p$. Indeed, this suffices for each node $u \in V$ to simulate its local state change from round $p$ to round $p+1$ in the simulated algorithm $\mathcal{A}$.

We separate nodes in $B_p(u)$ into two node subsets $I_p(u)$ and $D_p(u)$, where the first contains all nodes in $B_p(u)$ that are incident to some inter-communication edge leading to some cluster containing $u$, as well as all nodes in $B_p(u)$ that share a cluster with $u$, and the second contains all other nodes in $B_p(u)$. By property (c) holding for pruned Baswana-Sen cluster hierarchies (see Theorem \ref{theorem:BaswanaSenProperties} and Corollary \ref{thm:PrunedBaswanaSenRunningProperties}), it holds that for any node $v \in D_p(u)$, there must be an inter-communication edge incident to $u$ leading to a cluster containing $v$, but not to $v$ itself. 

First, the indirect send sub-step ensures that for any node $v \in I_p(u)$, the message $m_v$ sent by $v$ is either sent over to some cluster $C$ containing $u$, or $v$ belongs to a cluster containing $u$. In which case, as long as the upcast and downcast operations within the receive step completes successfully---and this holds true by Corollary \ref{cor:simulationSuccessfulOperations}---the center of $C$ computes an aggregate packet with $m_v$ as one of the inputs, and sends it to $u$. In summary, for any node $v \in I_p(u)$, $u$ receives some (not necessarily unique) aggregate packet, whose inputs to $agg_{u,p}$ contains $m_v$ and possibly other messages, but only from nodes in $B_p(u)$.

Next, the direct send sub-step ensures that for any node $v \in D_p(u)$, the message $m_v$ is upcasted to the center of all clusters containing $v$. Then, there exists at least one center $C$ that computes an aggregate packet, using $agg_{u,p}$ over inputs that are exactly the messages from nodes in $B_p(u) \cap C$, and downcasts that aggregate packet to some node $w$. Once the downcast operation completes successfully---and this holds true by Corollary \ref{cor:simulationSuccessfulOperations}---then $w$ in turn sends it over an incident inter-communication edge (directly) to $u$. In summary, for any node $v \in D_p(u)$, $u$ receives some aggregate packet, whose inputs to $agg_{u,p}$ contains $m_v$, and possibly other messages, but only from nodes in $B_p(u)$.

Finally, in the Compute step, node $u$ computes the aggregate of all packets received during direct send and receive. Since doing so aggregates all messages in $I_p(u)$ and $U_p(u)$, and $B_p(u) = I_p(u) \cup U_p(u)$, we get that $u$ computes $\mu$. By the definition of an aggregate-based algorithm (see Definition \ref{def:aggregationBasedAlgorithm}), each node $u \in V$ can simulate its local state change from round $p$ to round $p+1$ in the simulated algorithm $\mathcal{A}$.

As a final step, to obtain the lemma statement, we take the above statement for $p = \TA$ and apply the above claim one more time.
\end{proof}

Now, we can show runtime, message complexity and maximum edge congestion upper bounds for the simulation part.

\begin{lemma}
\label{lem:tradeoffSimulation1Simulation}
    The simulation part takes $\tilde{O}(\TA \cdot n)$ rounds, and $\tilde{O}(\TA \cdot n^{1+\eps})$ messages with high probability. Moreover, it induces a maximum edge congestion of $\tilde{O}(\TA)$ for any non cluster edge.
\end{lemma}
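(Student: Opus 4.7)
The plan is to bound the three quantities separately, with the message complexity being by far the most involved. For the round complexity and non-cluster-edge congestion, I would invoke Corollary~\ref{cor:simulationSuccessfulOperations} and Lemma~\ref{lem:simulationPhaseCongestion}, respectively: the former guarantees that each phase's $\tilde{O}(n)$-round budget suffices for all upcast and downcast operations to terminate, so summing over $\TA$ phases gives $\tilde{O}(\TA \cdot n)$ rounds; the latter bounds the per-phase non-cluster congestion at $\tilde{O}(1)$, and summing gives $\tilde{O}(\TA)$.

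The main effort will go into the message complexity bound, which I would derive in three stages. The first stage is to establish a key auxiliary claim, namely that with high probability the total size of the inter-communication edge set $F$ is $\tilde{O}(n^{1+\eps})$. This would follow from property~(b) of Theorem~\ref{theorem:BaswanaSenProperties} (applied to the pruned hierarchy via Corollary~\ref{thm:PrunedBaswanaSenRunningProperties}): each node $v \in L_i$ contributes at most $\tilde{O}(n^\eps)$ edges to $F_i$, and since $L_1, \ldots, L_\kappa$ partition $V$ and no other nodes contribute edges to $F$, summing over levels gives $|F| \le \tilde{O}(n \cdot n^\eps)$.

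The second stage is the per-phase accounting. In the indirect send sub-step, each broadcaster transmits one message per incident inter-communication edge, for at most $2|F| = \tilde{O}(n^{1+\eps})$ messages. In the direct send sub-step, I would bound separately (i) the upcasts of $m_v$ over the at most $\kappa$ clusters containing each broadcaster, each tree of depth $O(\kappa)$, totaling $\tilde{O}(n)$ messages across all broadcasters; (ii) the downcast of a $\tilde{O}(1)$-sized aggregate packet from each cluster center $C$ to each node in $R(C)$, where one uses the observation that $\sum_C |R(C)| \le |F|$ since each inter-communication edge accounts for at most one such $(C, u)$ pair, yielding $\tilde{O}(\kappa \cdot |F|) = \tilde{O}(n^{1+\eps})$ messages via Lemma~\ref{lem:downcast}; and (iii) the final hop of each aggregate packet over the inter-communication edge, contributing another $\tilde{O}(n^{1+\eps})$. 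The receive step admits a symmetric bound: $\tilde{O}(n^{1+\eps})$ upcasts of the (at most $|F|$) indirect messages, plus $\tilde{O}(n)$ downcasts of aggregate packets to cluster members. The third stage is simply to multiply the per-phase $\tilde{O}(n^{1+\eps})$ bound by $\TA$.

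The hard part will be pinning down that both the indirect-send traffic and the number of downcasts inside the direct-send sub-step are proportional to $|F|$ rather than to a much larger quantity such as $\sum_v \deg(v)$ or the sum of cluster sizes times degree: this is precisely what gives the crucial $n^\eps$ savings over the naive $\tilde{O}(\TA \cdot n^2)$ bound. Once the claim $|F| = \tilde{O}(n^{1+\eps})$ is in hand and the cluster-tree depth $O(\kappa) = \tilde{O}(1)$ is used to bound the per-operation overhead, the remaining arithmetic and the ``with high probability'' qualifier (inherited from property~(b)) fall out cleanly.
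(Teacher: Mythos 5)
Your proposal is correct and follows essentially the same route as the paper's proof: both bound the round complexity and non-cluster-edge congestion via Corollary~\ref{cor:simulationSuccessfulOperations} and Lemma~\ref{lem:simulationPhaseCongestion} respectively, and both bound the per-phase message cost by separately accounting for the indirect send, direct send, and receive steps, anchoring everything on the $\tilde{O}(n^{1+\eps})$ bound on $|F|$ from Theorem~\ref{theorem:BaswanaSenProperties}(b) together with the $O(\kappa)=\tilde{O}(1)$ cluster-tree depth via Lemmas~\ref{lem:upcast} and~\ref{lem:downcast}. Your accounting is in fact slightly more explicit than the paper's in two places (charging the direct-send downcast and the receive-step upcast at $\tilde{O}(\kappa |F|)$ rather than the paper's stated $\tilde{O}(n)$), but these coarser bounds are still $\tilde{O}(n^{1+\eps})$, so the conclusion is unchanged.
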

\begin{proof}
    The runtime follows directly from the simulation's description, and the maximum congestion per edge statement from adding up over all $\TA$ phases the congestion given by Lemma \ref{lem:simulationPhaseCongestion}. 
    Finally, we bound the message complexity of any given phase $p \in [1, \TA]$ by $\tilde{O}(n^{1+\eps})$ in three parts. First, the number of messages sent during the indirect send sub-step is $\tilde{O}(n^{1+\eps})$ with high probability since there are at most $\tilde{O}(n^{1+\eps})$ inter-communication edges in the pruned Baswana-Sen cluster hierarchy with high probability (see property (b) in Theorem \ref{theorem:BaswanaSenProperties} and Corollary \ref{thm:PrunedBaswanaSenRunningProperties}) and at most $O(1)$ (aggregate) packets (thus $\tilde{O}(1)$ messages) transit over any one such edge. Second, the number of messages sent during the direct send sub-step is also $\tilde{O}(n^{1+\eps})$. Indeed, the upcast and downcast operations use $\tilde{O}(n)$ messages, over all clusters in all levels of the hierarchy, since each node upcasts a single message, and at most one (aggregate) packet (thus $\tilde{O}(1)$ messages)  is downcast per node neighboring the downcasting cluster. After which, these aggregate packets are sent over inter-communication edges, resulting in at most $O(n^{1+\eps})$ messages with high probability. Third and finally, the number of messages sent during the receive is $\tilde{O}(n)$ since each cluster (in all levels of the hierarchy) executes one upcast and one downcast operation, where at most one aggregate packet (thus $\tilde{O}(1)$ messages) is upcast from, and downcast to, any given node in the cluster. When adding up the message complexity of the three parts above, over all phases, we get $\tilde{O}(\TA \cdot n^{1+\eps})$ message complexity with high probability.
\end{proof}

Finally, we can combine the previous statements (Lemma \ref{lem:tradeoffSimulation1Correctness}, and Lemmas \ref{lem:tradeoffSimulation1Preprocessing} and \ref{lem:tradeoffSimulation1Simulation}), to prove our simulation has the desired correctness, runtime, message complexity and maximum edge congestion properties---see Theorem \ref{thm:tradeoffSimulation1} below. 

\simulationGeneral*

\subsubsection{Improved Simulation for $\eps \in [1/2,1]$:} This simulation takes as input a pruned Baswana-Sen cluster hierarchy with parameter $\eps \geq 1/2$ (i.e., with at most 3 levels). Note that the cluster hierarchy comprises a partition $\mathcal{C}_0$ of $V$ into singleton clusters, and possibly a partition $\mathcal{C}_1$ of some subset $V_1 = V \setminus L_1$ into depth 1 clusters---which we call \textit{star clusters}. Moreover, the following properties (see Lemma \ref{lem:starClusterHierarchyProperties}) can be easily shown to hold with high probability.

\begin{lemma}
\label{lem:starClusterHierarchyProperties}
    Any pruned Baswana-Sen cluster hierarchy with parameter $\eps \geq 1/2$ satisfies the following properties with high probability:
    \begin{itemize}
        \item For any node $v \in L_1$, $v$ has at most $O(n^{\eps} \log n)$ incident edges in $G$, and all are in $F_1$.
        \item $|\mathcal{C}_1| =  \tilde{O}(n^{1-\eps})$.
    \end{itemize}
\end{lemma}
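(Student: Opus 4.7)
The plan is to handle the two properties essentially independently, exploiting the Bernoulli sampling that defines $S_1$. First I would observe that for $\eps \ge 1/2$ we have $\kappa = \lceil 1/\eps \rceil \in \{1,2\}$, so the hierarchy has at most three levels and $\mathcal{C}_1$ consists of depth-$\le 1$ \emph{star} clusters centered at the sampled set $S_1$. Crucially, every proper subtree of such a cluster contains exactly one node, while the pruning threshold $n^{1-\eps}$ is strictly greater than $1$ whenever $\eps < 1$ (and the pruning range $[1,\kappa-1]$ is empty when $\eps = 1$). Either way, pruning leaves $\mathcal{C}_1$ untouched, so $\mathcal{C}^*_1 = \mathcal{C}_1 = \mathcal{R}_0$ and no new inter-cluster communication edges are introduced at level $1$ beyond those already in $F_1$ (since $\mathcal{C}^*_0 = \mathcal{C}_0$ as well).

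For the first property, I would use the fact that $L_1$ is exactly the set of nodes with no sampled neighbor. Since nodes enter $S_1$ independently with probability $n^{-\eps}$, a node $v$ of degree $d(v)$ satisfies
\[
    \Pr[v \in L_1] \;=\; (1 - n^{-\eps})^{d(v)} \;\le\; \exp\bigl(-d(v) \cdot n^{-\eps}\bigr).
\]
Taking $d(v) \ge c \cdot n^{\eps} \ln n$ for a sufficiently large constant $c$ makes this at most $n^{-c}$, and a union bound over all $n$ nodes shows that with high probability every node in $L_1$ has degree $O(n^{\eps} \log n)$. Then, for any such $v$ and any neighbor $u$, the singleton $\{u\}$ is a neighboring cluster of $v$ in $\mathcal{C}^*_0$, so by the construction of the pruned hierarchy the edge $(v,u)$ is placed into $F^*_1 = F_1$; hence every incident edge of $v$ lies in $F_1$.

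For the second property, since $\mathcal{C}_1 = \mathcal{R}_0$, we have $|\mathcal{C}_1| = |S_1|$, a binomial random variable with mean $n^{1-\eps}$. A standard Chernoff bound then gives $|\mathcal{C}_1| \le O(n^{1-\eps} + \log n) = \tilde{O}(n^{1-\eps})$ with high probability (and trivially $|\mathcal{C}_1| = 0$ when $\eps = 1$ since $S_1 = \emptyset$).

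I do not anticipate any real obstacle here: the only subtlety is the bookkeeping around pruning in the small-$\kappa$ regime, which is what justifies the clean identity $\mathcal{C}_1 = \mathcal{R}_0$ (and $F_1 = F^*_1$). Once that is in place, both properties follow immediately from elementary concentration for sums of independent Bernoulli variables, combined in the first property with the trivial structural observation about $F^*_1$.
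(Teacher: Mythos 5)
Your proof is correct, and since the paper states Lemma~\ref{lem:starClusterHierarchyProperties} without an explicit proof (merely asserting the properties "can be easily shown"), your argument fills in exactly the kind of elementary reasoning the authors presumably had in mind: the structural reduction to star clusters for $\kappa\in\{1,2\}$, verification that pruning is vacuous in this regime, and Chernoff-plus-union-bound concentration over the Bernoulli sampling defining $S_1$. The one tiny slip is the claimed \emph{equality} $\Pr[v\in L_1]=(1-n^{-\eps})^{d(v)}$; membership in $L_1$ also requires $v\notin S_1$, so the exponent should be $d(v)+1$, but the inequality you actually use ($\le\exp(-d(v)n^{-\eps})$) is unaffected and the conclusion stands. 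Everything else---identifying $\mathcal{C}_1$ with $\mathcal{R}_0$, noting $F^*_1=F_1$ because $\mathcal{C}^*_0=\mathcal{C}_0$, the boundary case $\eps=1$ handled via $S_1=\emptyset$---is clean and matches the structure the paper relies on in the surrounding text.
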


The improved simulation also consists of a preprocessing part followed by a simulation part. The preprocessing part is identical to the preprocessing for general $\eps$ (see Sec.~\ref{sec:general}), whereas the simulation part contains some key differences. Recall that the simulation part for general $\eps$ works in phases, each simulating one round of $\mathcal{A}$ through the three \textit{send, receive and compute} steps. Here, the key differences are that each phase takes significantly less rounds---which is allowed since we limit the congestion over cluster edges to $\tilde{O}(n^{1-\eps})$ per phase---and the send step is implemented differently (but the receive and compute steps remain identical).

More concretely, this send step is implemented as follows. Consider some node $v$ that broadcasts a message $m_v$ in round $p$ of $\mathcal{A}$. 
\begin{itemize}
    \item If $v \in L_1$, then $v$ simply sends $m_v$ over all incident inter-communication edges---this implements the direct send sub-step (and there is no indirect send sub-step for $L_1$ nodes).
    \item Otherwise, $v$ is part of a star cluster, say $C$, in which case $v$ sends $m_v$ to its parent (i.e., the center of $C$). After receiving messages from all broadcasting nodes in $C$ (i.e., from $B_p \cap C$), the center executes the following local computations for any neighboring cluster $C' \in \mathcal{C}_1$: 
        \begin{itemize}
            \item The center computes the set $R_p(C,C')$ comprising the neighbors in $C'$ of nodes in $B_p \cap C$---i.e., nodes in $N(B_p \cap C) \cap C'$---using its knowledge of all edges incident to nodes in $C$. (Note that $C' \cap C = \emptyset$.)
            \item Next, the center computes a maximal matching $M(C,C') \subseteq E$ between nodes in $B_p \cap C$ and $R_p(C,C')$. 
            \item For each edge $e = (w,u) \in M(C,C')$, where w.l.o.g. $w \in C$, the center computes a message $m_1(e)$ containing the ID of $w$ as well as the message $m_w$ broadcasted by $w$, and a aggregate packet $m_2(e)$ which consists of $agg_{u,p}$ applied to all messages sent by nodes in $B_p(u) \cap C$---the resulting aggregate packet contains $\tilde{O}(1)$ bits, by Definition \ref{def:aggregationBasedAlgorithm}, and thus can be transmitted using $\tilde{O}(1)$ only. 
        \end{itemize}
        After these local computations, for any $C' \in \mathcal{C}_1$ and any edge $e \in M(C,C')$, the center sends $m_1(e)$ and $m_2(e)$ to the endpoint of $e$ in $C$. (Note that this can amount to at most $\tilde{O}(n^{1-\eps})$ messages on the edge from the center to $w$, w.h.p., since $|\mathcal{C}_1| = \tilde{O}(n^{1-\eps})$ w.h.p.) Upon reception, that endpoint node sends two different messages, containing respectively $m_1(e)$ and $m_2(e)$, through $e$. Over the entire cluster $C$, the first messages implement the indirect send sub-step, whereas the second messages (or packets) implement the direct send sub-step.
\end{itemize} 

\paragraph{Analysis.} We start by showing that a pruned Baswana-Sen cluster hierarchy with parameter $\eps \geq 1/2$ leads to significantly faster preprocessing.

\begin{lemma}
\label{lem:tradeoffSimulation2Preprocessing}
    The preprocessing part takes $\tilde{O}(n)$ rounds and $\tilde{O}(m)$ messages. Moreover, it induces a maximum edge congestion of $\tilde{O}(1)$ for any non cluster edge.
\end{lemma}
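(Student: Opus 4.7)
The plan is to exploit the fact that when $\eps \geq 1/2$ we have $\kappa = \lceil 1/\eps \rceil \leq 2$, so the pruned Baswana-Sen cluster hierarchy has at most three levels: level $0$ is the partition of $V$ into singleton clusters, level $1$ consists of depth-$1$ \emph{star clusters} (as noted just before Lemma \ref{lem:starClusterHierarchyProperties}), and level $2$ is empty. Consequently, the only non-trivial aggregation required by the second preprocessing step occurs at level $1$, over star clusters.

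For the first preprocessing step (leader election, BFS construction, counting $n$, and broadcasting $n$), standard \congest{} techniques give $\tilde{O}(D) = \tilde{O}(n)$ rounds, $\tilde{O}(m)$ messages, and only $\tilde{O}(1)$ congestion on each edge of $G$, exactly as in the corresponding part of the proof of Lemma \ref{lem:tradeoffSimulation1Preprocessing}. So the first step already meets all the claimed bounds.

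For the second step, I would argue as follows. Each star cluster $C \in \mathcal{C}_1$ has a center $c$ and a set of children, each connected to $c$ by a single cluster edge. Each child $v$ sends its $O(d(v)\log n)$-bit neighborhood description (together with the hierarchy-related flags) directly to $c$ in $O(d(v))$ rounds and $O(d(v))$ messages, using only the edge $(v,c)$. Since distinct children use disjoint cluster edges, all nodes inside a star cluster transmit in parallel; since the clusters in $\mathcal{C}_1$ are vertex-disjoint, all clusters are processed in parallel as well. Hence step $2$ runs in $O(\max_{v \in V_1} d(v)) = O(n)$ rounds, uses $\sum_{v \in V_1} O(d(v)) = O(m)$ messages, and only traverses cluster edges. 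Therefore non-cluster edges incur no additional congestion beyond the $\tilde{O}(1)$ from step $1$, and summing the two steps yields the claimed bounds.

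The main (and mild) obstacle is simply noticing why the depth-$1$ structure of star clusters sidesteps the pipelining blow-up in the general case (Lemma \ref{lem:tradeoffSimulation1Preprocessing}): there, a cluster tree of depth $\Theta(\kappa)$ containing a subtree of up to $\Theta(n^{1-\eps})$ nodes forces an $\Omega(n^{2-\eps})$ pipelining cost because intermediate nodes serve as bottlenecks for all of their subtree's traffic. With star clusters there are no intermediate nodes, so the bottleneck on any cluster edge is just the neighborhood size of its single child endpoint, giving the improved $O(n)$ runtime throughout the range $\eps \in [1/2,1]$.
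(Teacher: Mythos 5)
Your proof is correct and takes essentially the same approach as the paper: both arguments exploit that for $\eps\ge 1/2$ the only non-singleton clusters are depth-$1$ star clusters, so the level-$1$ aggregation has each node send its $O(d(v)\log n)$-bit neighborhood description directly to its cluster center over a single cluster edge, giving per-edge congestion $\tilde O(d(v))=\tilde O(n)$, runtime $\tilde O(n)$, message count $\sum_v O(d(v))=O(m)$, and no traffic on non-cluster edges. The paper's version is much terser (a few lines), while your write-up spells out the parallelism across children and across clusters and explicitly contrasts with the $\tilde O(n^{2-\eps})$ pipelining bottleneck in Lemma~\ref{lem:tradeoffSimulation1Preprocessing}, but the underlying reasoning is the same.
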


\begin{proof}
    The improvements to the preprocessing lie in the runtime of the second step. Indeed, all nodes may belong to at most one non-singleton cluster, which is of depth 1, and may send at most $\tilde{O}(n)$ to their parent---the center---in this cluster. Hence, the congestion over cluster edges is only $\tilde{O}(n)$, and the runtime directly follows.
\end{proof}

Next, we show that the above described (modified) send step, when combined with the properties of the pruned Baswana-Sen cluster hierarchies for $\eps \geq 1/2$, leads to improved maximum edge congestion (over cluster tree edges) per simulation phase.

\begin{lemma}
\label{lem:simulationPhaseCongestion2}
    Consider any phase of $\mathcal{A}'$. The maximum congestion is $\tilde{O}(n^{1-\eps})$ over cluster trees edges (w.h.p.), and $\tilde{O}(1)$ over other edges.
\end{lemma}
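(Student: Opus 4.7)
The plan is to analyze cluster tree edges and non-cluster edges separately, tracking upward and downward traffic on each edge during the send and receive sub-steps of a single phase. Recall that with $\eps \geq 1/2$ the pruned Baswana-Sen hierarchy has at most three levels: singleton clusters $\mathcal{C}_0$, depth-1 star clusters $\mathcal{C}_1$, and $\mathcal{C}_2 = \emptyset$. The only cluster tree edges are therefore edges from a star cluster center $c$ to one of its leaves $v$, and Lemma~\ref{lem:starClusterHierarchyProperties} gives $|\mathcal{C}_1| = \tilde{O}(n^{1-\eps})$ with high probability; I would condition on this event throughout.

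For a cluster tree edge $(c, v)$, where $c$ is the center of a star cluster $C$ and $v$ is a leaf, I would bound the per-phase congestion in four pieces. In the send step, upward traffic is just $v$'s own broadcast $m_v$ if $v \in B_p$, hence $O(1)$. The downward direct-send traffic consists, for each matching $M(C, C')$ in which $v$ is the $C$-side endpoint, of the pair $(m_1(e), m_2(e))$; since $M(C, C')$ is a matching, $v$ appears in at most one of its edges, so the downward contribution is $O(1)$ messages per neighboring star cluster, hence $O(|\mathcal{C}_1|) = \tilde{O}(n^{1-\eps})$ in total. In the receive step, $v$ upcasts every $m_1(e)$ it received as the $C$-side receiver of a matching $M(C'', C)$ with $C''$ as the sending cluster; by the same matching argument, $v$ receives at most one such message per sending cluster, again $\tilde{O}(n^{1-\eps})$. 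Finally, the downward receive-step traffic is just the single $\tilde{O}(1)$-bit aggregate packet destined for $v$. Summing the four pieces yields $\tilde{O}(n^{1-\eps})$ total congestion on $(c, v)$ per phase.

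For non-cluster edges, fix any edge $e = (x, y)$ not in a cluster tree. Two mechanisms can place messages on $e$ during a phase: (i) direct $L_1$ broadcasts---if $x \in L_1$ (resp.\ $y \in L_1$) broadcasts, it transmits $m_x$ (resp.\ $m_y$) over $e$, contributing $O(1)$ messages per $L_1$ endpoint; and (ii) matching-based transmissions---if $x$ and $y$ lie in star clusters $C$ and $C'$ respectively, then $e$ can participate in $M(C, C')$ and/or $M(C', C)$, each of which places the constant-size pair $(m_1(e), m_2(e))$ on $e$. Summing these contributions, $e$ carries at most $O(1)$ messages per phase, giving $\tilde{O}(1)$ congestion.

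The main obstacle is making sure that a leaf's matching-related traffic on its cluster edge is correctly bounded by $|\mathcal{C}_1|$ and not by its (potentially large) graph degree. A node in a star cluster can have many graph neighbors in $V_1$ spread across many other clusters, but the maximality of the matching is irrelevant here---what matters is that it is a matching, so that each leaf is involved in at most one edge of $M(C, C')$ (or $M(C'', C)$) per phase, allowing us to sum over star clusters and obtain $\tilde{O}(n^{1-\eps})$ with high probability. Once this observation is in place, the rest of the proof reduces to per-sub-step bookkeeping, using that each aggregate packet carries only $\tilde{O}(1)$ bits.
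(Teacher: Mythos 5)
Your proposal is correct and mirrors the paper's own argument. Both bound the cluster-tree-edge congestion per step by invoking that $M(C,C')$ is a matching (so each leaf of $C$ participates in at most one edge per neighboring star cluster) together with the w.h.p. bound $|\mathcal{C}_1| = \tilde{O}(n^{1-\eps})$ from Lemma~\ref{lem:starClusterHierarchyProperties}, applied once for the center-to-leaf direction of the send step and once for the upcast of indirectly received messages in the receive step; the paper dispatches the non-cluster-edge case with a one-line remark while you spell out the two mechanisms, which is fine. The only trivial omission is that in the receive step a broadcasting leaf also upcasts its own message, adding $O(1)$ — this does not affect the bound.
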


\begin{proof}
    Here, it suffices to focus on the maximum congestion over cluster tree edges. Note that we only need to consider nodes belonging to star clusters, i.e., nodes in $V_1$. 
    First, we bound that congestion for the send step. Note that every node sends at most one message to the center of the star cluster. After which, the center sends at most $\tilde{O}(n^{1-\eps})$ messages to any node in $C$, because $|\mathcal{C}_1| = \tilde{O}(n^{1-\eps})$ with high probability (by Lemma \ref{lem:starClusterHierarchyProperties}), and for any $C' \in \mathcal{C}_1$, then any two edges in the matching $M(C, C')$ (on the bipartite graph induced by $B_p \cap C$, and $R_p(C,C')$) must have different endpoints in $C$. In summary, the (cluster tree) edge congestion over the send step is $\tilde{O}(n^{1-\eps})$ with high probability. 
    
    Next, we bound the cluster tree edge congestion over the receive step. To do so, we first point out that, for any star cluster $C$, at most one message containing some node's ID (and thus corresponding to an indirect send) is sent to any node $u \not\in C$. Indeed, for any cluster $C' \in \mathcal{C}_1$, once again any two edges of the matching $M(C, C')$ (on the bipartite graph induced by $B_p \cap C$, and $R_p(C, C')$) must lead to different nodes in $C'$. Moreover, the clusters of $\mathcal{C}_1$ are disjoint (see Corollary \ref{thm:PrunedBaswanaSenRunningProperties}), hence each node receives at most one node's ID per other cluster. Since we also know that with high probability, there are at most $\tilde{O}(n^{1-\eps})$ star clusters in the cluster hierarchy (by Lemma \ref{lem:starClusterHierarchyProperties}), it follows that each node $u$ may receive up to at most $\tilde{O}(n^{1-\eps})$ IDs (from an indirect send sub-step), and thus $u$ sends that many messages to its parent, as well as an additional message with its ID if $u$ broadcasts in that phase, in the receive step. In summary, the (cluster tree) edge congestion over the receive step is $\tilde{O}(n^{1-\eps})$ with high probability. Finally, it is straightforward to show that the edge congestion is $\tilde{O}(1)$ over all other edges, and thus the lemma statement follows.
\end{proof} 

Given that at most $\tilde{O}(n^{1-\eps})$ messages transit through any one edge during any one phase of the simulation with high probability, it is straightforward to see that $\tilde{O}(n^{1-\eps})$ rounds suffice for all operations (in particular, the downcast and upcast operations) of each simulation phase to complete successfully, with high probability. Setting the number of rounds per phase accordingly, we get the following corollary.

\begin{corollary}
\label{cor:simulationSuccessfulOperations2}
    Consider any one phase of $\mathcal{A}'$. With high probability, all operations within the phase terminate successfully within the allocated $\tilde{O}(n^{1-\eps})$ rounds. 
\end{corollary}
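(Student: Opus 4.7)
The plan is to combine the per-phase congestion bound from Lemma \ref{lem:simulationPhaseCongestion2} with the structural fact that, for $\eps \geq 1/2$, we have $\kappa = \keps \leq 2$ and hence the pruned Baswana–Sen hierarchy has at most three levels; in particular, every non-singleton cluster in $\mathcal{C}_1$ is a star (depth exactly one) rooted at its center. Given this, every operation performed within a phase is either a one-hop send over an inter-communication edge, or an upcast/downcast restricted to a single star cluster.

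First, I would handle the inter-communication edges: by Lemma \ref{lem:simulationPhaseCongestion2}, the per-phase congestion on each such edge is $\tilde{O}(1)$ with high probability, so the direct cross-cluster transmissions (both in the send step and in the forwarding of $m_1(e)$, $m_2(e)$ from a node $w \in C$ to its matched counterpart in $C'$) complete in $\tilde{O}(1)$ rounds trivially.

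Second, I would bound the intra-cluster traffic using Lemmas \ref{lem:upcast} and \ref{lem:downcast}. For each star cluster $C \in \mathcal{C}_1$ and each of the $\tilde{O}(1)$ upcast/downcast operations carried out during the phase (the center gathering the $m_v$ from broadcasting nodes of $C$; the center downcasting the pairs $(m_1(e),m_2(e))$ to the relevant endpoints; the receive step's upcast of received IDs and downcast of the aggregate packets), Lemma \ref{lem:simulationPhaseCongestion2} guarantees that the total number of $\tilde{O}(1)$-bit messages traversing any one cluster edge is $\tilde{O}(n^{1-\eps})$ with high probability. Since the depth of the cluster tree is $d=1$ and the total message volume per operation is at most $\tilde{O}(n^{1-\eps})$, the upcast primitive completes in $O(\text{In}/\log n) = \tilde{O}(n^{1-\eps})$ rounds and the downcast primitive completes in $O(|M| + d) = \tilde{O}(n^{1-\eps})$ rounds, both with high probability.

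Finally, I would take a union bound over the constant number of upcast/downcast operations within a phase and over all $|\mathcal{C}_1| = \tilde{O}(n^{1-\eps})$ star clusters (which execute in parallel because they are vertex-disjoint), as well as over the high-probability events inherited from Lemma \ref{lem:simulationPhaseCongestion2} and Lemma \ref{lem:starClusterHierarchyProperties}. This yields that the entire phase completes in $\tilde{O}(n^{1-\eps})$ rounds with high probability. The only mild subtlety—hardly the main obstacle, but worth noting—is that the two distinct messages $m_1(e)$ and $m_2(e)$ per matched edge and the independent upcasts of different received IDs must share the same cluster edges; however, the congestion bound of Lemma \ref{lem:simulationPhaseCongestion2} already aggregates all such traffic, so a single pipelined schedule on the star cluster suffices.
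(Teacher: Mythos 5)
Your argument is correct and follows essentially the same line of reasoning as the paper, which merely asserts (in the paragraph preceding the corollary) that the $\tilde{O}(n^{1-\eps})$ per-edge congestion bound from Lemma~\ref{lem:simulationPhaseCongestion2}, combined with depth-$1$ cluster trees, immediately yields the round bound. You have simply spelled out the details the paper leaves implicit, invoking Lemmas~\ref{lem:upcast} and~\ref{lem:downcast} with $d=1$ and applying union bounds over clusters and operations.
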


We are now ready to argue that $\mathcal{A}'$ correctly simulates $\mathcal{A}$. Note that the communication of $\mathcal{A}$ is simulated in an aggregate sense only.

\begin{lemma}
\label{lem:tradeoffSimulation2Correctness}
With high probability, for any phase $p \in [1,\TA]$ and node $v \in V$, $v$ ends phase $p$ of $\mathcal{A}'$ in the same state it would end round $p$ of $\mathcal{A}$. 
\end{lemma}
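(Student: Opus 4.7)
The plan is to mirror the inductive argument used for Lemma \ref{lem:tradeoffSimulation1Correctness}, adapting the delivery analysis to the matching-based send step. I would prove by induction on $p \in [1, \TA]$ the slightly stronger statement: with high probability, at the start of phase $p$ of $\mathcal{A}'$, every node $v$ is in exactly the same local state as at the start of round $p$ of $\mathcal{A}$. The base case $p = 1$ holds trivially by initialization. For the inductive step, it suffices to show the following claim for every node $u$: by the end of the Compute step of phase $p$, node $u$ has computed $agg_{u,p}$ applied to $\{m_v : v \in B_p(u)\}$, where $m_v$ is the message $v$ would have broadcasted in round $p$ of $\mathcal{A}$. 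By Definition \ref{def:aggregationBasedAlgorithm}, this is enough for $u$ to correctly update its state to the start of round $p+1$.

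To establish the claim, I would partition $B_p(u)$ according to location in the cluster hierarchy and trace the delivery of each $m_v$, relying on Corollary \ref{cor:simulationSuccessfulOperations2} throughout to guarantee (w.h.p.) that all upcasts and downcasts of the phase complete within its $\tilde{O}(n^{1-\eps})$ allotted rounds. For each $v \in B_p(u) \cap L_1$, Lemma \ref{lem:starClusterHierarchyProperties} ensures every incident edge of $v$ lies in $F_1$, so $v$'s direct broadcast over its inter-communication edges delivers $m_v$ to $u$ in the send step. For each $v \in B_p(u) \cap V_1$, let $C \in \mathcal{C}_1$ be the star cluster containing $v$; if $u \in C$, then the (unchanged) receive step has $v$ upcast $m_v$ to $C$'s center, which, using its preprocessed edge knowledge, aggregates all such $m_v$ for $v \in B_p(u) \cap C$ and downcasts the resulting packet to $u$.

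The crux is the case $u \notin C$. By property (c) of Corollary \ref{thm:PrunedBaswanaSenRunningProperties}, either $u \in L_1$ with $(v, u) \in F_1$ (handled analogously using a direct inter-communication edge coordinated by $C$'s center), or $u$ lies in a distinct star cluster $C' \in \mathcal{C}_1$. In the latter subcase, the argument turns on the maximality of the matching $M(C, C')$. If $u$ is matched by some edge $e = (w, u) \in M(C, C')$, then $m_2(e) = agg_{u,p}(\{m_{v'} : v' \in B_p(u) \cap C\})$ is delivered directly to $u$ over $e$ in the direct send sub-step---note that $m_2(e)$ is defined over the full set $B_p(u) \cap C$ (not just $\{w\}$), so a single matched edge incident to $u$ suffices. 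Otherwise, maximality forces every $w \in B_p(u) \cap C$ to be matched in $M(C,C')$: if some such $w$ were unmatched, then $(w,u)$ could be added to $M(C,C')$, contradicting maximality. Consequently each $(\mathrm{ID}(w), m_w)$ arrives at some endpoint $u' \in C'$ as part of $m_1(e)$, and the receive step has each such $u'$ upcast its received pairs to $C'$'s center, which aggregates all these $m_w$ with $w \in B_p(u) \cap C$ (recognizable via the appended IDs and the center's preprocessed neighborhood knowledge) and downcasts the resulting packet to $u$.

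Finally, the Compute step at $u$ applies $agg_{u,p}$ to the per-cluster aggregate packets it received (directly over inter-communication edges or from its own cluster's downcast) together with the individual $m_v$'s received from $L_1$ broadcasters; by the partition property in Definition \ref{def:aggregationBasedAlgorithm}, this reconstructs $f_{u,p}(\textsf{state}_u, \{m_v : v \in B_p(u)\})$, closing the inductive step. A union bound over the $\TA$ phases and over the high-probability events of Corollary \ref{cor:simulationSuccessfulOperations2} and Lemma \ref{lem:starClusterHierarchyProperties} yields the lemma. I expect the main obstacle to be making the maximality-of-matching dichotomy completely airtight---specifically, verifying that in the ``$u$ matched'' branch the single packet $m_2(e)$ really does cover all of $u$'s broadcasting $C$-neighbors (not merely $w$), and that in the ``$u$ unmatched'' branch no broadcaster's message is lost when different $w$'s route through different $u' \in C'$ before being reaggregated at $C'$'s center.
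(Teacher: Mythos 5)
Your proof follows essentially the same approach as the paper's: induction on the phase number, the reduction to the claim that each node $u$ recovers $agg_{u,p}$ applied to $\{m_v : v \in B_p(u)\}$, reliance on Corollary~\ref{cor:simulationSuccessfulOperations2} for delivery, and the maximality-of-matching dichotomy. The paper's proof partitions $B_p(u)$ into four subsets $L_1(u)$, $C_p(u)$, $M_p(u)$, $U_p(u)$ keyed to the destiny of each broadcaster, whereas you partition by the broadcaster's location and then by whether $u$ is matched or unmatched in $M(C,C')$; these bookkeepings are logically equivalent, and your observation that maximality forces every $w \in B_p(u)\cap C$ to be matched when $u$ is not (and that $m_2(e)$ covers all of $B_p(u)\cap C$ when $u$ is matched) is exactly the argument the paper uses for $U_p(u)$ and $M_p(u)$. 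One point you flag honestly but leave vague---the subcase $u \in L_1$ with a $V_1$ broadcaster $v$, which you claim is ``handled analogously using a direct inter-communication edge coordinated by $C$'s center''---is in fact also not explicitly addressed in the paper's algorithm description (the center only computes matchings to clusters in $\mathcal{C}_1$, not to $L_1$ singletons) nor in its proof (which assumes $u$ lies in some star cluster $C'$ when discussing $U_p(u)$). So you have not introduced a gap beyond what is already present in the paper, but be aware that this step is not automatic from the stated send-step implementation and would need the center of $C$ to also handle edges into $L_1$.
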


\begin{proof}
We prove, by induction on $p$, a slightly different statement: with high probability, it holds that, for any phase $p \in [1,\TA]$ and node $v \in V$, $v$ starts phase $p$ of $\mathcal{A}'$ in the same state it would start round $p$ of $\mathcal{A}$.

Since this follows along the lines of the proof of Lemma \ref{lem:tradeoffSimulation1Correctness}. We consider only the induction step from some phase $p \in [1,\TA-1]$ to phase $p+1$, which we show by proving the following claim: for any node $u \in V$, if the aggregate of messages sent by nodes in $B_p(u)$ is $\mu$, then $u$ computes $\mu$ by the end of the compute step of phase $p$. Indeed, this suffices by Definition \ref{def:aggregationBasedAlgorithm} for each node $u \in V$ to simulate its local state change from round $p$ to round $p+1$ in the simulated algorithm $\mathcal{A}$.

We separate nodes in $B_p(u)$ into four node subsets $L_1(u) = L_1 \cap B_p(u)$, $C_p(u)$, $M_p(u)$ and $U_p(u)$ where $C_p(u)$ are nodes in $B_p(u)$ that share a cluster with $u$, $M_p(u)$ denote nodes who were matched by their cluster center to some node $w \neq u$ in the star cluster of $u$, and $U_p(u)$ all other nodes in $B_p(u) \cap V_1$. For the first subset, it holds by Lemma \ref{lem:starClusterHierarchyProperties} that any node $v \in L_1(u)$ is connected to $u$ via an inter-communication edge, hence in the send step, $u$ receives the message of $v$. For the second and third subsets (which only concerns $u$ if $u$ is part of some star cluster), the messages from nodes in $M_p(u)$ are sent over to the cluster of $u$. After which, during the receive step, these messages, along with the messages from $C_p(u)$, are sent to the cluster center, aggregated using the center's knowledge of all incident edges to the cluster---that is, using $agg_{u,p}$ on the messages from $M_p(u) \cup C_p(u)$, and only these---and finally the center informs $u$ of the aggregated packet successfully with high probability (by Corollary \ref{cor:simulationSuccessfulOperations2}). In other words, node $u$ receives an aggregate packet where $agg_{u,p}$ was applied to exactly the messages of (nodes in) $C_p(u) \cup M_p(u)$ with high probability. Finally, for the fourth subset, it suffices to remark that for any cluster $C$ such that $C \cap U_p(u)$ is non-empty, the center of $C$ must have computed an edge from $C$ to $u$ in its matching $M(C, C')$ where $C' \in \mathcal{C}_1$ denotes the star cluster containing $u$. (Otherwise, an edge between some node in $U_p(u) \cap C$ and $u$ can be added to $M(C, C')$.) Then, the center of $C$ computes an aggregate packet containing $agg_{u,p}$ applied to all the messages sent by (nodes in) $B_p(u) \cap C \supseteq U_p(u) \cap C$, and only those, and then the packet is sent directly to $u$ over the matched edge.

Subsequently, in the Compute step, node $u$ computes the aggregate of all messages received during direct send---corresponding to $L_1(u)$ and $U_p(u)$---and receive---corresponding to $C_p(u)$ and $M_p(u)$ with high probability---thus $u$ computes $\mu$. As argued above, the claim suffices to obtain the induction statement, in particular for $p=\TA$. After which, it suffices to apply the claim one last time to get the lemma statement.
\end{proof}

Now, we can show runtime, message complexity and edge congestion upper bounds for the simulation part.

\begin{lemma}
\label{lem:tradeoffSimulation2Simulation}
    $\mathcal{A}'$ simulates $\mathcal{A}$ correctly (w.h.p.).
    This simulation takes $\tilde{O}(\TA \cdot n^{1-\eps})$ rounds, and (w.h.p.) sends $\tilde{O}(\TA \cdot n^{1+\eps})$ messages. Moreover, it induces a maximum edge congestion of $\tilde{O}(\TA)$ for any non cluster edge.
\end{lemma}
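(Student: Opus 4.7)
The plan is to verify each of the four claims in sequence, relying on earlier lemmas for correctness, runtime, and non-cluster edge congestion, and then to do a careful per-phase message count. Correctness w.h.p.\ is exactly the content of Lemma~\ref{lem:tradeoffSimulation2Correctness}: every node ends phase $p$ of $\mathcal{A}'$ in the same state it would end round $p$ of $\mathcal{A}$, so in particular their outputs match. For the runtime, Corollary~\ref{cor:simulationSuccessfulOperations2} shows that each of the $\TA$ phases completes (i.e., all upcasts/downcasts succeed) in $\tilde{O}(n^{1-\eps})$ rounds w.h.p.; a union bound over phases gives $\tilde{O}(\TA \cdot n^{1-\eps})$. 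For the maximum edge congestion on non-cluster edges, Lemma~\ref{lem:simulationPhaseCongestion2} provides a per-phase bound of $\tilde{O}(1)$, which sums to $\tilde{O}(\TA)$.

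The remaining task is to bound the per-phase message count by $\tilde{O}(n^{1+\eps})$ w.h.p. I would split the messages of one phase $p$ into three groups. \textbf{(a) $L_1$ senders.} Each broadcasting $v \in L_1$ sends $m_v$ over each of its incident edges, all of which lie in $F_1$ and number $\tilde{O}(n^{\eps})$ w.h.p.\ by Lemma~\ref{lem:starClusterHierarchyProperties}; summing over such $v$ gives $\tilde{O}(n^{1+\eps})$. \textbf{(b) Send step within $V_1$.} Each broadcaster $w \in V_1 \cap C$ sends one message up to its center, and the center then emits $\tilde{O}(1)$ messages per matched edge down the depth-$1$ tree, each of which subsequently traverses that matched edge. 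The crucial bookkeeping is $\sum_{C,C'} |M(C,C')| = \tilde{O}(n^{1+\eps})$, which I would obtain from two facts: (i) by the construction of $F_2$ together with property~(b) of the pruned hierarchy (Corollary~\ref{thm:PrunedBaswanaSenRunningProperties}), each $w \in V_1$ neighbors at most $\tilde{O}(n^{\eps})$ distinct clusters of $\mathcal{C}_1$, and (ii) $w$ is saturated by at most one edge of $M(C,C')$ per such neighboring $C'$. Combined with $\sum_C |B_p \cap C| \leq n$, this yields the claimed $\tilde{O}(n^{1+\eps})$ bound. \textbf{(c) Receive step.} Each $u \in V_1$ receives at most one $m_1$-packet per other cluster, i.e.\ $\tilde{O}(n^{1-\eps})$ packets w.h.p.\ since $|\mathcal{C}_1| = \tilde{O}(n^{1-\eps})$ by Lemma~\ref{lem:starClusterHierarchyProperties}; upcasting these along the depth-$1$ trees costs $\tilde{O}(n \cdot n^{1-\eps}) = \tilde{O}(n^{2-\eps})$ messages across all $u$, while the subsequent downcast of one aggregate packet per cluster-node adds only $\tilde{O}(n)$ more. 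Because $\eps \geq 1/2$ gives $n^{2-\eps} \leq n^{1+\eps}$, this group also contributes $\tilde{O}(n^{1+\eps})$ messages.

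Summing (a), (b), (c), multiplying by $\TA$ phases, and taking a union bound over the high-probability events of each phase yields the claimed $\tilde{O}(\TA \cdot n^{1+\eps})$ message complexity w.h.p. The step I expect to be the main obstacle is (b): the bound $|\{C' \in \mathcal{C}_1 : w \text{ neighbors } C'\}| = \tilde{O}(n^{\eps})$ requires unpacking the construction of $F_2$ together with property~(b) of the pruned hierarchy, since the naive estimate $|M(C,C')| \leq \min(|B_p \cap C|, |C'|)$ is far too lossy and would not deliver the $\tilde{O}(n^{1+\eps})$ target. The condition $\eps \geq 1/2$ is also essential — without it, group~(c)'s $\tilde{O}(n^{2-\eps})$ upcast cost would exceed the $\tilde{O}(n^{1+\eps})$ budget, which is precisely why this faster simulation is restricted to $\eps \in [1/2,1]$.
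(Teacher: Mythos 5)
Your proof is correct and follows the same overall structure as the paper's: correctness from Lemma~\ref{lem:tradeoffSimulation2Correctness}, runtime from the per-phase round budget (Corollary~\ref{cor:simulationSuccessfulOperations2}), non-cluster-edge congestion from Lemma~\ref{lem:simulationPhaseCongestion2} summed over phases, and the message bound split into $L_1$ senders, the $V_1$ send step, and the receive step. The one place you diverge is the bookkeeping in your group (b). You bound $\sum_{C,C'}|M(C,C')|$ by charging each matching edge to its saturated endpoint $w\in B_p\cap V_1$ and then arguing that $w$ is charged at most $\tilde O(n^{\eps})$ times, via property~(b) of the pruned hierarchy applied to $F_2$. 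This is valid. The paper takes a dual and slightly simpler accounting: for a fixed sending cluster $C$, the matchings $M(C,C')$ have their $C'$-side endpoints in pairwise-disjoint clusters, so $\sum_{C'}|M(C,C')|\leq n$ — this is precisely the ``naive'' estimate $|M(C,C')|\leq |R_p(C,C')|\leq |C'|$ combined with disjointness; multiplying by $|\mathcal{C}_1|=\tilde O(n^{1-\eps})$ (Lemma~\ref{lem:starClusterHierarchyProperties}) gives $\tilde O(n^{2-\eps})\leq \tilde O(n^{1+\eps})$ for $\eps\geq 1/2$. So the worry you flag in your final parenthetical is misplaced: the naive per-matching estimate is not too lossy once you sum it over disjoint $C'$, and in fact that is exactly what the paper does. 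Your charging argument has the small bonus of delivering $\tilde O(n^{1+\eps})$ for the send step without invoking $\eps\geq 1/2$ at that point, but since group~(c) and the overall runtime/structure still require $\eps\geq 1/2$, the two arguments are of equal strength here. Both are correct; yours requires property~(b) of the hierarchy, the paper's only requires disjointness of $\mathcal{C}_1$ plus the count $|\mathcal{C}_1|=\tilde O(n^{1-\eps})$.
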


\begin{proof}
    The runtime follows directly from the simulation's description (with the reduced number of rounds per phase), and the maximum congestion statement from adding up over all $\TA$ phases the (reduced) congestion given by Lemma \ref{lem:simulationPhaseCongestion2}.
    
    It remains to bound the message complexity of any given phase $p \in [1, \TA]$ by $\tilde{O}(n^{1+\eps})$ with high probability. First, note that nodes in $L_1$ send in total at most $\tilde{O}(n^{1+\eps})$ messages (per phase) with high probability (by Lemma \ref{lem:starClusterHierarchyProperties}) during the send step, and none during the receive step. 
    
    Now, consider the nodes in $V \setminus L_1$, that are necessarily in star clusters. During the send step each such node sends at most one message to its cluster center, adding up to $O(n)$ messages. Next, for each cluster $C \in \mathcal{C}_1$, the cluster center computes one maximal matching per neighboring cluster in $\mathcal{C}_1$, and sends one message and one aggregate packet per edge over all these matchings. Since the maximal matchings are all between (some nodes in) the $C$ and a neighboring cluster $C' \in \mathcal{C}_1$, and all clusters of $\mathcal{C}_1$ are disjoint, this implies that the center of $C$ sends $\tilde{O}(n)$ messages. Since $|\mathcal{C}_1| = \tilde{O}(n^{1-\eps})$ by Lemma \ref{lem:starClusterHierarchyProperties}, at most $\tilde{O}(n^{2-\eps})$ are sent per cluster centers during the send step. Given that we assume $\eps \geq 1/2$, this implies centers send $\tilde{O}(n^{1+\eps})$ messages during the send step.    Finally, each such message and aggregate packet is forwarded at most once (over matched edges), adding up to $\tilde{O}(n^{1+\eps})$ messages over the send step. Then, during the receive step, only (some of) these messages are forwarded to the centers of the clusters, along with an additional message per upcasting node. The center in turn executes some aggregation operation and sends at most one aggregate packet per node in its cluster, thus the receive step also takes $\tilde{O}(n^{1+\eps})$ messages only. This suffices to upper bound the message complexity per phase.
\end{proof}

Finally, we can combine the previous statements (Lemma \ref{lem:tradeoffSimulation2Correctness}, and Lemmas \ref{lem:tradeoffSimulation2Preprocessing} and \ref{lem:tradeoffSimulation2Simulation}), to prove our simulation has the desired correctness, runtime, message complexity and maximum edge congestion properties---see Theorem \ref{thm:tradeoffSimulation2} below. 

\simulationFaster*

\subsection{Message-Time Trade-off for Unweighted APSP}

Consider the unweighted APSP problem, where the goal is for each node to output all of its distances to all other nodes. 
We give an unweighted APSP algorithm that achieves a natural 
message-time trade-off between the round-optimal (but non message-optimal) algorithm (see \cite{bernstein2021distributed}) and the message-optimal (but not runtime-optimal) algorithm (see Subsection \ref{subsec:simulationApplications1}). 

First, we point out that for $\eps = O(1 /\log n)$, the trade-off statement reduces to solving unweighted APSP $\tilde{O}(n^2)$ round and message complexities. This can be achieved by using the (weighted) APSP algorithm obtained in Section \ref{subsec:simulationApplications1}. 

The remainder of the trade-off requires more sophisticated techniques, including the simulations described in Section \ref{subsec:simulationTradeoff}. In what follows, we first show how to execute some polynomial in $n$ amount of BFS computations (possibly of limited depth, depending on the range of $\eps$) simultaneously, while getting a message-time trade-off. 

More precisely, a BFS tree computation (up to some depth $d$) should ensure that each node $v$ know its parent in some BFS of depth $d$ rooted in some node $u$, if the distance between $u,v$ is at most $d$. Note that typically, the random delays technique or congestion plus dilation framework---see Section \ref{section:congestionPlusDilation}---can be used to run multiple BFS computations time-efficiently, in a way that takes advantage of the low congestion per edge induced by a BFS computation. Here, we show how to simulate multiple BFS algorithms with significantly lower message complexity, but at the cost of higher runtime; here also, we leverage the low congestion per edge induced by a BFS computation. The more precise statement is captured by the following two auxiliary Lemmas~\ref{lem:multiBFS1} and \ref{lem:multiBFS2}, which are play a key role in our unweighted APSP algorithm that achieves the remainder of the trade-off. 

Our next two results, Lemmas~\ref{lem:multiBFS1} and \ref{lem:multiBFS2}, both make use of Theorem~\ref{thm:congestPlusDilationBFS}, which assumes access to shared randomness.
We can simulate this in our setting as follows:
We first elect a leader and construct a global BFS tree $T$ rooted at the leader. 
Then, the leader generates a string $S$ of $\Theta(n \log n)$ private random bits, which are sent to all nodes in a pipelined-manner over the edges of $T$.
Each node with ID $i$ will use the $\Theta(\log n)$-length segment of $S$ starting at position $i\cdot \Theta(\log n)$ as its source of randomness for choosing a random delay as required by Theorem~\ref{thm:congestPlusDilationBFS}.
(Note that we can ensure that all node IDs are in the range $[n]$, by using $T$ and the leader to rename the nodes accordingly.)
It is straightforward to verify that the time and message complexity of implementing shared randomness per use of Theorem~\ref{thm:congestPlusDilationBFS} are bounded by $\tilde O(n)$ rounds and $\tilde{O}(n^2)$ messages respectively. 

\begin{lemma}
\label{lem:multiBFS1}
    For any $\eps \in [1/2,1]$, there exists a \congest{} algorithm that computes $n$ BFS trees (w.h.p.), using $\tilde{O}(n^{2-\eps})$ rounds and sending (w.h.p.) $\tilde{O}(n^{2+\eps})$ messages. 
\end{lemma}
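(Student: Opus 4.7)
The plan is to cast the collection of $n$ BFS computations (one rooted at each node) as a single aggregation-based \bcongest{} algorithm with runtime $\TA = \tilde{O}(n)$, and then invoke the simulation from Theorem \ref{thm:tradeoffSimulation2}. Substituting $\TA = \tilde{O}(n)$ into the bounds of that theorem immediately yields a round complexity of $\tilde{O}(\TA \cdot n^{1-\eps} + n) = \tilde{O}(n^{2-\eps})$ and a message complexity of $\tilde{O}(\TA \cdot n^{1+\eps} + m) = \tilde{O}(n^{2+\eps})$, as required.

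First, during preprocessing, I would (i) compute a pruned Baswana-Sen cluster hierarchy $\mathcal{H}$ of parameter $\eps$ via Corollary \ref{theorem:PrunedBaswanaSenRunningTime}, at a cost of $\tilde{O}(m)$ messages and $\tilde{O}(1)$ rounds for $\eps \ge 1/2$ (since $\kappa \le 2$); and (ii) set up shared randomness by electing a leader, building a global BFS tree, and pipelining $\Theta(n \log n)$ private random bits down the tree, at a cost of $\tilde{O}(n)$ rounds and $\tilde{O}(n^2)$ messages, exactly as described in the paragraph preceding the lemma statement.

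Next, let $\mathcal{A}_1, \ldots, \mathcal{A}_n$ be the standard (single-broadcast) BFS algorithms, one initiated at each node. Using Theorem \ref{thm:congestPlusDilationBFS} with $\ell = n$, I would schedule them together with independent uniform random delays drawn from $[1, n]$; this yields a combined \bcongest{} algorithm $\mathcal{A}$ whose runtime is, w.h.p., $\TA = \tilde{O}(n + \diam) = \tilde{O}(n)$, and (crucially) carries the additional guarantee that in any round, every node receives messages from at most $O(\log n)$ distinct BFS components. The key structural observation is that $\mathcal{A}$ is then aggregation-based in the sense of Definition \ref{def:aggregationBasedAlgorithm}: for any fixed $v$ and round $r$ we have $|\mathbb{M}_{v,r}| = O(\log n)$ w.h.p., and each message is $O(\log n)$ bits (a BFS identifier and at most a parent ID), so an $agg_{v,r}$ that keeps, for each distinct BFS component represented in its input, one arbitrary representative message, fits in $\tilde{O}(1)$ bits and manifestly respects arbitrary partitions of its argument (what $v$'s local transition cares about per BFS is only the existence of at least one parent pointer, so picking a representative from each part and then across parts is equivalent to picking one from the whole set). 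Feeding this $\mathcal{A}$ and the hierarchy $\mathcal{H}$ into Theorem \ref{thm:tradeoffSimulation2} produces the desired \congest{} algorithm, with every node locally knowing its parent in each of the $n$ BFS trees at the end of the simulation.

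The main obstacle I expect is precisely the verification of the aggregation-based property, and in particular the $\tilde{O}(1)$-bit bound on $agg_{v,r}(M)$. Without the $O(\log n)$-distinctness guarantee from Theorem \ref{thm:congestPlusDilationBFS}, a single node could in principle receive $\Omega(n)$ distinct BFS messages in some round, forcing $|agg_{v,r}(M)|$ to blow up and breaking the hypothesis of Theorem \ref{thm:tradeoffSimulation2}. Random delays, and the analysis in Theorem \ref{thm:congestPlusDilationBFS} showing that in any round at most $O(\log n)$ algorithms are simultaneously ``active'' in a node's closed neighborhood, are exactly what make the aggregate compressible; the rest of the proof is bookkeeping, since the preprocessing costs are dominated by the claimed bounds and $In, Out = \tilde{O}(n^2)$ for computing $n$ BFS trees comfortably fits within $\tilde{O}(n^{2-\eps})$ rounds and $\tilde{O}(n^{2+\eps})$ messages.
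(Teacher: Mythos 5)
Your proposal is correct and follows essentially the same route as the paper: schedule the $n$ single-broadcast BFS algorithms with random delays via Theorem \ref{thm:congestPlusDilationBFS} to get a $\tilde{O}(n)$-round \bcongest{} algorithm in which every node receives messages from at most $O(\log n)$ distinct BFS components per round (hence aggregation-based), and then feed it into Theorem \ref{thm:tradeoffSimulation2}. The paper states this very tersely and leaves the aggregation-based verification implicit; your explicit justification — keeping one $O(\log n)$-bit representative per BFS component, which fits in $\tilde{O}(1)$ bits and respects partitions because BFS only needs one valid parent pointer per source — is exactly the intended reasoning, and the preprocessing bookkeeping (pruned hierarchy, shared randomness) matches the costs the paper accounts for.
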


\begin{proof}
    Consider $n$ independent BFS construction \bcongest{} algorithms.
    Then, we can use the random delays technique (by Theorem \ref{thm:congestPlusDilationBFS}) to obtain a \bcongest{} aggregation-based algorithm $\mathcal{A}$ executing all $n$ BFS constructions using only $\tilde{O}(n)$ rounds. Since $\mathcal{A}$ is a \bcongest{} aggregation-based algorithm, we can plug it into the second simulation from Section \ref{subsec:simulationTradeoff} (see Theorem \ref{thm:tradeoffSimulation2}) to obtain a randomized \congest{} algorithm $\mathcal{A}'$ that computes all $n$ BFS trees correctly with high probability. As for its round and message complexities, these are respectively $\tilde{O}(n \cdot n^{1-\eps}) = \tilde{O}(n^{2-\eps}) $ and (w.h.p.) $\tilde{O}(n \cdot n^{1+\eps}) = \tilde{O}(n^{2+\eps})$. 
\end{proof}

\begin{lemma}
\label{lem:multiBFS2}
    For any $\eps \in [\frac{1}{\Theta(\log n)},1/2]$, there exists a \congest{} algorithm that computes $n$ BFS trees up to depth $\tilde{O}(n^{1-\eps})$  (w.h.p.), using $\tilde{O}(n^{2-\eps})$ rounds and sending (w.h.p.) $\tilde{O}(n^{2+\eps})$ messages. 
\end{lemma}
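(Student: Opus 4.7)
The plan is to combine random-delay scheduling of BFS with the ensemble of pruned Baswana-Sen cluster hierarchies and the general simulation (Theorem \ref{thm:tradeoffSimulation1}). The key additional insight, compared to Lemma \ref{lem:multiBFS1}, is that for $\eps < 1/2$ the faster simulation of Theorem \ref{thm:tradeoffSimulation2} no longer applies, and we cannot afford to feed all $n$ BFSs through a single invocation of Theorem \ref{thm:tradeoffSimulation1}. Instead I would partition the BFSs across $\zeta = \lceil n^{\eps} \rceil$ independently constructed hierarchies and use Lemma \ref{lemma:clusterEdgesRare} to smooth congestion across the ensemble.

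First I would construct $\zeta$ pruned Baswana-Sen cluster hierarchies $\mathcal{H}_1, \ldots, \mathcal{H}_\zeta$ in parallel (Corollary \ref{theorem:PrunedBaswanaSenRunningTime}), and set up shared randomness via a leader and a global BFS tree as described after Theorem \ref{thm:congestPlusDilationBFS}; this preprocessing fits within $\tilde{O}(n^{2-\eps})$ rounds and $\tilde{O}(n^{2+\eps})$ messages. Next I would partition the $n$ sources arbitrarily into $\zeta$ batches $B_1, \ldots, B_\zeta$, each of size $n^{1-\eps}$, and for each batch $B_j$ I would invoke Theorem \ref{thm:congestPlusDilationBFS} on its $n^{1-\eps}$ BFS algorithms (truncated at depth $d = \lceil n^{1-\eps} \rceil$) to obtain a single \bcongest{} aggregation-based algorithm $\mathcal{A}^{(j)}$ of runtime $T^{(j)} = \tilde{O}(n^{1-\eps})$. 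Each standard BFS is aggregation-based---per BFS only the existence of one received message and the identity of its sender matter---and Theorem \ref{thm:congestPlusDilationBFS}(ii) ensures that at most $O(\log n)$ BFSs are active per round per neighborhood, so the aggregate summary still fits in $\tilde{O}(1)$ bits. Each $\mathcal{A}^{(j)}$ is then simulated with its assigned hierarchy $\mathcal{H}_j$ via Theorem \ref{thm:tradeoffSimulation1}, giving a \congest{} algorithm $\mathcal{A}'^{(j)}$ of runtime $\tilde{O}(n^{2-\eps})$ and message complexity $\tilde{O}(n^2)$; the $\zeta$ resulting \congest{} algorithms are scheduled concurrently via Theorem \ref{thm:congestionPlusDilation}.

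The hard part will be bounding the total edge congestion incurred by running the $\zeta$ simulations in parallel, because Theorem \ref{thm:tradeoffSimulation1} directly bounds only non-cluster-edge congestion (by $\tilde{O}(T^{(j)}) = \tilde{O}(n^{1-\eps})$ per batch), whereas Lemma \ref{lem:simulationPhaseCongestion} allows cluster-edge congestion per phase to be as large as $\tilde{O}(n)$, so cluster-edge congestion per batch can reach $\tilde{O}(T^{(j)} \cdot n) = \tilde{O}(n^{2-\eps})$. To control the parallel congestion on a fixed edge $e$, I would combine Lemma \ref{lemma:clusterEdgesRare} with a Chernoff argument to conclude that w.h.p.\ $e$ is a cluster edge in at most $O(\log n)$ of the $\zeta$ hierarchies. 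The total congestion on $e$ is therefore at most $O(\log n) \cdot \tilde{O}(n^{2-\eps}) + \zeta \cdot \tilde{O}(n^{1-\eps}) = \tilde{O}(n^{2-\eps})$. Applying Theorem \ref{thm:congestionPlusDilation} with this congestion bound and dilation $\tilde{O}(n^{2-\eps})$ yields the claimed $\tilde{O}(n^{2-\eps})$ rounds, while summing the per-batch message bound over the $\zeta$ batches yields $\tilde{O}(n^{2+\eps})$ messages.
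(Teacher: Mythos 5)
Your proposal is correct and follows essentially the same approach as the paper: partition the $n$ depth-truncated BFSs into $\lceil n^{\eps}\rceil$ batches, collapse each batch via Theorem~\ref{thm:congestPlusDilationBFS} into one aggregation-based \bcongest{} algorithm, simulate each batch on its own pruned Baswana-Sen hierarchy via Theorem~\ref{thm:tradeoffSimulation1}, and then schedule the resulting \congest{} algorithms via Theorem~\ref{thm:congestionPlusDilation}. The only cosmetic difference is that you re-derive the cluster-edge congestion bound from Lemma~\ref{lemma:clusterEdgesRare} plus a Chernoff argument, whereas the paper invokes the Congestion Smoothing Lemma (Lemma~\ref{lem:congestionSmoothing}) directly, which encapsulates exactly that calculation.
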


\begin{proof}
Consider $n$ BFS explorations up to depth $\tilde{O}(n^{1-\eps})$. Then, we can partition the $n$ BFS explorations into $b_\eps = \lceil n^{\eps} \rceil$ independent batches of $\tilde{\Theta}(n^{1-\eps})$ BFS explorations, each up to depth $\tilde{O}(n^{1-\eps})$. Note that the dilation of each such BFS exploration is $O(n^{1-\eps})$, because of the limited depth. Hence, we can apply the random delays technique to each batch (see Theorem \ref{thm:congestPlusDilationBFS}) and thus for each, obtain a fast \bcongest{} aggregation-based algorithm. If we denote the resulting algorithm for any batch $i \in [1, b_\eps]$ by $\mathcal{A}_i$, then each $\mathcal{A}_i$ correctly computes $\tilde{\Theta}(n^{1-\eps})$ BFS explorations, each up to depth $\tilde{O}(n^{1-\eps})$, using $\tilde{\Theta}(n^{1-\eps})$ rounds. Moreover, $\mathcal{A}_i$ a \bcongest{} aggregation-based algorithm. 

Note that as described prior to Lemma \ref{lem:multiBFS1} above, getting the shared randomness to apply Theorem \ref{thm:congestPlusDilationBFS} once takes $\tilde O(n)$ rounds and $\tilde{O}(n^2)$ messages. Applying Theorem \ref{thm:congestPlusDilationBFS} a total of $b_\eps$ times leads to an $\tilde O(n^{1+\eps})$ rounds and $\tilde{O}(n^{2+\eps})$ messages overhead for the simulation. As $\eps \leq 1/2$, the round complexity is in fact $\tilde O(n^{2-\eps})$.

Next, we compute $b_\eps$ independent pruned Baswana-Sen cluster hierarchies $\mathcal{H}_1,\ldots,\mathcal{H}_{b_\eps}$, and since $\mathcal{A}_i$ is a \bcongest{} aggregation-based algorithm, we can simulate algorithm $\mathcal{A}_i$ (see Theorem \ref{thm:tradeoffSimulation1}) using cluster hierarchy $\mathcal{H}_i$. If we denote the resulting algorithm by $\mathcal{A}_i'$, then each $\mathcal{A}_i'$ correctly (w.h.p.) computes $\tilde{\Theta}(n^{1-\eps})$ BFS explorations, each up to depth $\tilde{O}(n^{1-\eps})$, running in $\tilde{O}(n^{2-\eps})$ time, using $\tilde{O}(n^2)$ messages (w.h.p.) and such that the maximum congestion is $\tilde{O}(n^{1-\eps})$ on any non cluster edge (with respect to the $\mathcal{H}_i$).

At this point, we use the congestion smoothing lemma (see Lemma \ref{lem:congestionSmoothing}) to show that the maximum congestion over any edge (that is a cluster edge in any of the $\mathcal{H}_1,\ldots,\mathcal{H}_{b_\eps}$) over $\mathcal{A}_1',\ldots, \mathcal{A}_{b_\eps}'$ is $\tilde{O}(n^{2-\eps})$. As for the other edges, these are non cluster edges in all $\mathcal{A}_1',\ldots, \mathcal{A}_{b_\eps}'$  (with respect to $\mathcal{H}_1,\ldots,\mathcal{H}_{b_\eps}$ respectively) and thus have congestion $\tilde{O}(n^{1-\eps} \cdot n^{\eps}) = \tilde{O}(n)$. In summary, the maximum edge congestion over all edges and all algorithms $\mathcal{A}_1',\ldots, \mathcal{A}_{b_\eps}'$ is $\tilde{O}(n^{2-\eps})$, and the dilation of any $\mathcal{A}_i'$ is $\tilde{O}(n^{2-\eps})$. 

Finally, we apply Theorem \ref{thm:congestionPlusDilation} on algorithms $\mathcal{A}_1,\ldots,\mathcal{A}_{b_\eps}$ to obtain a \congest{} algorithm $\mathcal{A}_f$. It follows that $\mathcal{A}_f$ computes all $n$ BFS explorations up to depth $\tilde{O}(n^{1-\eps})$, using $\tilde{O}(n^{2-\eps})$ rounds and sending (w.h.p.) $\tilde{O}(n^{2+\eps})$ messages. 
\end{proof}

With the above two lemmas, we can show how to obtain the remaining portions of the trade-off for unweighted APSP.

\paragraph{Trade-off for $\eps \in [1/2,1]$}
To obtain the message-time trade-off for $\eps \in [1/2,1]$, we first apply Lemma \ref{lem:multiBFS1}, to obtain an algorithm running $n$ independent BFS computations, using $\tilde{O}(n^{2-\eps})$ rounds and sending (w.h.p.) $\tilde{O}(n^{2+\eps})$ messages. Once all BFS computations are done, each node knows its distance to all other nodes by taking its depth in each of the computed BFS trees.

\paragraph{Trade-off for $\eps \in [\frac{1}{\Theta(\log n)},1/2]$}
This part of the trade-off is more complex. We first compute all distances between nearby pairs of nodes---that is, at most $\tilde{O}(n^{1-\eps})$ hops apart---by applying Lemma \ref{lem:multiBFS2}. We obtain an algorithm that runs $n$ independent BFS computations up to depth $\tilde{O}(n^{1-\eps})$, using $\tilde{O}(n^{2-\eps})$ rounds and sending (w.h.p.) $\tilde{O}(n^{2+\eps})$ messages. Upon termination, each node $v$ stores its depth in each BFS tree as its distance to the BFS root (whose ID $v$ knows). Since we consider an undirected graph, this clearly allows each node to compute its distance to all other nodes within $\tilde{O}(n^{1-\eps})$ hops. Next, we deal with pairs of nodes that are further apart.  

As for computing the distances between any two nodes at least $\tilde{\Omega}(n^{1-\eps})$ hops apart, we first sample $\tilde{\Theta}(n^{\eps})$ nodes, called \textit{landmarks}, uniformly at random in $V$, and from each such landmark, compute a BFS tree rooted at that landmark (without any of the techniques used above). Once all $\tilde{O}(n^{\eps})$ BFS trees have been computed, each landmark (or root) gathers the information of all of the edges in the BFS tree (i.e., the IDs of any BFS edge's two endpoints) via an upcast operation. Then, each landmark broadcasts the information of the BFS tree (i.e, all ID pairs corresponding to all edges in the BFS tree) to all nodes. Finally, each node $v$ computes its distance to any node $u$ via any of these $\tilde{O}(n^{\eps})$ BFS trees, and updates its distance to $u$ if either it had no prior computed distance to $u$, or its distance to $u$ in memory is higher. 

It is straightforward enough to see that with high probability, for any two nodes $u,v$ that are some $\tilde{\Omega}(n^{1-\eps})$ hops apart, both $u$ and $v$ end up computing their distance between each other.

\APSPTradeoff*

\begin{proof}

The statement for $\eps \in [0,\frac{1}{\Theta(\log n)}]$ follows directly from Theorem \ref{thm:apsp}. And when $\eps \in [1/2,1]$ instead, the statement follows by Lemma \ref{lem:multiBFS1} and the fact that we consider an undirected graph for the unweighted APSP problem.

Finally, we consider $\eps \in [\frac{1}{\Theta(\log n)},1/2]$. First, we consider the distances between any two nodes that are at most $\tilde{O}(n^{1-\eps})$ hops apart. Then, both nodes correctly (w.h.p.) compute the distance by Lemma \ref{lem:multiBFS2} and the fact that we consider an undirected graph for the unweighted APSP problem. Next, we consider distances between nodes at least $\tilde{\Omega}(n^{1-\eps})$ hops apart. Then, a a simple probabilistic argument shows that any path of at least $\tilde{\Omega}(n^{1-\eps})$ hops contains a landmark node with high probability, as any node becomes a landmark independently and with probability $1/\tilde{\Theta}(n^{1-\eps})$. Hence, with high probability, for any two nodes $u,v$, their shortest path contains a landmark node. All of the edges of the BFS rooted at that landmark node are sent to both $v$ and $u$, and thus they correctly compute their distance to each other. The correctness (w.h.p.) for the algorithm when $\eps \in [\frac{1}{\Theta(\log n)},1/2]$ follows. 
As for its round and message complexities, computing the distances between any two nearby nodes takes $\tilde{O}(n^{2-\eps})$ rounds and sending (w.h.p.) $\tilde{O}(n^{2+\eps})$ messages by Lemma \ref{lem:multiBFS2}. As for the distance computations between any two far away nodes, it is straightforward to see that running the $\tilde{O}(n^{\eps})$ BFS computations, as well as the $\tilde{O}(n^{\eps})$ upcasts and broadcasts of $\tilde{O}(n)$ bits, takes $\tilde{O}(n^{1+\eps})$ rounds and $\tilde{O}(n^{2+\eps})$ messages. Since this runs only when $\eps < 1/2$, we can upper bound the runtime by $\tilde{O}(n^{2-\eps})$ rounds. The statement follows.
\end{proof}

\section{Conclusions and Future Work}

In this paper, we first focused on obtaining message-optimal distributed algorithms for several well-studied graph problems that did not have prior any such algorithms. While these problems are very different, our results were all obtained by applying a unified framework to existing round-optimal algorithms for these problems. However, as noted earlier,  our message-optimal algorithms are not round-optimal. 

We then showed that for unweighted APSP, we can construct a family of algorithms that smoothly exhibits a message-time trade-off across the entire spectrum --- from
message optimality (on one end) to time optimality (on the other end).
To the best of our knowledge, this is the first such tradeoff result known for a fundamental graph-optimization problem.

Several key open questions remain.
Does APSP admit singularly-optimal algorithms,  i.e., algorithms optimal with respect to {\it both} message and time
complexities?  Another important question is whether one can use our framework to obtain message-optimal
algorithms for other problems, such as computing a maximum matching in general graphs and obtaining message-time tradeoffs for \textit{weighted} APSP, maximum matching, minimum cut etc.

\bibliographystyle{plain}
\bibliography{references,references1}

\appendix
\section{Applications of the Simulation Theorem}
\label{sec:apps}

We now discuss two applications of the simulation theorem from Section \ref{sec:simulation}. We obtain \congest{} algorithms with $\tilde O(n^2)$ message complexity for computing (1) exact bipartite maximum matching in Section \ref{sec:bipartiteMaxM}, and (2) neighborhood covers in Section \ref{sec:neighborhoodCovers}. 

\subsection{Bipartite Maximum Matching}
\label{sec:bipartiteMaxM}

We start by briefly reviewing classical graph-theoretic definitions. Given a matching $M$ in a graph $G = (V, E)$, an  \textit{$M$-alternating path} is a path in which the edges are alternately in $M$ and in $E \setminus M$. A vertex is called \textit{$M$-free} if there are no edges in $M$ incident on it. An $M$-alternating path is called \textit{$M$-augmenting} if its end points are both $M$-free.
Most algorithms for computing a maximum matching in a graph are based on Berge's Theorem \cite{BergePNAS57}, which says that a matching $M$ in a graph $G = (V, E)$ is maximum if and only if there exists no $M$-augmenting path in $G$. 
If a matching $M$ is not a maximum matching, then according to Berge's theorem, $G$ contains an $M$-augmenting path $P$. If such a path $P$ could be found, we can then construct a matching that is larger than $M$ (by one edge), by replacing $M$ by the symmetric difference $M \oplus P$. One can then repeat this process until we have constructed a matching $M$ such that $G$ contains no $M$-augmenting paths, which certifies that $M$ is a maximum matching.

Ahmadi, Kuhn, and Oshman \cite{AhmadiKuhnOshmanDISC2018} describe a \congest{} model algorithm for computing a maximum matching in bipartite graphs that runs in $O(n \log n)$ rounds. While this algorithm is described in the \congest{} model, we argue below informally that this algorithm works, as is, in the \bcongest{} model.

The algorithm starts with a preprocessing step in which it computes an upper bound $s$ on the size $s^*$ of a maximum matching. This can be done by first computing a maximal matching $\hat{M}$, computing its size, and using $s := 2|\hat{M}|$ as the upper bound on $s^*$. A maximal matching can be computed in $O(\log n)$ rounds in the \bcongest{} model by simply using the well-known randomized algorithm of Israeli and Itai \cite{IsraeliItaiIPL1986}.
The size of this matching can be computed and then broadcast  by electing a leader and computing a BFS tree rooted at the leader. All of this can be done in $O(Diam)$ rounds in the \bcongest{} model.
Once the upper bound $s$ is computed, the algorithm initializes a matching $M := \emptyset$ 
and proceeds in \textit{Phases} $i = 1, 2, 3, \ldots, n-1$, attempting to find a 
``short'' $M$-augmenting path in each phase. More specifically, in Phase $i$, the algorithm spends $O(s/(s - i))$ rounds searching for an augmenting path of length $O(s/(s - i))$. 
If such an augmenting path $P$ is found, we update $M$ to $M \oplus P$ to increase its size by 1. The motivation for searching for an augmenting path of this length comes from a corollary in \cite{AhmadiKuhnOshmanDISC2018}, stating that if $|M| = i$, then there is an $M$-augmenting path of length less than $2 \lceil s^*/(s^* - i)\rceil$. This corollary in turn follows from an observation about the existence of short augmenting paths that drives the well-known Hopcroft-Karp algorithm \cite{HopcroftKarp73} for maximum matching in the sequential setting.

We now describe a typical phase of the algorithm.
Every free node $f$ initiates an \textit{alternating path broadcast} by sending $\text{ID}_f$ to all neighbors. These free nodes will not send any more broadcast messages during this phase of the algorithm. All other nodes propagate the first broadcast they receive along alternating paths; in odd rounds the broadcast is propagated along edges that are not in the matching, and in even rounds it is propagated along matching edges. If nodes receive multiple broadcasts at the same tie, they break ties using the IDs of the source of the broadcast. 
Two neighboring nodes $u$ and $v$ find an augmenting path in a round $r$ if in this round, these two nodes propagate their broadcast to each other along the edge $(u, v)$. Note that if this event occurs, then it must be the case that the sources of the broadcasts that have reached $u$ and $v$ are distinct. This implies that an $M$-augmenting path has been detected by nodes $u$ and $v$. In the rest of the phase, this information (about the detection of an $M$-augmenting path) is propagated back up the tree until it reaches the two $M$-free nodes that are sources of the broadcasts. The authors of \cite{AhmadiKuhnOshmanDISC2018} show a simple way of labeling each $M$-augmenting path that is discovered and if a node is informed about multiple $M$-augmenting paths during backward propagation, it only sends up information about the path with lexicographically smaller label. This ensures that even during the backward propagation, each node needs to send at most one broadcast. Finally, 
each $M$-free node will zero or more $M$-augmenting path labels for
which it is an endpoint. Each $M$-free node picks an incident $M$-augmenting path with the smallest label and does forward propagation along this path to confirm with the other end point that this path is available for augmenting.

\bipartiteMaxM*
\begin{proof}
The algorithm described above has round complexity $\tilde{O}(n)$ and broadcast complexity $O(n^2)$. 
The latter simply follows from the fact that the broadcast complexity of each phase is $O(n)$ (in fact $O(1)$ broadcasts per node per phase) and there are $n$ phases. 
For the maximum matching problem, the input size $In = O(m + n)$ and the output size is $O(n)$. Thus, by using Theorem \ref{thm:messageEfficientSimulation}, we see that this algorithm can be simulated in $\tilde{O}(n^2)$ rounds, with message complexity $\tilde{O}(n^2)$, in the \congest{} model. 
\end{proof}

\subsection{Neighborhood Covers}
\label{sec:neighborhoodCovers}

For positive integers $k$ and $W$, a collection of trees $\mathcal{C}$ is called a \emph{$(k,W)$-sparse neighborhood cover}, if (1) every tree in $\mathcal{C}$ has depth at most $O(W \cdot k)$, (2) each vertex $v$ appears in at most $\tilde O(k \cdot n^{1/k})$ different trees, and (3) there exists some tree in $\mathcal{C}$ that contains the entire $W$-neighborhood of $v$.
Neighborhood covers were first introduced in~\cite{awerbuch1993near} and the state-of-the-art algorithm for the $\congest$ model was given by Elkin in \cite{elkin2006faster}, which takes $O(k^2 \cdot n^{1/k} \cdot W \cdot \log n)$ rounds, succeeds with high probability, and has a message complexity of $O(m \cdot k \cdot n^{1/k}\cdot \log n)$. Combined with our message-efficient simulation, we obtain the following result: 

\neighborhoodCover*
\begin{proof}
The algorithm given by Elkin in \cite{elkin2006faster} proceeds in a sequence of phases, where each phase consists of a randomly chosen subset of the currently ``uncovered'' nodes initiating a BFS-exploration of depth at most $O(k \cdot W)$. These BFS-explorations can be implemented using broadcast communication. Moreover, Lemma A.5 of \cite{elkin2006faster} shows that each vertex is visited by at most $O(n^{1/k}\log n)$ BFS-explorations in a single phase with high probability. As a result, each phase has a broadcast complexity of $O(k \cdot n^{1+1/k} \cdot W\cdot \log n)$.
Since there are $k$ phases overall and the algorithm's inputs and outputs are of size $O(m) = O(n^2)$, Corollary~\ref{cor:messageEfficientSimulation} implies that the algorithm can be simulated in $\tilde{O}(n^2)$ rounds with $\tilde{O}(n^2 + k^2 n^{1+1/k} W) = \tilde{O}(n^2)$ messages in the \congest{} model.
\end{proof}

\end{document}